\newfont{\ffont}{cmr10}
\newcommand{\cP}{\mathcal{P}}
\newcommand{\cF}{\mathcal{F}}
\newcommand{\cS}{\mathcal{S}}
\newcommand{\id}{\mathrm{id}}
\newcommand{\al}{\alpha}
\newcommand{\ga}{\gamma}
\newcommand{\ka}{\kappa}
\newcommand{\si}{\sigma}
\newcommand{\Del}{\Delta}
\newcommand{\R}{\mathbb{R}}
\newcommand{\C}{\mathbb{C}}
\newcommand{\N}{\mathbb{N}}
\newcommand{\Z}{\mathbb{Z}}
\newcommand{\E}{\mathbb{E}}
\newcommand{\bP}{\mathbb{P}}
\newcommand{\eps}{\varepsilon}
\newcommand{\ti}{\times}
\newcommand{\argmin}{\mathrm{argmin}}
\newcommand{\Id}{\mathrm{Id}}
\newcommand{\mixedell}[2]{\ell_#1^b(\ell_#2^d)}
\newcommand{\T}{\ensuremath{\mathbb{T}^d}}
\newcommand{\card}{\mathrm{card}}
\newcommand{\Rmnum}[1]{\expandafter\@slowromancap\romannumeral #1@}
\newcommand{\specificthanks}[1]{\@fnsymbol{#1}}
\newcommand{\xspace}{\hbox{\kern-2.5pt}}
\begin{document}

\newtheorem{theorem}{Theorem}[section]
\newtheorem{lemma}[theorem]{Lemma}
\newtheorem{corollary}[theorem]{Corollary}
\newtheorem{proposition}[theorem]{Proposition}

\newcommand{\HR}[1]{{\color{Magenta}{#1}}}
\newcommand{\todo}[1]{\marginpar{\textcolor{red}{TODO}}\textcolor{red}{#1}}

\theoremstyle{definition}
\newtheorem{definition}[theorem]{Definition}
\newtheorem{xca}[theorem]{Exercise}
\newtheorem{design-crit}[theorem]{Design Criterion}
\newtheorem{remark}[theorem]{Remark}

\theoremstyle{remark}
\newtheorem{example}[theorem]{Example}

\title[Gelfand numbers, structured sparsity, Besov space embeddings]{Gelfand numbers related to structured sparsity and Besov space embeddings with small mixed smoothness}

\author{Sjoerd Dirksen}
\address{RWTH Aachen University, Lehrstuhl C f{\"u}r Mathematik (Analysis), Pontdriesch 10, 52062 Aachen, Germany}
\email{dirksen@mathc.rwth-aachen.de}

\author{Tino Ullrich}
\address{University of Bonn, Hausdorff Center for Mathematics, Endenicher Allee 62, 53115 Bonn, Germany} 
\email{tino.ullrich@hcm.uni-bonn.de}


\keywords{Gelfand numbers, $\ell_p(\ell_q)$-spaces, compressed sensing, block sparsity, sparsity-in-levels, Besov spaces with small mixed smoothness}

\begin{abstract}
We consider the problem of determining the asymptotic order of the Gelfand numbers of mixed-(quasi-)norm 
embeddings $\ell^b_p(\ell^d_q) \hookrightarrow \ell^b_r(\ell^d_u)$ given that $p \leq r$ and $q \leq u$, with emphasis on cases with $p\leq 1$ and/or $q\leq 1$. These cases turn out to be related to structured sparsity. We obtain sharp bounds in a number of interesting parameter constellations. 
Our new matching bounds for the Gelfand numbers of the embeddings of $\ell_1^b(\ell_2^d)$ and $\ell_2^b(\ell_1^d)$ into $\ell_2^b(\ell_2^d)$ imply optimality assertions for the recovery of 
block-sparse and sparse-in-levels vectors, respectively. In addition, we apply our sharp estimates 
for $\ell^b_p(\ell^d_q)$-spaces to obtain new two-sided estimates for the Gelfand numbers of
multivariate Besov space embeddings in regimes of small mixed smoothness. It turns out that in some particular cases these estimates 
show the same asymptotic behavior as in the univariate situation. In the remaining cases they differ at most by a $\log\log$ factor
from the univariate bound.
\end{abstract}

\maketitle

\section{Introduction}

\noindent Gelfand numbers are fundamental concepts in approximation theory and Banach space geometry. 
From a geometric point of view, the Gelfand numbers of an embedding between two (quasi-)normed spaces $X$ and $Y$, 
given by
$$c_m(\Id:X\to Y) = \inf\Big\{\sup_{x\in B_X\cap M} \|x\|_Y~:~M \hookrightarrow X \text{ closed subspace with } \text{codim}(M)<m\Big\},$$
quantify the compactness of the embedding $X\hookrightarrow Y$. From the perspective of approximation theory and signal 
processing, Gelfand numbers of embeddings naturally appear when considering the problem of the optimal recovery of an element $f \in X$ from few arbitrary \emph{linear} samples, where the recovery error is measured in $Y$. For instance, $f$ may be a function in a certain function space and the linear samples may consist of Fourier or wavelet coefficients of $f$ or simple function values. 
The $m+1$-th Gelfand number 
$c_{m+1}(\Id:X\to Y)$ is under mild assumptions equivalent to the \emph{$m$-th minimal worst-case recovery error} 
\begin{equation} 
\label{minwce}
 E_m(X,Y) := \inf_{S_m} \sup_{\|f\|_X \leq 1} \|f - S_m(f)\|_Y,
\end{equation}
where the infimum is taken over all maps of the form $S_m = \varphi_m \circ N_m$. Here 
$N_m: X \to \R^m$ is a linear and continuous measurement map and $\varphi_m: \R^m \to Y$ is an arbitrary (not necessarily linear) recovery map. 
More details on this connection can be found in Proposition \ref{gelfand_wce} and the discussion in Remark 
\ref{rem:Gelfand_width} below.\par
The aim of this paper is to understand the behavior of the Gelfand numbers of embeddings between $\ell_p^b(\ell_q^d)$-spaces, which are equipped with the mixed (quasi-)norm
\begin{equation}\label{mixednorm}
\|x\|_{\ell_p^b(\ell_q^d)} := \Big(\sum_{i=1}^b\Big(\sum_{j=1}^d |x_{ij}|^q\Big)^{p/q}\Big)^{1/p}.
\end{equation}
We put emphasis on cases where either $0<p\leq 1$ or $0<q\leq 1$ or both $0<p,q\leq 1$, which in particular 
means that we are confronted with parameter constellations where $\|\cdot\|_{\ell_p^b(\ell_q^d)}$ is only a quasi-norm 
and the corresponding unit ball is \emph{non-convex}. The $\ell_p^b(\ell_2^d)$- and $\ell_2^b(\ell_p^d)$-balls with $0<p \leq 1$ appear naturally in signal 
processing as models for (approximately) \emph{block-sparse} and \emph{sparse-in-levels} signals, respectively, see Subsection \ref{sec:sparsity} 
for further details. The study of Gelfand numbers of the embeddings of $\ell_p^b(\ell_2^d)$ and $\ell_2^b(\ell_p^d)$ into $\ell_2^{bd}$, which we conduct in this paper, yields general perfomance bounds for the optimal recovery of signals with structured sparsity, 
see Subsection \ref{SigProc}. On the other hand, $\ell_p^b(\ell_q^d)$-balls appear naturally in wavelet characterizations of multivariate Besov spaces with \emph{dominating mixed smoothness}. The study of embeddings between quasi-normed $\ell_p^b(\ell_q^b)$-spaces in this paper leads to new asymptotic bounds for Gelfand numbers of embeddings between the aforementioned Besov spaces in regimes of \emph{small smoothness}, see Subsection \ref{subsec:intro_Besov}. Interestingly, in some particular cases these estimates show the same asymptotic behavior as in the univariate situation. In the remaining cases they differ at most by a $\log\log$ factor from the univariate bound.

\subsection{Gelfand numbers and sparse recovery}
\label{SigProc}

We obtain matching bounds for the Gelfand numbers
$$c_m(\id:\ell_p^b(\ell_q^d)\to \ell_r^b(\ell_u^d))$$
for a number of interesting parameter ranges in this paper. The results in full detail can be found in Section \ref{sec:matchGel}. At this point, we wish to discuss only the most illustrative cases and highlight some interesting effects. In particular, we prove for $1 \leq m \leq bd$ and $0<p\leq 1$ 
\begin{equation}\label{f0}
 c_m(\id:\mixedell{p}{2} \to \ell_2^b(\ell_2^d)) \simeq_p \min\left\lbrace 1, \frac{\log(eb/m)+d}{m}\right\rbrace^{1/p-1/2}.
\end{equation}
For $1\leq m\leq bd$ and $0<q\leq 1$ we show that
\[ 
 c_m(\id:\mixedell{2}{q} \to \ell_2^b(\ell_2^d)) \simeq_q c_{\lfloor m/b \rfloor}(\id:\ell_q^d\to \ell_2^d) \simeq_q \min\left\lbrace1,\frac{b \log(ebd/m)}{m}\right\rbrace^{1/q-1/2}.
\] 
Comparing these two bounds, we clearly see that the roles of the outer and inner $\ell_p$-space are asymmetric. Taking $p=1$ and $q=1$, we find the following result as a special case.
\begin{theorem}
\label{thm:GelSimple}
Let $b,d\in \N$ and $m\leq bd$. Then,
\begin{align}
c_m(\id: \ell_1^b(\ell_2^d) \to \ell_2^b(\ell_2^d)) & \simeq \min\Big\{1,\frac{\log(eb/m)+d}{m}\Big\}^{1/2} \label{eqn:l1l2}\,, \\
c_m(\id:\ell_2^b(\ell_1^d) \to \ell_2^b(\ell_2^d)) & \simeq \min\Big\{1,\frac{b\log(ebd/m)}{m}\Big\}^{1/2}. \label{eqn:l2l1} 
\end{align}
\end{theorem}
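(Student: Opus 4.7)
The plan is to read off Theorem \ref{thm:GelSimple} as two specialisations of the general bounds displayed immediately above it. Indeed, \eqref{eqn:l1l2} is the case $p=1$ of \eqref{f0}, and \eqref{eqn:l2l1} is the case $q=1$ of the formula for $c_m(\id:\ell_2^b(\ell_q^d)\to \ell_2^{bd})$ that precedes it; substituting those exponents, the factors $(1/p-1/2)$ and $(1/q-1/2)$ both collapse to $1/2$ and produce exactly the stated square roots. Hence the substance of the proof is to establish the two more general bounds, which I would tackle separately by matching upper and lower estimates.

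For the upper bound, my approach is to take as the measurement map $N_m:\R^{bd}\to\R^m$ a suitably scaled Gaussian matrix and as the recovery map the mixed-norm minimiser $\varphi_m(y)=\argmin\{\|x\|_{\ell_p^b(\ell_q^d)}:N_m x=y\}$. The key ingredient is a block or level null-space property (NSP) for the Gaussian ensemble. In the block case I would combine Gordon's escape-through-a-mesh inequality with a net for the unit sphere of $\ell_2^d$ in each block to show that an $m\times bd$ Gaussian matrix satisfies a block-NSP of order $s$ whenever $m\gtrsim s(\log(eb/s)+d)$, where the extra $d$ term is precisely the cost of covering all $\ell_2^d$-unit spheres. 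An analogous sparse-in-levels NSP argument handles the second bound. Converting the recovery guarantees into Gelfand number upper bounds then proceeds via the quasi-Banach link between minimal worst-case recovery error and Gelfand numbers recalled in Proposition \ref{gelfand_wce}, after optimising the sparsity level $s$ against $m$.

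For the lower bounds I would use factorisation arguments combined with the scalar Kashin--Garnaev--Gluskin estimate. For \eqref{eqn:l1l2}, one extracts the $d/m$ term by restricting to vectors supported on a single $\ell_2^d$-block (which effectively reduces to $c_m(\id:\ell_2^d\to\ell_2^d)$ in the regime $m\leq d$) and the $\log(eb/m)/m$ term by restricting to vectors that are constant in magnitude within each block (which reduces to $c_m(\id:\ell_1^b\to\ell_2^b)$). Taking the maximum of the two contributions, which agrees with the stated sum up to absolute constants, yields the lower bound. For \eqref{eqn:l2l1}, embedding $\ell_1^{bd}$ isometrically into $\ell_2^b(\ell_1^d)$ up to a factor of $\sqrt{b}$ and applying the scalar Kashin--Garnaev--Gluskin bound to $\ell_1^{bd}\to\ell_2^{bd}$ transfers directly to the target bound after tracking that normalisation constant.

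The main obstacle is obtaining the sharp logarithmic factor $\log(eb/m)$ rather than the coarser $\log(b)$. Elementary volume or packing arguments only yield the latter; the improvement requires the full strength of the Kashin--Garnaev--Gluskin machinery and, in our mixed-norm setting, careful bookkeeping of how the block-sparsity and the within-block dimension $d$ combine. The asymmetry between the roles of the inner and outer indices, clearly visible when comparing \eqref{f0} to the formula above it, is precisely the reason the block and sparse-in-levels cases behave differently and must be treated by separate arguments rather than reduced to a single master statement.
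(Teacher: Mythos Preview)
Your reduction of Theorem~\ref{thm:GelSimple} to the two displayed general bounds is exactly how the paper proceeds, and your upper-bound strategy via a Gaussian matrix and Gordon's escape-through-a-mesh is essentially the paper's approach (Proposition~\ref{prop:GGwidth} together with the Gaussian-width estimate in Lemma~\ref{lem:CHsparseBlock}). The lower-bound arguments, however, contain genuine gaps.

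For \eqref{eqn:l1l2}: restricting to a single block gives $c_m(\id:\ell_2^d\to\ell_2^d)$, which equals $1$ for $m\le d$ and $0$ for $m>d$; it does \emph{not} produce a $(d/m)^{1/2}$ contribution. Combined with the $\ell_1^b\to\ell_2^b$ reduction you obtain only $\max\{\mathbf{1}_{m\le d},\,(\log(eb/m)/m)^{1/2}\}$, which falls far short of $((\log(eb/m)+d)/m)^{1/2}$ in the regime $m>d$ with $d\gg\log(eb/m)$: for instance $b=2d$, $m=2d$ gives target $\simeq 1$ but your bound $\simeq d^{-1/2}$. This particular gap is repairable by factoring through $\lceil m/d\rceil$ blocks rather than one (giving $\gtrsim (d/m)^{1/2}$ via $c_m(\id:\ell_2^{sd}\to\ell_2^{sd})=1$ for $sd\ge m$, at the cost of $\|U\|=\sqrt{s}$), but that is not what you wrote.

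For \eqref{eqn:l2l1} the gap is more serious. The factorisation through $\ell_1^{bd}$ gives
\[
c_m(\id:\ell_1^{bd}\to\ell_2^{bd})\;\le\;c_m(\id:\ell_2^b(\ell_1^d)\to\ell_2^{bd})\;\le\;\sqrt{b}\,c_m(\id:\ell_1^{bd}\to\ell_2^{bd}),
\]
since $\|\id:\ell_1^{bd}\to\ell_2^b(\ell_1^d)\|\le 1$ and $\|\id:\ell_2^b(\ell_1^d)\to\ell_1^{bd}\|=\sqrt{b}$. The $\sqrt{b}$ normalisation you invoke therefore yields the \emph{upper} bound (which is indeed how the paper obtains it), but as a lower bound you only get $(\log(ebd/m)/m)^{1/2}$, off by the crucial factor $\sqrt{b}$. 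No simple factorisation through a scalar $\ell_p^n$ embedding recovers this factor. The paper's lower bound instead proceeds by contradiction via Proposition~\ref{prop:GelEstPack}, using the packing of $(2s,2t)$-sparse vectors constructed in Proposition~\ref{prop:sparsePacking} with $s\simeq b$ and $t$ chosen in terms of $m$; this combinatorial input is what produces the extra $b$ in the numerator and is not accessible through your factorisation route.
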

Thanks to the equivalence between the Gelfand numbers in \eqref{eqn:l1l2} and \eqref{eqn:l2l1} and the minimal 
worst-case recovery errors $E_m(\ell_1^b(\ell_2^d), \ell_2^b(\ell_2^d))$ and $E_m(\ell_2^b(\ell_1^d), \ell_2^b(\ell_2^d))$, 
respectively, these bounds have interesting implications for the recovery of structured sparse signals. 
To discuss this, we fix some terminology. Consider $x\in \R^{bd}$ and view it as a $bd$-dimensional block vector $(x_i)_{i=1}^b$, 
with $x_i \in \R^d$ for all $1\leq i\leq b$. As usual, $x$ is called $k$-sparse if it has at most $k$ nonzero entries. 
We say that $x$ is \emph{$s$-outer sparse} if at most $s$ blocks of $x$ are non-zero. We call $x$ \emph{$t$-inner sparse} if each block of $x$ has at most $t$ non-zero entries. These two forms of structured sparsity exactly correspond to the notion of block sparsity and a special case of sparsity-in-levels, respectively, in the literature. We refer to Section~\ref{sec:sparsity} for a detailed discussion. Corresponding to outer and inner sparsity, we define two different best $s$-term approximation errors:
\begin{equation}
\label{eqn:beststermOI}
\begin{split}
\si_s^{\operatorname{outer}}(x)_{\ell_p^b(\ell_q^d)} & = \inf\{\|x-z\|_{\ell_p^b(\ell_q^d)} \ : \ z \ \operatorname{is} \ s\operatorname{-outer \ sparse}\}, \\
\si_t^{\operatorname{inner}}(x)_{\ell_p^b(\ell_q^d)} & = \inf\{\|x-z\|_{\ell_p^b(\ell_q^d)} \ : \ z \ \operatorname{is} \ t\operatorname{-inner \ sparse}\}.
\end{split}
\end{equation} 
\begin{corollary}
\label{cor:stabRecIntro}
Let $A\in\R^{m\ti bd}$ be a measurement matrix and $\Del:\R^m \rightarrow \R^{bd}$ be any reconstruction map. Suppose that for all $x\in \R^{bd}$,
\begin{equation}
\label{eqn:stabBlock}
\|x-\Del(Ax)\|_{\ell_2^b(\ell_2^d)} \leq C\frac{\si_s^{\operatorname{outer}}(x)_{\ell_1^b(\ell_2^d)}}{\sqrt{s}}.
\end{equation}
Then, there exist constants $c_1,c_2>0$ depending only on $C$ so that if $s>c_2$, then
$$m\geq c_1(s\log(eb/s)+sd).$$
On the other hand, if for all $x\in \R^{bd}$,
\begin{equation}
\label{eqn:stabInner}
\|x-\Del(Ax)\|_{\ell_2^b(\ell_2^d)} \leq D\frac{\si_t^{\operatorname{inner}}(x)_{\ell_2^b(\ell_1^d)}}{\sqrt{t}},
\end{equation}
then there exist constants $c_1,c_2>0$ depending only on $D$ such that
$$m\geq c_1 bt\log(ed/t),$$
provided that $t>c_2$.
\end{corollary}
To discuss the consequences of this result for structured sparse recovery, let us recall the following known results. We refer to Section~\ref{sec:sparsity} for a more extensive discussion of the literature. Let $B$ be an $m\ti bd$ Gaussian matrix, i.e., a matrix with i.i.d.\ standard Gaussian entries and set $A=(1/\sqrt{m})B$. For any $0<p,q\leq \infty$, let 
$$\Del_{\ell_p(\ell_q)}(Ax) = \argmin_{z\in \R^{bd} \ : \ Az=Ax}\|z\|_{\ell_p^b(\ell_q^d)}$$ 
be the reconstruction via $\ell_p(\ell_q)$-minimization. It is known that if the number of measurements $m$ satisfies $m\gtrsim s\log(eb/s)+sd$, 
then with high probability the pair $(A,\Del_{\ell_1(\ell_2)})$ satisfies the stable reconstruction property in (\ref{eqn:stabBlock}) (see \cite{ElM09} and also the discussion in \cite[Section 2.1]{ADR14}). Moreover, this number of measurements cannot be reduced: \cite[Theorem 2.4]{ADR14} shows that if for a given $A\in \R^{m\ti bd}$ one can recover every 
$s$-outer sparse vector $x$ exactly from $y=Ax$ via $\Del_{\ell_1(\ell_2)}$ then necessarily $m\gtrsim s\log(eb/s)+sd$. However, one could still hope to further reduce the required number of measurements for outer sparse recovery by picking a better pair $(A,\Del)$. Corollary~\ref{cor:stabRecIntro} shows that this is not possible if we want the same stable reconstruction guarantee as in (\ref{eqn:stabBlock}). In this sense, the pair $(A,\Del_{\ell_1(\ell_2)})$ with $A=(1/\sqrt{m})B$ is \emph{optimal}.\par
It is worthwhile to note that the pair $(A,\Del_{\ell_1(\ell_1)})$, $A=(1/\sqrt{m})B$, with high probability satisfies the stable reconstruction guarantee (\cite[Theorem 1.2 and Section 1.3]{CRT06}, see also \cite{CaT06,do06_2} and \cite[Theorem 9.13]{FoR13})
\begin{equation}
\label{eqn:stab11}
\|x-\Del_{\ell_1(\ell_1)}(Ax)\|_{\ell_2^b(\ell_2^d)} \leq D\frac{\si_k(x)_{\ell_1^b(\ell_1^d)}}{\sqrt{k}},
\end{equation}
whenever $m\gtrsim k\log(ebd/k)$, where 
$$\si_k(x)_{\ell_1^b(\ell_1^d)} = \inf\{\|x-z\|_{\ell_1^b(\ell_1^d)} \ : \ z \ \operatorname{is} \ k\operatorname{-sparse}\}.$$
As a consequence (by taking $k=sd$), \eqref{eqn:stabBlock} is satisfied under the condition $m\gtrsim sd\log(eb/s)$. In conclusion, if we are only interested in stability of our reconstruction method with respect to outer sparsity (instead of sparsity) defects, then we can achieve this with fewer measurements (in a best worst-case scenario) by using $\Del_{\ell_1(\ell_2)}$ as a reconstruction map instead of $\Del_{\ell_1(\ell_1)}$. In the case of reconstruction of inner-sparse vectors, one might expect that a similar phenomenon occurs: if we are only interested in stability with respect to inner sparsity (instead of sparsity) defects, then we might be able to reconstruct from fewer measurements by using a reconstruction map that is especially suited for inner sparse recovery (a potential candidate would be $\Del_{\ell_2(\ell_1)}$, in analogy with the outer sparse case). Somewhat surprisingly, Corollary~\ref{cor:stabRecIntro} shows that this is not the case. By taking $k=bt$ in \eqref{eqn:stab11} and using that every $t$-inner sparse vector is $bt$-sparse (but not vice-versa), one deduces that the pair $(A,\Del_{\ell_1(\ell_1)})$, $A=(1/\sqrt{m})B$, with high probability satisfies \eqref{eqn:stabInner} if $m\gtrsim bt\log(ed/t)$. This is optimal by Corollary~\ref{cor:stabRecIntro}.    

\subsection{Besov space embeddings with small mixed smoothness}\label{subsec:intro_Besov}
As a second application of our work, we are interested in proving upper and lower bounds for the Gelfand numbers of compact embeddings between Besov spaces
\begin{equation}\label{emb2_0}
    \text{Id}:S^{r_0}_{p_0,q_0}B([0,1]^d) \to S^{r_1}_{p_1,q_1}B([0,1]^d)\,,
\end{equation}
where $0<p_0 \leq p_1 \leq \infty$, $0<q_0, q_1 \leq \infty$ and $r_0-r_1>1/p_0-1/p_1$.
By using a well-known hyperbolic wavelet discretization machinery, see for instance \cite{Vyb06},
we can transfer the problem to bounding the Gelfand numbers of 
$$\text{id}:s^{r_0}_{p_0,q_0}b([0,1]^d) \to s^{r_1}_{p_1,q_1}b([0,1]^d),$$ 
where (see \eqref{sequBOmega} below for details)
\begin{equation*}
    s^r_{p,q}b := \Big\{\lambda=\{\lambda_{\bar{j},\bar{k}}\}_{\bar{j}, \bar{k}} \subset \C~:~
    \|\lambda\|_{s^r_{p,q}b}:=\Big[\sum\limits_{\bar{j}\in \N_0^d} 2^{(r-1/p)q\|\bar{j}\|_1}
    \Big(\sum\limits_{\bar{k}\in A_{\bar{j}}}|\lambda_{\bar{j},\bar{k}}|^p\Big)^{q/p}\Big]^{1/q}<\infty\Big\}
\end{equation*}
and $A_{\bar{j}} = ([0,2^{j_1}-1]\times \cdots \times [0,2^{j_d}-1]) \cap \Z^d$\,. We immediately observe that $s^r_{p,q}b$ has an $\ell_q(\ell_p)$-structure. After a suitable block decomposition we use the subadditivity of Gelfand numbers (see (S2) below) to reduce the problem to finite-dimensional blocks. This strategy has for instance been pursued in the recent paper \cite{Kien16}, where the case $p_0=q_0>1$ has been treated. In our work, we are interested in the case $\min\{p_0,q_0\} \leq 1$, $\max\{p_1,q_1\} \leq 2$, which is connected to Theorem~\ref{thm:GelMain} below. As it turns out, there is a strong qualitative difference in the behavior of the Gelfand numbers depending on whether one is in the ``large mixed smoothness'' ($r_0-r_1>1/q_0-1/q_1$) or ``small mixed smoothness'' ($r_0-r_1\leq 1/q_0-1/q_1$) regime. In the large smoothness regime, sharp bounds for the Gelfand numbers can be readily derived: using almost literally the arguments in \cite[Thms.\ 3.18, 3.19]{Vyb06} together with the well-known sharp estimates for the Gelfand numbers of the (non-mixed) embeddings $\id:\ell_u^n\to \ell^n_v$ where $0<u\leq 1$ and $0<u<v\leq 2$ (see \eqref{match_gelf} and also Theorem \ref{thm:GelMain}, \eqref{eqn:Foucartetal} below), one can show
\begin{equation}\label{gelflarge0}
    c_m(\id:S_{p_0,q_0}^{r_0}B([0,1]^d) \to S_{p_1,q_1}^{r_1}B([0,1]^d)) \simeq m^{-(r_0-r_1)}(\log m)^{(d-1)(r_0-r_1-1/q_0+1/q_1)}\quad,\quad m\geq 2\,.
\end{equation}
As we point out in Remark~\ref{large_s}, this argument breaks down in the regime of small smoothness. In this parameter regime we obtain sharp estimates in the following special case (see Theorem \ref{mainB} below) by applying our new sharp mixed norm result in Theorem \ref{thm:GelMain}, \eqref{eqn:GelMain1}. If $0<q_0 \leq 1$, $0< q_0<q_1\leq 2$, $r_1=0$ and $0<r_0=r<1/q_0-1/q_1$ then 
\begin{equation}\label{main1}
    c_m(\Id:S^r_{2,q_0}B([0,1]^d) \to S^0_{2,q_1}B([0,1]^d)) \simeq_{q_0,q_1,d,r} m^{-r}, \qquad m\in \N\,.
\end{equation}
In particular, choosing $q_1=2$ in \eqref{main1} we find as an important instance 
$$
c_m(\Id:S^r_{2,q_0}B([0,1]^d) \to L_2([0,1]^d)) \simeq_{q_0,d,r} m^{-r}\quad,\quad m\in \N,
$$
for $0<q_0 \leq 1$ and $0<r<1/q_0-1/2$. Let us emphasize that the smoothness parameter $r$ in \eqref{main1} is actually not 
restricted to small values as the term ``small smoothness'' suggests. Indeed, if $q_0$ is taken small, then $r$ is allowed to be large.\par 
Interestingly, the decay behavior in \eqref{main1} is typical for the univariate situation ($d=1$). In the $d$-variate setting we usually encounter 
asymptotic orders such as $m^{-r}(\log m)^{(d-1)\eta}$, where $\eta = \eta(p_0,q_0,p_1,q_1,r_0,r_1)$, see for 
instance \eqref{gelflarge0} above. That is, the dimension $d$ of the underlying Euclidean space 
enters the asymptotic bound exponentially as $m$ grows. Surprisingly,
this is no longer the case in the regime of small smoothness as \eqref{main1} indicates. Note, however, that the dimension $d$ is still hidden in 
the constants. This particular influence of the secondary parameters $q_0,q_1$ of the Besov spaces has been so far only observed 
for the best $m$-term widths 
\cite[Thm.\ 5.13, Cor.\ 6.1]{HaSi12} and recently also for the entropy numbers \citep{MaUl17}, 
\cite[Open Problem 6.4]{DTU16}. Interestingly, an asymptotic bound as given in \eqref{main1} can not hold for 
approximation numbers (linear widths) as the lower bounds in \cite[Theorem 4.47]{DTU16} indicate. This is of particular relevance from a complexity theoretic point of view, as it shows that allowing non-linear reconstruction maps $\varphi_m$ (rather than only linear ones) in \eqref{minwce} leads to a substantially faster decay of the minimal worst-case reconstruction error.\par
Note that there are several places in the literature where the term ``small smoothness'' occurs, beginning with \cite{Ka81}. Several authors 
discovered interesting phenomena of asymptotic characteristics in connection with small (mixed) Sobolev (or Besov) 
smoothness, see for instance the recent paper \cite{Te17} (and the discussion in Section 5 there) 
which deals with sparse (best $m$-term) approximation. However, the ``small smoothness effects'' discussed in 
\cite[Sect.\ 5]{Te17}, \cite[7.5, 8.4, 8.5]{DTU16}, or \cite[Thm.\ 1.2]{UlUl16} are of a different nature than 
the one discussed in the present paper.\par 
In the small smoothness regime we can only prove near-matching bounds if $p_0<p_1$ in \eqref{emb2_0}. For $0<p_0\leq 1$, $0<q_0\leq p_0\leq p_1 \leq q_1\leq 2$ and $1/p_0-1/p_1<r<1/q_0-1/q_1$ we prove that for constants $c,C>0$ depending only on $p_0,p_1,q_0,q_1,d$ and $r$,
\begin{equation}\label{main2}
 c m^{-r} \leq c_m(\Id:S^r_{p_0,q_0}B([0,1]^d) \to S^0_{p_1,q_1}B([0,1]^d)) 
 \leq C m^{-r}(\log\log_2 m)^{1/q_0-1/q_1}\,.
\end{equation}
By taking $p_1=q_1$ and $q_0<p_0$ we obtain \eqref{main2} also for target spaces $Y = L_{p_1}$ by embedding, see Corollaries \ref{L2}, \ref{endpoint}, \ref{nonsharp} 
and Remark \ref{SickelNguyen} below. We strongly conjecture that the $\log\log$-term in \eqref{main2} is not necessary. As we discuss in Remark~\ref{rem:diffInner}, it could be removed by proving sharp bounds in the spirit of Theorem \ref{thm:GelMain}, \eqref{eqn:GelMain1}, in the case of differing inner spaces. This seems to be rather involved for Gelfand numbers. However, it is at least possible to show that for the corresponding dyadic entropy numbers the lower bound is sharp \cite{MaUl17}. 

\subsection{Related work on Gelfand numbers}

As we have pointed out in Subsection~\ref{SigProc}, Gelfand numbers play an important role in \emph{compressive sensing} 
\cite{do06,FPR10,FoR13}, which deals with the recovery of high-dimensional, (approximately) sparse signals from 
linear measurements. In this theory, approximately sparse (also called \textit{compressible}) signals are modelled 
by vectors in non-convex $\ell_p^d$-balls with $0<p\leq 1$. Gelfand numbers of such non-convex balls measured in Euclidean space are of great importance 
since they give general performance bounds for sparse recovery methods. Beyond compressive sensing, compressibility assumptions also made their way into 
high-dimensional function recovery, concretely, the recovery of so-called ridge functions \cite{CoDaDeKePi2012,FoScVy2012,MaUlVy2015,KueMaUl2016}. 

Matching bounds for the Gelfand numbers for classical $\ell_p$-spaces with $p\geq 1$ (corresponding to $p=q$ and $r=u$ in our notation) are known for many parameter ranges, 
largely due to work in the late seventies and in the eighties. We refer to \cite[Section 4]{Vyb08} for a summary of these classical results and historical references. 
Let us only specifically mention the classical bound
\begin{equation}
c_m(\id:\ell_1^b(\ell_1^d) \to \ell_2^b(\ell_2^d)) \simeq \min\Big\{1,\frac{\log(ebd/m)}{m}\Big\}^{1/2}, \label{eqn:l1l1}
\end{equation}
which was obtained by Garnaev and Gluskin \cite{GaG84,Glu82} following important earlier advances of Kashin \cite{Kas77}. 
Note that one recovers this result from (\ref{eqn:l1l2}) by taking `$b=bd$', $d=1$. Some of the classical estimates were extended to the quasi-Banach range $p<1$, 
see \cite{FPR10,Vyb08}, where it was shown that for $0<p\leq 1$ and $p<q\leq 2$,
\begin{equation}\label{match_gelf}
c_m(\id:\ell_p^b(\ell_p^d) \to \ell_q^b(\ell_q^d)) \simeq_{p,q} \min\Big\{1,\frac{\log(ebd/m)}{m}\Big\}^{1/p-1/q}.
\end{equation}
The proof of this result exploits a connection between Gelfand widths for $\ell_p^b(\ell_p^d)$ and the recovery of sparse vectors. In a similar way, 
the proof of Theorem~\ref{thm:GelMain} exploits a connection with the recovery of vectors exhibiting structured sparsity.
The relation in \eqref{match_gelf} has been claimed earlier by Donoho in \cite{do06_2}. He used Carl's inequality for 
(quasi-)Banach spaces, which was not available at that time, to obtain the lower bound. This proof gap has been closed recently 
in \cite{HiKoVy16}\,.
\par
Let us now summarize some known results on Gelfand numbers for embeddings between $\ell_p^b(\ell^d_q)$-spaces. 
All works mentioned below (including Vasileva's recent paper \cite{Vas13}) concern parameter ranges that are complementary to our main result, Theorem~\ref{thm:GelMain}. In the nineties, Galeev \cite{Gal90,Gal96} and Izaak \cite{Iza94,Iza96} proved some (near-)matching bounds for Kolmogorov widths of mixed-norm $\ell_p(\ell_q)$-spaces. They applied their bounds to estimate the approximation numbers (or linear widths) of  embeddings of H\"{o}lder-Nikol'skii spaces of periodic functions with 
mixed smoothness $S^r_{p,\infty}B(\mathbb{T}^d)$ into $L_q$-spaces (see in particular the main theorem in \cite{Gal96} and \cite[Theorem 4]{Iza96}). 
By the well-known duality between Kolmogorov and Gelfand widths, i.e., $c_m(T:X \to Y)=d_m(T^\ast:Y^\ast \to X^\ast)$ 
for any Banach spaces $X,Y$, see \cite[11.7.7]{Pie80} and \cite[2.5.6]{Pie87}, their results can be phrased in terms of Gelfand 
widths as follows. If $1<q<\infty$ and $m<bd/2$, then by \cite[Thm.\ 2]{Gal90},
$$c_m(\id:\ell_q^b(\ell_2^d) \to \ell_1^b(\ell_{\infty}^d)) \simeq_q b^{1-\frac{1}{q}}.$$
Izaak \cite[Thm.\ 1]{Iza94} extended the result to $q=\infty$ and proved in case $m<bd/2$
$$\frac{b\sqrt{\log\log b}}{\log b} \lesssim c_m(\id:\ell_{\infty}^b(\ell_2^d) \to \ell_1^b(\ell_{\infty}^d)) \lesssim b.$$
The lower bound has been improved recently by Malykhin, Ryutin \cite{MaRu16} to the sharp bound
$$c_m(\id:\ell_{\infty}^b(\ell_2^d) \to \ell_1^b(\ell_{\infty}^d)) \simeq b.$$ 
This result can be used to prove the following sharp result
for Gelfand numbers of H\"older-Nikol'skij spaces $S^r_{p,\infty}B(\mathbb{T}^d)$, see \cite[pp. 136--140]{Gal96}, \cite{MaRu16}, \cite[Thm.\ 8.3]{ByUl16}. 
\begin{theorem}\label{satz:gelfandnikolskij} If $1<p<q<\infty$ and $r>1-1/q$, then
  	$$c_m(\Id:S^r_{p,\infty}B(\T) \to L_q(\T))\simeq_{p,q,r,d} \begin{cases}
  	\left(\frac{\log^{d-1} m}{m}\right)^{r-\frac{1}{2}+\frac{1}{q}}(\log m)^{\frac{d-1}{q}}&:\quad \frac{1}{p}+\frac{1}{q}<1,\;p\leq 2,\\
  	\left(\frac{\log^{d-1} m}{m}\right)^{r-\frac{1}{p}+\frac{1}{q}}(\log m)^{\frac{d-1}{q}}&:\quad 2\leq p<q.
  	\end{cases}$$
 \end{theorem}
\noindent The lower bound in Theorem \ref{satz:gelfandnikolskij} for the (larger) 
approximation numbers (linear widths) was proved recently by Malykhin and Ryutin. It closes a $\log\log$-gap in a result of 
Galeev \cite{Gal96} from 1996\,. 

Let us also mention the results from \cite[Thm.\ 3]{Iza96} and the remark following it. Let $m<bd/2$. First, 
$$c_m(\id:\ell_{\infty}^b(\ell_2^d) \to \ell_{\infty}^b(\ell_{\infty}^d)) \simeq 1;$$
Second, for $1\leq p\leq 2$,
$$c_m(\id:\ell_{\infty}^b(\ell_2^d) \to \ell_{p}^b(\ell_{p}^d))\simeq_p b^{\frac{1}{p}}d^{\frac{1}{p} - \frac{1}{2}}.$$
Third, if $2\leq p<\infty$ and in addition there exists a constant $\gamma>0$ such that $d\geq \gamma\log b$, then 
$$\frac{b^{\frac{1}{p}}}{(\log b)^{1-\frac{1}{p}}}\lesssim_{\gamma,p} c_m(\id:\ell_{\infty}^b(\ell_2^d) \to \ell_{p}^b(\ell_{p}^d)) \leq b^{\frac{1}{p}}.$$
Finally, we state a result of Vasileva from \cite{Vas13}. For $1\leq a\leq 2$ and $1\leq a\leq b\leq \infty$ we define
$$\lambda(a,b) = \min\Big\{\frac{\frac{1}{a}-\frac{1}{b}}{\frac{1}{a}-\frac{1}{2}},1\Big\}$$
if $a\neq 2$ and $\lambda(a,b)=1$ if $a=2$. Note that if $a<2$, then $\lambda(a,b)<1$ if and only if $b<2$.
\begin{theorem}[\cite{Vas13}]
Let $1<p\leq 2$,  $1<q\leq 2$, $p\leq r\leq \infty$, and $q\leq u\leq \infty$. Then there exists an $a=a(p,q)>0$ such that for all $m,b,d$ with $m\leq abd$ 
$$c_{m}(\id:\ell_p^b(\ell_q^d)\to\ell_r^b(\ell_u^d)) \simeq_{p,q,r,u} \Phi_0(m,b,d,p,q,r,u)$$
provided that one of the following conditions holds:
\begin{itemize}
\item[(i)] $u\geq 2$, $r\geq 2$, and
$$\Phi_0 = \min\{1,m^{-\frac{1}{2}} d^{1-\frac{1}{q}} b^{1-\frac{1}{p}}\}$$
\item[(ii)] $\lambda(q,u)\leq \lambda(p,r)$, $\lambda(q,u)<1$, and
\begin{align*}
\Phi_0 =\min\Big\{1, (m^{-\frac{1}{2}} d^{1-\frac{1}{q}} b^{1-\frac{1}{p}})^{\lambda(q,u)}, d^{\frac{1}{u}-\frac{1}{q}} (m^{-\frac{1}{2}} d^{\frac{1}{2}} b^{1-\frac{1}{p}})^{\lambda(p,r)}\Big\}
\end{align*}
\item[(iii)] $\lambda(q,u)\geq \lambda(p,r)$, $\lambda(p,r)<1$, and
\begin{align*}
\Phi_0 =\min\Big\{1, (m^{-\frac{1}{2}}d^{1-\frac{1}{q}} b^{1-\frac{1}{p}})^{\lambda(p,r)}, b^{\frac{1}{r}-\frac{1}{p}} (m^{-\frac{1}{2}} d^{1-\frac{1}{q}} b^{\frac{1}{2}})^{\lambda(q,u)}\Big\}.
\end{align*}
\end{itemize}
\end{theorem}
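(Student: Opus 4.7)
The plan is to reduce the function-space statement to the already-proved sequence-space statements via the wavelet discretization machinery, using only the $s$-number axioms (S1)--(S5), which hold for Gelfand numbers just as they do for dyadic entropy numbers.

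First I would set up the discretization. By picking wavelets that are sufficiently smooth and have sufficiently many vanishing moments (relative to the parameters $p_0,p_1,q_0,q_1,r$), the wavelet analysis map $W: f \mapsto \{\lambda_{\bar{j},\bar{k}}(f)\}$ is a topological isomorphism $S^{r}_{p,q}B(\mathbb{R}^d) \cong s^{r}_{p,q}b$, as recalled in \eqref{wave_iso}. Since $S^{r}_{p,q}B(\Omega)$ is defined by restriction from $\mathbb{R}^d$, one further needs a bounded extension operator $E: S^{r}_{p,q}B(\Omega) \to S^{r}_{p,q}B(\mathbb{R}^d)$ that is universal in the relevant parameter range; such an operator, built from the wavelet coefficients supported in indices $(\bar{j},\bar{k})$ with $\bar{k}\in A^{\Omega}_{\bar{j}}$, is constructed in \cite[Section 4.5]{Vyb06}. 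Combined with the canonical restriction $R$, this yields bounded operators $W_{\Omega}: S^{r}_{p,q}B(\Omega) \to s^{r}_{p,q}b(\Omega)$ and $W_{\Omega}^{-1}: s^{r}_{p,q}b(\Omega) \to S^{r}_{p,q}B(\Omega)$ with $W_{\Omega}^{-1} \circ W_{\Omega} = \Id$.

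Next I would use the multiplicativity and normalization properties (S3) and (S5). Fix the parameters as in Proposition \ref{res:appl_embedding_entropy0}, \ref{res:appl_embedding_entropy}, or \ref{res:appl_embedding_entropy2}. The function-space embedding factors as
\begin{equation*}
\Id_B \;=\; W^{-1}_{\Omega,1} \,\circ\, \id_{b} \,\circ\, W_{\Omega,0},
\end{equation*}
where $\id_b: s^{r}_{p_0,q_0}b(\Omega) \to s^{0}_{p_1,q_1}b(\Omega)$ is the sequence-space embedding, $W_{\Omega,0}$ is the wavelet analysis into the source sequence space, and $W_{\Omega,1}^{-1}$ is the wavelet synthesis from the target sequence space. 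By (S3),
\begin{equation*}
c_m(\Id_B) \;\leq\; \|W_{\Omega,1}^{-1}\|\cdot c_m(\id_b)\cdot \|W_{\Omega,0}\| \;\lesssim_{\theta}\; c_m(\id_b),
\end{equation*}
which transfers all upper bounds from Propositions \ref{res:appl_embedding_entropy0}--\ref{res:appl_embedding_entropy2}. In the reverse direction, the factorization
\begin{equation*}
\id_b \;=\; W_{\Omega,1} \,\circ\, \Id_B \,\circ\, W^{-1}_{\Omega,0}
\end{equation*}
combined with (S3) gives $c_m(\id_b) \lesssim_\theta c_m(\Id_B)$, so the matching lower bounds transfer as well.

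The main technical point is not the argument itself, which is a clean application of the $s$-number properties, but making sure that the extension/restriction pair actually exists and is bounded in each of the quasi-Banach regimes arising in Propositions \ref{res:appl_embedding_entropy0}--\ref{res:appl_embedding_entropy2}; in particular, in the small-mixed-smoothness range with $p_0,q_0\le 1$ one must verify that the chosen wavelet system has enough regularity/vanishing moments so that both the isomorphism on $\mathbb{R}^d$ and the universal extension on $\Omega$ are available. Once this is granted (as in \cite[Thm.\ 4.11]{Vyb06}, whose machinery is used verbatim here since only (S1)--(S5) enter), the transfer is mechanical, and the three stated sequence-space bounds \eqref{res1}, \eqref{res2}, and \eqref{res21} literally carry over to the function-space embedding $\Id:S^{r}_{p_0,q_0}B(\Omega)\to S^{0}_{p_1,q_1}B(\Omega)$.
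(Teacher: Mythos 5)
Your proposal does not prove the statement in question. The theorem you were asked to prove is the finite-dimensional result of Vasil'eva cited as \cite{Vas13}: two-sided asymptotic bounds for $c_m(\id:\ell_p^b(\ell_q^d)\to\ell_r^b(\ell_u^d))$ in the Banach range $1<p\leq 2$, $1<q\leq 2$, with the three explicit expressions for $\Phi_0$ in cases (i)--(iii). What you have written instead is a wavelet-discretization transfer argument for the Besov space embedding $\Id:S^{r}_{p_0,q_0}B(\Omega)\to S^{0}_{p_1,q_1}B(\Omega)$ --- essentially the proof of Theorem \ref{mainB} --- which is an entirely different statement. Your argument never touches the parameters $p,q,r,u$ of the theorem, never addresses the quantity $\lambda(a,b)$ or any of the three formulas for $\Phi_0$, and invokes Propositions \ref{res:appl_embedding_entropy0}--\ref{res:appl_embedding_entropy2}, which live in the quasi-Banach small-smoothness regime $p_0,q_0\leq 1$ rather than the range $1<p,q\leq 2$ of the statement.

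Beyond the mismatch of statements, nothing in your proposal could be repurposed to establish the claimed asymptotics: the Vasil'eva theorem is a purely finite-dimensional assertion about mixed-norm sequence spaces, and its proof (in \cite{Vas13}) requires genuinely different tools --- duality with Kolmogorov widths, discretization of the index set, and delicate interpolation-type arguments reflected in the piecewise definition of $\Phi_0$ via $\lambda(p,r)$ and $\lambda(q,u)$. Note also that the paper itself does not prove this theorem; it is quoted in Section \ref{sec:relwork} purely as background from the literature. If your task was to supply a proof, you would need to reproduce or reconstruct Vasil'eva's argument, not the Besov-space transfer machinery.
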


\subsection{Outline} After fixing notational conventions we introduce several basic concepts, such as Gelfand numbers and Gelfand widths and 
generalized notions of sparsity and compressibility in Section 2. In Section 3 we set up some machinery to prove upper 
bounds for Gelfand widths of bounded subsets in $(\R^n,\|\cdot\|_Y)$ via random subspaces based on Gordon's 
``escape through the mesh'' theorem. For the lower bounds we establish a connection to 
packing numbers of structured sparse vectors and give lower bounds for these packing numbers, see Sections~\ref{sec:lower_bound1} and \ref{sec:lower_bound2}. 
Section~\ref{sec:matchGel} establishes matching bounds for Gelfand numbers for embeddings between $\ell_p^b(\ell^d_q)$-spaces in several 
interesting parameter ranges. These results are applied in Section~\ref{sec:appl}, where we prove Corollary~\ref{cor:stabRecIntro} and establish new results for the Gelfand numbers of Besov space embeddings with small mixed smoothness.

\subsection{Notation} As usual $\N$ denotes the natural numbers, $\N_0:=\N\cup\{0\}$, $\Z$ denotes the integers, 
$\R$ the real numbers, and $\C$ the complex numbers. For
$a\in \R$ we denote $a_+ := \max\{a,0\}$. We write $\log$ for the natural logarithm. $\R^{m\times n}$ denotes the set of all $m\times n$-matrices with real entries and $\R^n$ denotes the Euclidean space. 
Vectors are usually denoted with $x,y\in \R^n$, sometimes we use $\bar{j},\bar{k} \in \N_0^d$ for multi-indices. For $0<p\leq \infty$ and $x\in \R^n$ we denote $\|x\|_p := (\sum_{i=1}^n
|x_i|^p)^{1/p}$ with the usual modification in the case $p=\infty$. 
If $X$ is a (quasi-)normed space, then $B_X$ denotes its unit ball and the (quasi-)norm
of an element $x$ in $X$ is denoted by $\|x\|_X$. For any $U\subset X$ we let $r_X(U)=\sup_{x\in U}\|x\|_X$ denote the radius of $U$. In case $X$ is a Banach space, $X^\ast$ denotes its dual. 
We will frequently use the quasi-norm constant, i.e., the smallest constant $\alpha_X$ satisfying
$$
    \|x+y\|_X \leq \alpha_X(\|x\|_X + \|y\|_X), \qquad \text{for all } x,y\in X.
$$
For a given $0<p\leq 1$ we say that $\|\cdot\|_X$ is a $p$-norm if 
$$
    \|x+y\|^p_X \leq \|x\|_X^p + \|y\|_X^p, \qquad \text{for all } x,y\in X.
$$
As is well known, any quasi-normed space can be equipped with an equivalent $p$-norm (for a certain $0<p\leq 1$), see 
\cite{Ao42,Ro57}. 
If $T:X\to Y$ is a continuous operator we write $T\in
\mathcal{L}(X,Y)$ and denote its operator (quasi-)norm by $\|T\|$. The symbol $X \hookrightarrow Y$ indicates that the
identity operator $\Id:X \to Y$ is continuous. For two sequences $(a_n)_{n=1}^{\infty},(b_n)_{n=1}^{\infty}\subset \R$  we
write $a_n \lesssim b_n$ if there exists an absolute constant $c>0$ such that $a_n \leq
c\,b_n$ for all $n$. We write $a_n \simeq b_n$ if both $a_n \lesssim b_n$ and
$b_n \lesssim a_n$. If $\alpha$ is a set of parameters, then we write $a_n \lesssim_{\alpha} b_n$ if there exists a constant $c_{\alpha}>0$ depending only on $\alpha$ such that $a_n \leq
c_{\alpha}\,b_n$ for all $n$.\par 
Let $b,d \in \N$. The $bd$-dimensional mixed space $\ell_p^b(\ell_q^d)$ is defined as the space of all $bd$-dimensional block vectors $(x_i)_{i=1}^b \in \R^{bd}$, where $x_i \in \R^d$ for all $1\leq i\leq b$, equipped with the mixed (quasi-)norm defined in \eqref{mixednorm}. Occasionally we will identify $x\in \R^{bd}$ with the $b\times d$ matrix whose rows are the vectors $x_i$. We always refer to the $\ell_p$-space supported on $[b]:=\{1,\ldots,b\}$ as the \emph{outer} space and to the $\ell_q$-space supported on $[d]$ as the \emph{inner} space. For any $S\subset[b]\times[d]$ and $x\in \R^{bd}$ we define $x$ to be the block vector with entries $(x_S)_{ij} = x_{ij}$ for $(i,j)\in S$, $(x_S)_{ij} = 0$ for $(i,j)\in S^c$. 

\section{Preliminaries}

\subsection{Gelfand numbers and worst-case errors}
\label{sec:GelEquiv}

We recall the definition of Gelfand numbers and the mild conditions under which Gelfand numbers can be considered as worst-case approximation/recovery errors. 
Let $X,Y$ be (quasi-)Banach spaces and $T:X\to Y$ be a continuous map. 

\begin{definition}\label{def:Gelf} Let $X,Y$ be two quasi-Banach spaces and $T \in \mathcal{L}(X,Y)$\,. For $m\in \N$ the $m$-th Gelfand number is defined as 
$$
    c_m(T) := \inf\{\|TJ^X_M\|~:~M \hookrightarrow X \text{ closed subspace with } \text{codim}(M)<m\}\,.
$$
Here $J_M^X$ is the natural injection of $M$ into $X$.
\end{definition}
We emphasize that the dimension of a subspace is a purely algebraic notion and makes sense also in the framework of 
quasi-Banach spaces. The codimension of a subspace can be defined as the dimension of the quotient space, see \cite[Sect.\ 1.40]{Ru91}. 

Similar to Kolmogorov and approximation numbers, Gelfand numbers are additive and multiplicative $s$-numbers, see \cite[Sect.\ 2.2]{Pie87}. 
The sequence $(c_m(T))_{m=1}^{\infty}$ satisfies the following properties. Let $X,Y,X_0,Y_0$ be quasi-Banach spaces and $Y$ be a $p$-Banach space. Let further 
$S,T\in\mathcal{L}(X,Y)$, $U \in \mathcal{L}(X_0,X)$, $R\in \mathcal{L}(Y,Y_0)$. Then the following properties hold:
\begin{description}
  \item[(S1)] $\|T\|_{\mathcal{L}(X,Y)}=c_1(T)\geq c_2(T)\geq\ldots\geq 0$\,,
  \item[(S2)] for all $m_1,m_2\in\N_0 $ 
  $$c_{m_1+m_2-1}^p(S+T)\leq c_{m_1}^p(S)+c_{m_2}^p(T)\,,$$
  \item[(S3)] for all $m\in \N$ 
  $$c_m(RTU) \leq \|R\|\cdot c_m(T)\cdot \|U\|\,,$$
  \item[(S4)] $c_m(T) = 0$ if $m>\mbox{rank}(T)\,,$
  \item[(S5)] $c_m(\mbox{id}:\ell_2^m\to \ell_2^m)=1\,.$
  \end{description}
If $X$ and $Y$ are Banach spaces then we may also define the $m$-the Gelfand number $c_m(T:X\to Y)$ by
\begin{equation} \label{def:Gelf_b}
 c_m(T) := \inf_{L_1,\dots,L_{m-1} \in X^*} \sup\{ \|T\,f\|_Y : f \in B_X, L_1(f)  = ... = L_{m-1}(f) = 0\}\,,
\end{equation}
which is the crucial property leading to the well-known relation between Gelfand numbers and minimal worst-case errors. 
In fact, if $X \hookrightarrow  Y$ and $T=\mbox{id}$ then we have the following well-known relation between Gelfand numbers
and the minimal worst-case errors defined in \eqref{minwce}. A first relation of this type has been given in 
\cite[Thm.\ 5.4.1]{TrWaWo88}, see also Remark \ref{rem:Gelfand_width} below.
The proof of the following statement is a straightforward modification of \cite[Thm.\ 10.4]{FoR13}\,. 
\begin{proposition}\label{gelfand_wce}Let $X$ and $Y$ be Banach spaces and $X\hookrightarrow Y$. Then
\[ 
 c_{m+1}(\Id:X \to Y) \leq E_m(X,Y) \leq 2c_{m+1}(\Id:X \to Y).
\]
\end{proposition}
Note, that \eqref{def:Gelf_b} and Proposition \ref{gelfand_wce} may fail in general for quasi-Banach spaces. 
\begin{remark}\label{rem:Gelfand_width} {\em (i)} There is also the related notion of \emph{Gelfand width}. 
The $m$-th Gelfand width of a bounded set $K$ in $Y$ is given by
\begin{equation}
\label{eqn:GwidthDef}
  c_m(K,Y) := \inf_{M \hookrightarrow Y, \text{codim}(M) \leq m} \sup\{ \|f\|_Y : f \in K\cap M\}.
\end{equation}
The terms Gelfand numbers and Gelfand widths are often used interchangeably in the literature although they are defined in a fundamentally different way. In particular, it is important to note that in the definition of the Gelfand numbers $c_m(\id:X\to Y)$ we take an infimum over subspaces $M$ of $X$, whereas in the definition of Gelfand widths $c_m(B_X,Y)$ we take an infimum over subspaces $M$ of $Y$.  In particular, if $X$ and $Y$ are quasi-Banach spaces then it is not clear whether the above Gelfand numbers and Gelfand widths are equivalent. Only recently, some sufficient conditions have been determined under which $c_m(\id:X\to Y)$ and $c_m(B_X,Y)$ are equivalent, see \cite{EdLa13}. Obviously, if $X$ and $Y$ are both $n$-dimensional, then the Gelfand numbers and Gelfand widths coincide.

{\em (ii)} A relation between Gelfand widths and a corresponding notion of minimal worst case errors, which are nowadays sometimes called {\it compressive $m$-widths} 
\cite[Def.\ 10.2, Thm.\ 10.4]{FoR13}, was first given in \cite[Thm.\ 5.4.1]{TrWaWo88}. These associated minimal worst-case errors are defined, for a given bounded set $K$ in a (quasi-)Banach space $Y$, 
as
\begin{equation} \label{eqn:compressWidth}
 E_m(K,Y) := \inf_{S_m} \sup_{f\in K} \|f - S_m(f)\|_Y
\end{equation}
where the infimum is taken over all maps $S_m = \varphi_m \circ N_m$, with $N_m: Y \to \R^m$ any linear and continuous measurement map and $\varphi_m: \R^m \to Y$ an arbitrary  recovery map. Note that the worst-case reconstruction error $E_m(X,Y)$ in \eqref{minwce} and $E_m(B_X,Y)$ in \eqref{eqn:compressWidth} are defined in a subtly 
different manner: in the first case we consider measurement maps $N_m: X \to \R^m$, whereas in the second case we consider maps $N_m: Y \to \R^m$. The compressive $m$-width and the $m$-th Gelfand width are equivalent under mild conditions \cite[Thm.\ 10.4]{FoR13}. In particular, if $\|\cdot\|_X$ and $\|\cdot\|_Y$ are quasi-norms on $\R^n$, then
\begin{equation}\label{Em_cm}
 \alpha_{Y}^{-1}c_m(B_X,Y) \leq E_m(B_X,Y) \leq 2\alpha_X c_m(B_X,Y).
\end{equation}
\end{remark}
In conclusion, the $m$-th Gelfand number $c_m(\id:X\to Y)$, the $m$-th Gelfand width $c_m(B_X,Y)$ and their associated minimal worst-case errors $E_m(X,Y)$ and $E_m(B_X,Y)$, respectively, can in general be four non-equivalent quantities. However, if $X$ and $Y$ are finite-dimensional quasi-normed spaces of the same dimension, then these quantities are equivalent up to constants depending only the quasi-norm constants of $X$ and $Y$. In particular, if $X=\ell_p^b(\ell_q^d)$ and $Y=\ell_r^b(\ell_u^d)$, then the constants in the equivalences depend only on $p,q,r$ and $u$. We will use these facts heavily in the sections that follow.  

\subsection{Generalized notions of sparsity and compressibility}
\label{sec:sparsity}

We fix some terminology concerning structured sparsity. Consider a block vector $x=(x_i)_{i=1}^b$ with $x_i\in \R^{d_i}$. As usual, we say that $x$ is \emph{$s$-sparse} if it has at most $s$ nonzero entries. We say $x$ is \emph{$s$-block sparse} if at most $s$ blocks of $x$ are non-zero. This type of sparsity occurs frequently in the signal processing literature (see e.g.\ \cite{ADR14,Eldar08,EKB10,ElM09}). Block sparsity is strongly related to (and can in fact be viewed as a generalization of) the \emph{joint sparsity} model, where one considers a signal consisting of several `channels' (for instance, the three color channels of an RGB image) and assumes that nonzero coefficients appear at the same location within each of the channels (see e.g.\ \cite{ElM09,fora08,grrascva08} and \cite[Section 1.2]{ADR14}). Block sparsity also plays an important role in high-dimensional statistics (see e.g.\ \cite[Chapters 4 and 8]{BuGe11}, \cite[Sections 2.1 and 3.2.1]{Wai14} and the references therein) and appears in machine learning, for instance in multi-task learning and multiple kernel learning (see e.g.\ \cite[Sections 1.3 and 1.5]{BJMO12} and the references therein). In this literature, it is sometimes of interest to consider \emph{group sparsity}, a generalization of the block sparsity model where the signal consists of a small number of groups which are allowed to overlap. There is a well-developed theory available on sufficient conditions for recovery of block and, more generally, group sparse vectors via the group Lasso, see e.g.\ \cite{BuGe11,HuZ10,LPGT11,NRWY12}. As a third form of sparsity, we consider sparsity-in-levels, which has emerged recently in the signal processing literature. A block vector $x=(x_i)_{i=1}^b$  is called \emph{$(t_i)_{i=1}^b$-sparse-in-levels} if each block $x_i$ is a $t_i$-sparse vector. Empirically it has been observed that certain signals exhibit this type of structured sparsity after expansion in an appropriate basis. For instance, the wavelet coefficients of natural images often exhibit an (approximately) sparse-in-levels structure, where the levels correspond to the wavelet scales. We refer to e.g.\ \cite{AHP14,BaH14,BBW17,LiA16} for recent work on sufficient recovery guarantees for sparse-in-levels signals.\par
To keep our exposition non-technical and to ensure that our sparsity concepts are linked to classical 
$\ell_p(\ell_q)$-spaces, we restrict ourselves to the study of block sparsity and sparsity-in-levels in the case where 
the block sizes and (in the case of sparsity-in-levels) the sparsity of the blocks are identical. 
Accordingly, we say that a block vector $x=(x_i)_{i=1}^b \in \R^{bd}$ with $x_i\in \R^d$ for all $i\in [b]$ is \emph{$s$-outer sparse} if at most $s$ blocks of $x$ are non-zero and say that $x$ is \emph{$t$-inner sparse} if each block of $x$ has at most $t$ non-zero entries. The qualifiers `outer'  and `inner' refer to the fact that the sparsity is defined with respect to the outer $\ell_p$-space or inner $\ell_q$-space, respectively, if $x\in \ell_p^b(\ell_q^d)$. For our analysis it will also be important to consider a mixture of outer and inner sparsity: we define $x$ to be \emph{$(s,t)$-sparse} if it has at most $s$ non-zero blocks, each of which has at most $t$-non-zero entries.\par
To illustrate why outer and inner sparsity should play an important role for the Gelfand numbers of $\ell_p^b(\ell_q^d)$-spaces where $p$ and/or $q$ are small, let us point out a generalization of  ``Stechkin's lemma'' 
\cite[Prop.\ 2.3, Thm.\ 2.5]{FoR13} to the two different best $s$-term approximation errors \eqref{eqn:beststermOI} associated with inner and outer sparsity. Let us emphasize that although \cite[Prop.\ 2.3, Thm.\ 2.5]{FoR13} is often attributed to S.B. Stechkin, he never formulated such a result. The first known version goes back to Temlyakov 1986 \cite[p.\ 92]{Te86} for $p,q\geq 1$. Note, that there is indeed a well-known Stechkin lemma (or Stechkin criterion) from 1955 proved in \cite{St55}, see also \cite[7.4]{DTU16} for further historical comments. 
This criterion implies the inequality \cite[Prop.\ 2.3]{FoR13} in case $p=1$ and $q=2$ but with a constant larger than $1$.  Let us now formulate two generalizations of \cite[Prop.\ 2.3]{FoR13}. Their proofs are straightforward modifications. If $0<p<r \leq \infty$, then for any $x\in \R^{bd}$,
\begin{equation}
\label{eqn:StechOuter}
\si_s^{\operatorname{outer}}(x)_{\ell_r^b(\ell_q^d)} \leq \frac{1}{s^{1/p-1/r}} \|x\|_{\ell_p^b(\ell_q^d)}
\end{equation}
and if $0<q<u \leq \infty$
\begin{equation}
\label{eqn:StechInner}
\si_t^{\operatorname{inner}}(x)_{\ell_p^b(\ell_u^d)} \leq \frac{1}{t^{1/q-1/u}} \|x\|_{\ell_p^b(\ell_q^d)}.
\end{equation}
The bound (\ref{eqn:StechOuter}) shows that vectors in $B_{\ell_p^b(\ell_q^d)}$ with $p$ small can be approximated well by outer-sparse 
vectors and, similarly, (\ref{eqn:StechInner}) implies that elements in $B_{\ell_p^b(\ell_q^d)}$ with $q$ small can be approximated well by inner-sparse vectors. The constant $c=1$ in 
(\ref{eqn:StechOuter}) and (\ref{eqn:StechInner}) is not optimal. It is actually below one, depending on $p,q,r,u$, see \cite[Section 7.4]{DTU16} and the references therein.

\section{A general upper bound for Gelfand widths of $K\subset \R^n$}

In this section we set up some machinery to prove upper bounds for Gelfand numbers based on ``Gordon's escape through a mesh'' theorem . To facilitate potential re-use, we write our results down in more generality than needed in Section~\ref{sec:matchGel} below. Below we will deal with Gelfand widths of bounded subsets in $\R^n$, see (\ref{eqn:GwidthDef}) in Remark \ref{rem:Gelfand_width}\,.

The main tool in the proof will be Gordon's ``Escape through a mesh'' theorem \cite[Cor.\ 3.4]{Gor88}. Recall that the Gaussian width of a set $T\subset \R^n$ is defined by
$$w(T) = \E\sup_{x\in T}\langle g,x\rangle,$$
where $g$ is an $n$-dimensional standard Gaussian vector. Let us denote 
$$
    E_n:=\mathbb{E} \|g\|_2 = \sqrt{2}\frac{\Gamma((n+1)/2)}{\Gamma(n/2)}\,.
$$
We will state the version of Gordon's theorem from \cite[Thm.\ 9.21]{FoR13}.
\begin{lemma}\label{Gor} Let $A \in \R^{m\times n}$ be a Gaussian random matrix and $T$ be a subset of the unit sphere $S^{n-1} := \{x\in \R^n~:~\|x\|_2 = 1 \}$. Then for every $t>0$ we have
$$
    \mathbb{P}\Big(\inf\limits_{x\in T} \|Ax\|_2 \leq E_m-w(T)-t\Big) \leq e^{-t^2/2}\,.
$$
\end{lemma}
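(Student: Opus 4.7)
The plan is to deduce the lemma by combining Gordon's Gaussian min-max comparison theorem with the Gaussian concentration inequality for Lipschitz functions. I start by writing
$$\inf_{x\in T}\|Ax\|_2 = \inf_{x\in T}\sup_{u\in S^{m-1}}\langle Ax,u\rangle,$$
so that the quantity to be controlled is a min-max of the centered Gaussian process $X_{x,u}:=\langle Ax,u\rangle$ indexed by $T\times S^{m-1}$. As an auxiliary process I take $Y_{x,u}:=\langle g,x\rangle+\langle h,u\rangle$, where $g\in\R^n$ and $h\in\R^m$ are independent standard Gaussian vectors (also independent of $A$).

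A short computation shows that for $x,x'\in T\subset S^{n-1}$ and $u,u'\in S^{m-1}$,
$$\E(X_{x,u}-X_{x',u'})^2 = 2-2\langle x,x'\rangle\langle u,u'\rangle, \qquad \E(Y_{x,u}-Y_{x',u'})^2 = 4-2\langle x,x'\rangle-2\langle u,u'\rangle,$$
and the elementary identity $(1-\langle x,x'\rangle)(1-\langle u,u'\rangle)\geq 0$, which holds because both factors are nonnegative on the product of unit spheres, gives the Gordon hypotheses: the $Y$-increments dominate the $X$-increments, with equality when $x=x'$ (both sides then equal $2-2\langle u,u'\rangle$). Gordon's Gaussian min-max comparison theorem then yields
$$\E\inf_{x\in T}\sup_{u\in S^{m-1}}X_{x,u} \geq \E\inf_{x\in T}\sup_{u\in S^{m-1}}Y_{x,u}.$$
The right-hand side decouples by independence into $\E\inf_{x\in T}\langle g,x\rangle+\E\|h\|_2$, and the symmetry $g\stackrel{d}{=}-g$ identifies the first summand with $-w(T)$. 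Hence $\E\inf_{x\in T}\|Ax\|_2\geq E_m-w(T)$.

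To upgrade this expectation estimate to a tail bound, I invoke Gaussian concentration. The map $F:\R^{m\times n}\to\R$ defined by $F(A):=\inf_{x\in T}\|Ax\|_2$ is $1$-Lipschitz with respect to the Frobenius norm: indeed, for any $x\in S^{n-1}$,
$$\bigl|\|Ax\|_2-\|A'x\|_2\bigr| \leq \|(A-A')x\|_2 \leq \|A-A'\|_F,$$
and this estimate is preserved under the infimum over $x\in T$. Since $A$ has i.i.d.\ standard Gaussian entries, the classical Gaussian concentration inequality gives $\mathbb{P}(F(A)\leq \E F(A)-t)\leq e^{-t^2/2}$ for every $t>0$, and combining this with the bound $\E F(A)\geq E_m-w(T)$ established above concludes the proof.

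The main obstacle is the increment comparison feeding into Gordon's theorem. It is precisely there that the restriction $T\subset S^{n-1}$ (rather than $T\subset B_{\ell_2^n}$) is essential: the factorization $(1-\langle x,x'\rangle)(1-\langle u,u'\rangle)\geq 0$ breaks if $x,x'$ are not normalized, and without it the auxiliary process $Y$ cannot be chosen as a simple sum of two independent Gaussian vectors. Once this increment inequality is in hand, the rest of the argument is routine bookkeeping, using Gordon's min-max comparison and Gaussian concentration as black boxes.
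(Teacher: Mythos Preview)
The paper does not prove this lemma; it states it as a known result, citing Gordon's original paper \cite[Cor.\ 3.4]{Gor88} and the textbook version \cite[Thm.\ 9.21]{FoR13}. Your argument is correct and is essentially the standard proof found in those references: Gordon's min-max comparison of the bilinear process $X_{x,u}=\langle Ax,u\rangle$ against the separable process $Y_{x,u}=\langle g,x\rangle+\langle h,u\rangle$ yields the expectation bound $\E\inf_{x\in T}\|Ax\|_2\geq E_m-w(T)$, and Gaussian concentration for the $1$-Lipschitz functional $A\mapsto\inf_{x\in T}\|Ax\|_2$ upgrades it to the stated tail estimate.
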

By the known relation $m/\sqrt{m+1} \leq E_m \leq \sqrt{m}$ (see e.g.\  \cite[Prop.\ 8.1]{FoR13}), Lemma \ref{Gor} yields a non-trivial result if $E_m>w(T)$ and in particular if $m \gtrsim w(T)^2$\,.
\begin{proposition}
\label{prop:GGwidth}
Let $0<\rho<\infty$, $\|\cdot\|_Y$ denote a quasi-norm on $\R^n$ and $K\subset \R^n$ be a bounded symmetric set. Define the set 
$$K_{\rho} = \{x\in K \ : \ \|x\|_Y>\rho\}$$
and let $K_{\rho}^{\|\cdot\|_2} = \{x/\|x\|_2 \ : \ x\in K_{\rho}\}$ be the associated set of $\ell_2$-normalized vectors. If
\begin{equation}
\label{eqn:mGGwidth}
m\gtrsim \max\{w(K_{\rho}^{\|\cdot\|_2})^2,1\},
\end{equation}
then 
$$c_m(K,Y) \leq \rho.$$
\end{proposition}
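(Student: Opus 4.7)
The plan is to apply Gordon's escape through a mesh theorem (Lemma~\ref{Gor}) to the spherically normalized set $T = K_\rho^{\|\cdot\|_2} \subset S^{n-1}$, in order to produce a Gaussian matrix $A \in \R^{m \times n}$ whose kernel avoids $K_\rho$. That kernel then serves as the subspace realizing the Gelfand width bound in definition~(\ref{eqn:GwidthDef}).

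First I would dispose of the degenerate case $K_\rho = \emptyset$: in that case $\|x\|_Y \leq \rho$ for all $x \in K$ by definition of $K_\rho$, so choosing $M = \R^n$ (which has codimension $0 \leq m$) trivially yields $c_m(K,Y) \leq \rho$. Henceforth I assume $K_\rho \neq \emptyset$, so that $T$ is a non-empty subset of the unit sphere and $w(T)$ is finite.

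Next I would apply Lemma~\ref{Gor}. Using the bound $E_m \geq m/\sqrt{m+1}$, the hypothesis $m \gtrsim \max\{w(T)^2,1\}$ with a sufficiently large implicit constant gives $E_m - w(T) \geq c\sqrt{m} > 0$. Taking, for instance, $t := (E_m - w(T))/2 > 0$, one has $E_m - w(T) - t > 0$ and $e^{-t^2/2} < 1$, so Lemma~\ref{Gor} guarantees the existence of at least one deterministic matrix $A \in \R^{m \times n}$ satisfying
$$
\inf_{x \in T} \|Ax\|_2 > 0.
$$

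Finally, I would set $M := \ker A$, which has $\mathrm{codim}(M) \leq m$. If some $x \in M \cap K$ satisfied $\|x\|_Y > \rho$, then $x \in K_\rho$ and $x/\|x\|_2 \in T$ with $A(x/\|x\|_2) = 0$, contradicting the displayed inequality. Hence $\sup_{x \in M \cap K}\|x\|_Y \leq \rho$, which by~(\ref{eqn:GwidthDef}) gives $c_m(K,Y) \leq \rho$. The only subtle point is calibrating the implicit constant in $m \gtrsim \max\{w(T)^2,1\}$ so that Gordon's probability estimate is strictly less than one; but because only the existence of a single good $A$ is required, any constant forcing $E_m > w(T)$ is enough, so this step is routine rather than the main obstacle.
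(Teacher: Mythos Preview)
Your proposal is correct and follows essentially the same approach as the paper: apply Gordon's escape-through-a-mesh theorem to $T=K_\rho^{\|\cdot\|_2}$ to produce a Gaussian matrix $A$ whose kernel avoids $K_\rho$, then take $M=\ker A$ as the witnessing subspace. Your treatment is in fact slightly cleaner—you handle the degenerate case $K_\rho=\emptyset$ explicitly, and your final contradiction argument bypasses the paper's detour through the equivalence of quasi-norms on $\R^n$.
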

\begin{proof}
To prove the assertion, we need to find an $A\in \R^{m\ti n}$ such that $\|x\|_Y\leq \rho$ for all $x\in \mathrm{ker}(A)\cap K$. 
This is satisfied if $\inf_{x\in K_{\rho}}\|A x\|_2^2>0$. 
By Lemma \ref{Gor}, if $A$ is an $m\ti n$ standard Gaussian matrix such that $E_m>w(K_{\rho}^{\|\cdot\|_2})$, then 
\begin{equation}\label{eq_Ax}
    \gamma(K,\rho;A):=\inf\limits_{x\in K_{\rho}^{\|\cdot\|_2}} \|Ax\|_2 > 0
\end{equation}
with probability at least $1-\exp(-t^2/2)$, where $0<t<E_m-w(K_{\rho}^{\|\cdot\|_2})$\,.
If this event happens, we have for $x\in K_{\rho}$ and $y = x/\|x\|_2$ the bound
$$
    \|Ax\|_2 = \|x\|_2\cdot \|A(y)\|_2 \geq C_Y\gamma(K,\rho;A)\|x\|_Y \geq C_Y\gamma(K,\rho;A)\rho>0\,,
$$
for a certain $C_Y>0$, as all quasi-norms on $\R^n$ are equivalent. Hence, \eqref{eqn:mGGwidth} ensures the existence of 
$A\in\R^{m\ti n}$ satisfying $\inf_{x\in K_{\rho}}\|A x\|_2>0$. 
\end{proof}
In conclusion, we can obtain an upper bound for the Gelfand width by estimating 
$$w(K_{\rho}^{\|\cdot\|_2}) = \E\sup_{x\in K \ : \ \|x\|_Y>\rho} \Big\langle \frac{x}{\|x\|_2},g\Big\rangle$$
from above.

\section{A lower bound for Gelfand widths via packing numbers}
\label{sec:lower_bound1}
We generalize the proof-by-contradiction technique from \cite{FPR10}, which was used to prove lower bounds for 
$c_m(\id:\ell_p^n \to \ell_q^n)$. To this end, we establish a connection between Gelfand and packing numbers in 
Proposition \ref{prop:GelEstPack}, which requires some further preparation. 
Recall that for a quasi-normed space $X$, a bounded subset $K \subset X$, and $\varepsilon>0$, the packing number $\mathcal P(K, \|\cdot\|_X, \varepsilon)$ 
is the maximal number of elements in $K$ which have pairwise distances strictly larger than $\varepsilon$. The following well-known lemma can be proven by a standard volume comparison argument.
\begin{lemma}
\label{lem:volBound}
Let $\|\cdot\|_X$ be any quasi-norm on $\R^n$ with quasi-norm constant $1\leq \al_X<\infty$ and let $U$ be a subset of the unit ball $B_X=\{x\in \R^n: \|x\|_X \leq 1\}$. Then for any $\varepsilon>0$,
\begin{equation*}
\mathcal{P}(U, \|\cdot\|_X,\varepsilon) \leq \al_X^n \left(1+\frac{2}{\varepsilon}\right)^n.
\end{equation*}
\end{lemma}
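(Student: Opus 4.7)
The plan is to run the classical volume-comparison argument, adapted to the quasi-normed setting by tracking the quasi-norm constant $\alpha_X$ through the two inequalities it enters.

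First, let $\{x_1,\dots,x_N\}\subset U$ be an $\varepsilon$-separated collection realizing the packing number, so that $\|x_i-x_j\|_X\ge \varepsilon$ for all $i\ne j$. I will compare volumes of balls centered at the $x_i$, where $r$-balls are taken with respect to the quasi-norm, i.e.\ $B(x_i,r)=x_i+rB_X$; by translation-invariance of Lebesgue measure, $\mathrm{vol}(B(x_i,r))=r^n\mathrm{vol}(B_X)$. The two ingredients I need are disjointness of the centered balls (for a lower bound on the total volume) and a common containment (for an upper bound).

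For disjointness, suppose $y\in B(x_i,r)\cap B(x_j,r)$; then the quasi-triangle inequality gives
\[
\varepsilon\le\|x_i-x_j\|_X\le\alpha_X\bigl(\|x_i-y\|_X+\|y-x_j\|_X\bigr)\le 2\alpha_X r,
\]
so the choice $r=\varepsilon/(2\alpha_X)$ (or any smaller value) makes the balls $B(x_i,r)$ pairwise disjoint. For containment, since each $x_i\in B_X$, the quasi-triangle inequality yields $\|y\|_X\le\alpha_X(1+r)$ for every $y\in B(x_i,r)$, so $\bigcup_{i=1}^N B(x_i,r)\subset\alpha_X(1+r)\,B_X$. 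Comparing volumes,
\[
N\,r^n\mathrm{vol}(B_X)\le \bigl(\alpha_X(1+r)\bigr)^n\mathrm{vol}(B_X),
\]
which after cancelling $\mathrm{vol}(B_X)>0$ and substituting $r=\varepsilon/(2\alpha_X)$ gives the claimed bound of the form $\alpha_X^n\bigl(1+\tfrac{c}{\varepsilon}\bigr)^n$, with the constant inside the parenthesis being at most linear in $\alpha_X$.

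There is no real obstacle here; the only subtlety is to remember that in the quasi-normed setting one must shrink the packing radius by a factor of $\alpha_X$ in order to secure disjointness, which is exactly what produces the extra factor $\alpha_X^n$ compared with the familiar normed-space estimate $\mathcal{P}\le(1+2/\varepsilon)^n$. Everything else (homogeneity $\mathrm{vol}(tB_X)=t^n\mathrm{vol}(B_X)$, translation invariance, and the fact that $\mathrm{vol}(B_X)$ is finite and positive since $B_X$ is a bounded, symmetric, absorbing subset of $\R^n$) is standard.
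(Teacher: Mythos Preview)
Your argument is exactly the standard volume-comparison the paper alludes to (the paper gives no proof beyond the phrase ``standard volume comparison argument''), and all steps are correct. The only point to flag is the one you already notice yourself: with $r=\varepsilon/(2\alpha_X)$ the bound you actually obtain is
\[
N \le \Bigl(\tfrac{\alpha_X(1+r)}{r}\Bigr)^n = \alpha_X^n\Bigl(1+\tfrac{2\alpha_X}{\varepsilon}\Bigr)^n,
\]
which carries an extra factor $\alpha_X$ inside the parenthesis compared with the lemma as stated. I do not see how to remove that factor by a pure volume argument, since disjointness genuinely requires shrinking the radius by $\alpha_X$. This is harmless for the paper: the lemma is only invoked in Lemma~\ref{lem:minImpliesPacking} and then in Proposition~\ref{prop:GelEstPack}, where the resulting constant enters as $\log(\alpha_X + 2\alpha_X^2\beta_X r_X(U)/\varepsilon)$ and is immediately absorbed into constants depending only on $p,q$. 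So your version of the bound is perfectly sufficient; the discrepancy is at worst a typo in the stated constant.
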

\begin{lemma}
\label{lem:minImpliesPacking}
Let $A\in\R^{m\ti n}$ and let $\|\cdot\|_X$ be any quasi-norm on $\R^n$. Suppose that $T\subset \R^n$ is a set of vectors satisfying, for some $c\geq 1$,
\begin{equation}
\label{eqn:assumpT}
\inf_{z\in \R^n \ : \ A z=A x} \|z\|_X \geq c^{-1}\|x\|_X \qquad \mathrm{for \ all} \ x\in T-T.
\end{equation}
Let $U\subset T$ and let $r_X(U) = \sup_{x\in U} \|x\|_X$ be its radius. Then, for any $\eps>0$,
$$\log\cP(U,\|\cdot\|_X,\eps) \leq m\log\Big(\al_X+\frac{2c\al_Xr_X(U)}{\eps}\Big),$$
where $\al_X$ is the quasi-norm constant.
\end{lemma}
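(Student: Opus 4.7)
The plan is to push an $\eps$-packing of $U$ forward through $A$ into the image space $A(\R^n)\subset\R^m$, which has dimension at most $m$, and then to apply the volume bound of Lemma~\ref{lem:volBound} in that lower-dimensional space. The hypothesis \eqref{eqn:assumpT} is exactly what guarantees that the packing structure survives this push-forward.

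Fix a maximal $\eps$-separated subset $\{x_1,\dots,x_N\}$ of $U$ with respect to $\|\cdot\|_X$, so that $N=\cP(U,\|\cdot\|_X,\eps)$. Equip the image $A(\R^n)$ with the quotient quasi-norm
$$\|y\|_\sim := \inf\{\|z\|_X : z\in\R^n,\ Az=y\}.$$
A direct check, applying the quasi-triangle inequality to representatives $z_1,z_2$ of $y_1,y_2$ and then passing to the infimum, shows that $\|\cdot\|_\sim$ is a quasi-norm on $A(\R^n)$ with the same constant $\al_X$; by coercivity of $\|\cdot\|_X$ on $\R^n$ the defining infimum is attained.

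Applying the hypothesis \eqref{eqn:assumpT} to $x_i-x_j\in U-U\subset T-T$ gives, for all $i\neq j$,
$$\|Ax_i-Ax_j\|_\sim = \inf_{Az=A(x_i-x_j)}\|z\|_X \geq c^{-1}\|x_i-x_j\|_X \geq c^{-1}\eps,$$
so the $N$ points $\{Ax_i\}$ form a $(c^{-1}\eps)$-packing in $(A(\R^n),\|\cdot\|_\sim)$. Moreover $\|Ax_i\|_\sim\leq\|x_i\|_X\leq r_X(U)$ (take $z=x_i$ in the infimum), so they all lie in the $\|\cdot\|_\sim$-ball of radius $r_X(U)$. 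Since $\dim A(\R^n)\leq m$, a rescaling followed by Lemma~\ref{lem:volBound} applied in the space $(A(\R^n),\|\cdot\|_\sim)$ yields
$$N \leq \al_X^m\Big(1+\frac{2r_X(U)}{c^{-1}\eps}\Big)^m = \Big(\al_X+\frac{2c\al_X r_X(U)}{\eps}\Big)^m,$$
and taking logarithms gives the claim. No step is a serious obstacle; the only care-requiring point is verifying that the quotient quasi-norm inherits the constant $\al_X$, and the crux of the argument is the observation that the quotient construction converts the problem into a strictly $m$-dimensional packing estimate.
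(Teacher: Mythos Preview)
Your proof is correct and follows essentially the same approach as the paper's: both push the packing into a space of dimension $\mathrm{rank}(A)\leq m$ via the quotient construction and then invoke Lemma~\ref{lem:volBound}. The only cosmetic difference is that the paper works with the abstract quotient $Q=X/\ker(A)$ equipped with $\|[x]\|_Q=\inf_{v\in\ker(A)}\|x-v\|_X$, whereas you identify this quotient with the image $A(\R^n)$ and write the same quantity as $\|y\|_\sim$; these are canonically isometric, so the arguments coincide.
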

\begin{proof}
Consider the quotient space $Q=X/\mathrm{ker}(A)$ equipped with its natural quasi-norm
$$\|[x]\|_Q : = \inf_{v \in \mathrm{ker}(A)} \|x-v\|_X.$$
Note that for every $v\in \mathrm{ker}(A)$, the vector $z=x-v$ satisfies $A z=A x$. On the other hand, if $z\in \R^n$ satisfies $A z=A x$, then $v=x-z$ defines a vector in $\mathrm{ker}(A)$. By (\ref{eqn:assumpT}), this implies  
$$\|[x]\|_Q \leq \|x\|_X \leq c\|[x]\|_Q \qquad \mathrm{for \ all} \ x\in T-T$$
and in particular
$$\cP(U,\|\cdot\|_X,\eps) \leq \cP(U,c\|\cdot\|_Q,\eps) = \cP(U,\|\cdot\|_Q,\eps/c).$$
Now apply Lemma~\ref{lem:volBound} to obtain
\begin{align*}
\cP(U,\|\cdot\|_X,\eps) & \leq \cP(r_X(U) B_{Q},\|\cdot\|_Q,\eps/c) \\
& \leq \Big(\al_X + \frac{2c\al_X r_X(U)}{\eps}\Big)^{\mathrm{rank}(A)} \leq \Big(\al_X + \frac{2c\al_X r_X(U)}{\eps}\Big)^m\,.
\end{align*}
Taking logarithms yields the result.
\end{proof}
In what follows, we let $\|\cdot\|_X$ be a quasi-norm on $\R^{bd}$ with quasi-norm constant $1\leq \al_X<\infty$. Moreover, we let $0<\beta_X<\infty$ denote the smallest constant such that for all $S\subset [b]\ti[d]$ and all $x\in \R^{bd}$ we have
$$\|x_S\|_X + \|x_{S^c}\|_X\leq \beta_X \|x\|_X.$$
Clearly, for any $S\subset [b]\ti[d]$ and $x\in \R^{bd}$,
$$\|x\|_X\leq \al_X(\|x_S\|_X + \|x_{S^c}\|_X)\leq \al_X\beta_X \|x\|_X,$$
so $\al_X\beta_X\geq 1$. Note that $\beta_{\ell_p^b(\ell_q^d)}\leq 2$ depends only on $p$ and $q$.
\begin{lemma}
\label{lem:NSPNew}
Let $\cS$ be a collection of supports in $[b]\ti[d]$ and let $A\in \R^{m\times bd}$. Suppose that for every $v\in \mathrm{Ker}(A)\setminus\{0\}$ and $S\in \cS$ we have $\|v_S\|_{X}<\|v_{S^c}\|_{X}$. Then, for any $x\in \R^{bd}$ and $S\in \cS$,
$$\inf_{z \in \R^{bd} \ : \ A z = A x_S} \|z\|_X > (\al_X\beta_X)^{-1}\|x_S\|_X.$$
\end{lemma}
\begin{proof}
Let $x \in \R^{bd}$. If $z\neq x_S$ satisfies $A z=A x_S$, then $v=x_S-z \in \mathrm{Ker}(A)\setminus\{0\}$ and therefore
\begin{align*}
\al_X^{-1}\|x_S\|_X & \leq \|x_S-z_S\|_X + \|z_S\|_X = \|v_S\|_X + \|z_S\|_X \\
& < \|v_{S^c}\|_X + \|z_S\|_X \\
& = \|z_{S^c}\|_X + \|z_S\|_X \leq \beta_X\|z\|_X.
\end{align*}
\end{proof}
Let us define the parameter 
$$\ka_{X,Y,\cS} = \sup_{x \in \R^{bd},S\in \cS} \frac{\|x_S\|_X}{\|x_S\|_Y}.$$
\begin{proposition}
\label{prop:GelEstPack} 
Let $\cS$ be a collection of supports in $[b]\ti [d]$ and define 
$\cS_+ = \{S_1\cup S_2 \ : \ S_1,S_2 \in \cS\}.$
Let $\|\cdot\|_Y$ be a quasi-norm on $\R^{bd}$ satisfying 
$$\|x_S\|_Y \leq \|x\|_Y \qquad \mathrm{for \ all} \ x\in \R^n, S\in \cS_+.$$
If
$$c_m(B_{X}, Y) < (2\al_X\ka_{X,Y,\cS_+})^{-1}$$
then for any $U\subset \{x_S \ : \ x\in \R^{bd}, \ S \in \cS\}$ and $\eps>0$,
$$\log\cP(U, \|\cdot\|_X,\eps) \leq m\log\Big(\al_X+\frac{2\al_X^2\beta_X r_X(U)}{\eps}\Big).$$
\end{proposition}
\begin{proof}
By definition of the Gelfand width, there exists an $A\in\R^{m\ti bd}$ such that for every $x\in \mathrm{Ker}(A)\setminus\{0\}$
$$\|x\|_Y<(2\al_X\ka_{X,Y,\cS_+})^{-1}\|x\|_X.$$
For any $S\in \cS_+$, 
\begin{align*}
\|x_S\|_{X} & \leq \ka_{X,Y,\cS_+} \|x_S\|_Y \\
& \leq \ka_{X,Y,\cS_+} \|x\|_Y < (2\al_X)^{-1}\|x\|_{X} \leq \frac{1}{2}(\|x_S\|_X + \|x_{S^c}\|_X). 
\end{align*}
Rearranging, we conclude that $\|x_S\|_{X}<\|x_{S^c}\|_X$. Now apply Lemma~\ref{lem:NSPNew} and subsequently Lemma~\ref{lem:minImpliesPacking} (with $T=\{x_S \ : \ x\in \R^{bd}, \ S \in \cS\}$, noting that $T-T=\{x_S \ : \ x\in \R^{bd}, \ S \in \cS_+\}$) to obtain the conclusion. 
\end{proof}
Proposition~\ref{prop:GelEstPack} can be used to prove lower bounds for the Gelfand widths $c_m(B_{\ell_p^b(\ell_q^d)}, \ell_r^b(\ell_u^d))$ via contradiction: 
one assumes that $c_m(B_{\ell_p^b(\ell_q^d)}, \ell_r^b(\ell_u^d))$ is small and then constructs a large packing consisting of structured 
sparse vectors to find a contradiction. This packing construction is the subject of the next section.

\section{Construction of a large packing in $B_{\ell_p^b(\ell_q^d)}$}
\label{sec:lower_bound2}
We obtain a lower bound for the packing numbers $\mathcal P(B_{\ell_p^b(\ell_q^d)}, \|\cdot\|_{\ell_r^b(\ell_u^d)}, \varepsilon)$, 
where $0<p,q,r,u \leq \infty$. The proof, inspired by \cite{ADR14}, 
is based on two combinatorial facts. The first combinatorial fact has been independently observed in various disciplines of mathematics. A proof can be found, e.g.,\ in \cite[Lemma 10.12]{FoR13}, \cite{Kuehn2001} or 
\cite[Prop.\ 2.21, Page 219]{Pin85}\,.
\begin{lemma}\label{lem:subsets}
Given integers $\ell < n$, there exist subsets $I_1, \ldots, I_N$ of $[n]$ such that each $I_i$ has cardinality $2\ell$ and
$$ \card(I_i \cap I_j) < \ell \ \ \text{ whenever } i \neq j,$$
and 
\begin{align*}
N \geq \Big(\frac{n}{8\ell}\Big)^{\ell}.
\end{align*}
\end{lemma}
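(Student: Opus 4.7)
The plan is to proceed by a greedy extremal argument. I would let $\mathcal{F}=\{I_1,\dots,I_N\}$ be a maximal family of $2\ell$-subsets of $[n]$ with pairwise intersection strictly smaller than $\ell$. By maximality, every $2\ell$-subset $J$ of $[n]$ shares at least $\ell$ elements with at least one member of $\mathcal{F}$ (including the case $J=I_i$). So if I can upper bound $B(I)$, the number of $2\ell$-subsets $J$ with $|J\cap I|\geq\ell$ for a fixed $2\ell$-set $I$, then $N\cdot\max_{I}B(I)\geq\binom{n}{2\ell}$.

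For the blocker count, I would use the following overcounting trick: any $J$ with $|J\cap I|\geq\ell$ can be written as $J=A\cup B$ with $A$ an $\ell$-subset of $J\cap I\subseteq I$ (chosen arbitrarily among the $\binom{|J\cap I|}{\ell}$ possibilities) and $B=J\setminus A$ an $\ell$-subset of $[n]\setminus A$. This yields the bound
$$B(I)\;\leq\;\binom{2\ell}{\ell}\binom{n-\ell}{\ell},$$
and consequently $N\geq\binom{n}{2\ell}\big/\bigl(\binom{2\ell}{\ell}\binom{n-\ell}{\ell}\bigr)$. Applying the standard identity $\binom{n}{2\ell}\binom{2\ell}{\ell}=\binom{n}{\ell}\binom{n-\ell}{\ell}$ collapses the right-hand side to the clean expression $\binom{n}{\ell}/\binom{2\ell}{\ell}^{2}$.

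To finish, I would invoke the elementary estimates $\binom{n}{\ell}\geq(n/\ell)^{\ell}$ (valid since each factor $(n-i)/(\ell-i)$ with $0\leq i<\ell$ exceeds $n/\ell$) together with $\binom{2\ell}{\ell}\leq 2^{2\ell}$, which yields $N\geq(n/(16\ell))^{\ell}$. Sharpening $\binom{2\ell}{\ell}$ via the refined bound $\binom{2\ell}{\ell}\leq 4^{\ell}/\sqrt{\pi\ell}$ (from Stirling) absorbs the extra factor and produces the claimed constant $8$ in the denominator.

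The main obstacle is not the combinatorics but squeezing out the correct numerical constant: the greedy/volume argument only naturally delivers $(n/(c\ell))^{\ell}$ for some absolute $c$, and matching the stated $c=8$ requires using a Stirling-type sharpening of $\binom{2\ell}{\ell}$ (or, alternatively, a slightly smarter overcounting). A minor separate issue is the regime where $n$ is only marginally larger than $\ell$: there the claimed bound $(n/(8\ell))^{\ell}\leq 1$ becomes vacuous and is trivially realized by a single subset, so no separate argument is required.
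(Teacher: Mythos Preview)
The paper does not supply its own proof of this lemma; it quotes the result as a known combinatorial fact with references to \cite{FoR13}, \cite{Kuehn2001}, and \cite{Pin85}. Your greedy/maximality argument is exactly the standard one used in those sources, and it is correct up through the clean bound $N\geq\binom{n}{\ell}\big/\binom{2\ell}{\ell}^{2}$, which together with $\binom{n}{\ell}\geq(n/\ell)^{\ell}$ and $\binom{2\ell}{\ell}\leq 4^{\ell}$ yields $N\geq(n/(16\ell))^{\ell}$.

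Your final step, however, contains an error. The Stirling refinement $\binom{2\ell}{\ell}\leq 4^{\ell}/\sqrt{\pi\ell}$ only improves the estimate to
\[
N \;\geq\; \frac{\binom{n}{\ell}}{\binom{2\ell}{\ell}^{2}} \;\geq\; \pi\ell\cdot\Big(\frac{n}{16\ell}\Big)^{\ell},
\]
and promoting this to $(n/(8\ell))^{\ell}$ would require $\pi\ell\geq 2^{\ell}$, which already fails for $\ell\geq 4$. A merely polynomial prefactor in $\ell$ cannot absorb the exponential gap $2^{\ell}$ between the constants $16$ and $8$, so ``Stirling absorbs the extra factor'' is not correct. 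Your argument as written proves the lemma with $16$ in place of $8$, not with $8$.

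This discrepancy is harmless for the paper. Every use of the lemma (in Proposition~\ref{prop:sparsePacking} and all downstream lower bounds) only needs the form $(n/(c\ell))^{\ell}$ for some absolute constant $c$, and the value of $c$ is absorbed into the implicit constants of the Gelfand number estimates. Replacing $8$ by $16$ throughout would change none of the conclusions.
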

\noindent The second combinatorial fact is a variation of Lemma \ref{lem:subsets}, which is well known in coding theory \cite{Gilbert52,Varshamov57}. For given sets $A_1,\ldots,A_\ell$ we let $d_H$ denote the Hamming distance on $A_1\ti\cdots\ti A_\ell$, i.e.,
$$d_H((x_1,\ldots,x_{\ell}),(y_1,\ldots,y_{\ell})) = \sum_{i=1}^{\ell} 1_{x_i\neq y_i}.$$
\begin{lemma}\label{lem:Gilbert}(Gilbert-Varshamov bound) Fix $\theta>1$. Let $A_1,\ldots,A_\ell$ be sets, each consisting of $\theta$ elements. Then, for any $k\in [\ell]$,
\begin{align*}
\mathcal{P}(A_1\ti\cdots\ti A_\ell,d_H,k) \geq \frac{\theta^\ell}{\sum_{j=0}^{k-1} \binom{\ell}{j} (\theta-1)^j }.
\end{align*}
\end{lemma}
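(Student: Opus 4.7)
The plan is to prove this via the classical greedy / volumetric argument from coding theory. I would consider a subset $C \subset A_1 \times \cdots \times A_\ell$ that is \emph{maximal} with respect to the property that any two distinct elements of $C$ are at Hamming distance at least $k$. Such a maximal set exists since $A_1 \times \cdots \times A_\ell$ is finite. The packing number $\mathcal{P}(A_1 \times \cdots \times A_\ell, d_H, k)$ is by definition at least $|C|$, so it suffices to give a lower bound on the cardinality of $C$.

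By maximality, for every $x \in A_1 \times \cdots \times A_\ell$ there must exist some $c \in C$ with $d_H(x,c) < k$, for otherwise $x$ could be added to $C$ without violating the pairwise separation. Consequently, the closed Hamming balls
\[
B(c, k-1) := \{ x \in A_1 \times \cdots \times A_\ell : d_H(x,c) \leq k-1 \}, \qquad c \in C,
\]
cover all of $A_1 \times \cdots \times A_\ell$. It remains to compute (or bound) the cardinality of a single such ball.

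For any fixed $c \in A_1 \times \cdots \times A_\ell$, a point $x$ at Hamming distance exactly $j$ from $c$ is specified by choosing a subset of $j$ coordinates to differ from $c$ ($\binom{\ell}{j}$ choices) together with, at each such coordinate $i$, one of the $\theta - 1$ elements of $A_i$ different from $c_i$. Therefore
\[
|B(c,k-1)| \;=\; \sum_{j=0}^{k-1} \binom{\ell}{j} (\theta-1)^j,
\]
independently of $c$. Combining with the covering property and $|A_1 \times \cdots \times A_\ell| = \theta^\ell$ yields
\[
|C| \cdot \sum_{j=0}^{k-1} \binom{\ell}{j}(\theta-1)^j \;\geq\; \theta^\ell,
\]
which, after dividing, gives exactly the claimed bound.

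There is no real obstacle here; the argument is entirely elementary once one recognises it as the standard Gilbert--Varshamov sphere-covering bound. The only tiny point to watch is the convention in the definition of packing number used earlier in the paper (pairwise distance \emph{at least} $\varepsilon$), which aligns with the chosen maximality condition (distance $\geq k$, so that failure of maximality at $x$ forces the existence of $c$ with $d_H(x,c) \leq k-1$, i.e., $x \in B(c,k-1)$).
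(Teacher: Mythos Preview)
Your proof is correct and is precisely the classical Gilbert--Varshamov sphere-covering argument. The paper does not actually supply its own proof of this lemma; it simply states the bound as well known and cites the original references \cite{Gilbert52,Varshamov57}, so there is nothing further to compare against.
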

Let us recall that we defined a block vector $x = (x_i)_{i=1}^b \in \R^{bd}$ to be $(s,t)$-sparse if it has at most $s$ non-zero blocks, each of which has at most $t$ non-zero entries. In the proof of the following proposition it will be notationally convenient to identify each $x\in \R^{bd}$ with the $b\times d$ matrix having the $x_i$ as its rows. Under this identification, the matrix $x$ is $(s,t)$-sparse if it has at most $s$ non-zero rows, each of which has at most $t$ non-zero entries.
\begin{proposition}
\label{prop:sparsePacking}
Let $b,d \in \N$ with $b,d \geq 8$. For any $0<p,q,r,u\leq \infty$, $s \in [\lfloor b/8\rfloor]$, and $t \in [\lfloor d/8 \rfloor]$, there is a set $W$ of $(2s,2t)$-sparse vectors with $r_{\ell_p^b(\ell_q^d)}(W)\leq (2s)^{1/p}(2t)^{1/q}$ such that
$$\cP(W,\|\cdot\|_{\ell_r^b(\ell_u^d)},s^{1/r} (2t)^{1/u})\geq \Big(\frac{b}{32s}\Big)^s\Big(\frac{d}{8t}\Big)^{st}.$$
\end{proposition}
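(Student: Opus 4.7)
My plan is to construct $W$ as a two-level combinatorial object, invoking Lemma~\ref{lem:subsets} separately for rows and for columns and then amplifying the count with Gilbert--Varshamov (Lemma~\ref{lem:Gilbert}). First I would apply Lemma~\ref{lem:subsets} with $(n,\ell)=(b,s)$ to obtain subsets $I_1,\ldots,I_N\subset[b]$ of size $2s$ with $|I_i\cap I_{i'}|<s$ for $i\neq i'$ and $N\geq(b/(8s))^s$; applied with $(n,\ell)=(d,t)$ it gives subsets $J_1,\ldots,J_M\subset[d]$ of size $2t$ with $|J_k\cap J_{k'}|<t$ and $M\geq(d/(8t))^t$. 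For each $i\in[N]$, assuming $M\geq 2$, I then apply Lemma~\ref{lem:Gilbert} to $[M]^{I_i}$ with Hamming distance $k=s$, producing a family $\cF_i\subset[M]^{I_i}$ of maps $\phi\colon I_i\to[M]$ satisfying
$$|\cF_i|\geq\frac{M^{2s}}{\sum_{j=0}^{s-1}\binom{2s}{j}(M-1)^{j}}\geq\frac{M^{s+1}}{2^{2s-1}}=\frac{2M^{s+1}}{4^s},$$
where I use $(M-1)^{j}\leq M^{s-1}$ for $j\leq s-1$ together with the symmetry estimate $\sum_{j=0}^{s-1}\binom{2s}{j}\leq 2^{2s-1}$. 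The packing would then be $W=\{x^{(i,\phi)}:1\leq i\leq N,\;\phi\in\cF_i\}$, where $x^{(i,\phi)}_{jk}=\one_{\{j\in I_i\}}\one_{\{k\in J_{\phi(j)}\}}$.

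Each $x^{(i,\phi)}$ is $(2s,2t)$-sparse with $\|x^{(i,\phi)}\|_{\ell_p^b(\ell_q^d)}=(2s)^{1/p}(2t)^{1/q}$, giving the radius claim immediately. For the separation, two cases arise. If $i\neq i'$, then $|I_i\triangle I_{i'}|\geq 2s+2$, so at least $2s+2$ rows sit in exactly one of $I_i,I_{i'}$ and contribute a row difference of $\ell_u^d$-norm $(2t)^{1/u}$; hence $\|x^{(i,\phi)}-x^{(i',\phi')}\|_{\ell_r^b(\ell_u^d)}^{r}\geq(2s+2)(2t)^{r/u}>s\,(2t)^{r/u}$. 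If $i=i'$ and $\phi\neq\phi'$, Gilbert--Varshamov guarantees $\phi,\phi'$ disagree at $\geq s$ rows, and at each such row the difference $\one_{J_{\phi(j)}}-\one_{J_{\phi'(j)}}$ has $\ell_u^d$-norm $|J_{\phi(j)}\triangle J_{\phi'(j)}|^{1/u}>(2t)^{1/u}$ by the column intersection bound, so the $r$-th power of the distance again exceeds $s\,(2t)^{r/u}$. Thus $\|x^{(i,\phi)}-x^{(i',\phi')}\|_{\ell_r^b(\ell_u^d)}>s^{1/r}(2t)^{1/u}$ in both cases.

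The final step is to count: for $M\geq 2$,
$$|W|\geq N\cdot\min_i|\cF_i|\geq\Bigl(\frac{b}{8s}\Bigr)^s\cdot\frac{2M^{s+1}}{4^s}=2\Bigl(\frac{b}{32s}\Bigr)^s M^{s+1}\geq 2\Bigl(\frac{b}{32s}\Bigr)^s\Bigl(\frac{d}{8t}\Bigr)^{st}\Bigl(\frac{d}{8t}\Bigr)^{t},$$
and $(d/(8t))^t\geq 1$ (since $t\leq d/8$) delivers the claim. The remaining case $M=1$ forces $(d/(8t))^t=1$, hence $(d/(8t))^{st}=1$, and the bare count $|W|\geq N\geq(b/(8s))^s=4^s(b/(32s))^s$ already suffices. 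The main difficulty is the precise calibration of the Gilbert--Varshamov parameters: the Hamming distance $k=s$ must be small enough that the ball volume stays of order $M^{s-1}$, producing a family of size $M^{s+1}$ --- just enough to absorb the $4^s$ loss from passing from $(b/(8s))^s$ to $(b/(32s))^s$ --- while still being large enough that each row-wise disagreement certifies a full $(2t)^{r/u}$ contribution to the $r$-th power of the $\ell_r^b(\ell_u^d)$-distance.
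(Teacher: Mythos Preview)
Your proof is correct and follows essentially the same two-level construction as the paper: Lemma~\ref{lem:subsets} applied to both $[b]$ and $[d]$, followed by Gilbert--Varshamov on the $2s$-fold product of column patterns, with the same separation analysis splitting into the cases $i\neq i'$ (row supports differ in $\geq 2s$ places) and $i=i'$ (Hamming distance forces $\geq s$ row-wise column mismatches). The only cosmetic differences are that the paper fixes a single Gilbert--Varshamov code and transplants it to each $I_\alpha$ rather than building a code $\cF_i$ per index, and that your counting $|\cF_i|\geq 2M^{s+1}/4^s$ is marginally sharper than the paper's $M^s/4^s$; you also make the boundary case $M=1$ explicit, which the paper leaves implicit.
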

\begin{proof}
We use Lemma~\ref{lem:subsets} to find a collection $A_{d,t}$ of $(d/(8 t))^{t}$ subsets of $[d]$ such that each $I\in A_{d,t}$ has cardinality $2t$ and
$$\card(I \cap J) < t \ \ \text{ whenever } I \neq J.$$
In particular, the symmetric difference $I\Delta J$ has cardinality at least $2t$ if $I\neq J$. Set $A_i:=A_{d,t}$ for $i\in [2s]$. Applying Lemma~\ref{lem:Gilbert} to $A_1\times\cdots\times A_{2s}$ with $\ell=2s$, $\theta=(d/(8 t))^{t}$ and $k=s$, we find a set $\mathcal{A}\subset A_1\times\cdots\times A_{2s}$ with
\begin{align*}
\card(\mathcal{A}) \geq \frac{(d/(8 t))^{2st}}{\sum_{j=0}^{s-1} \binom{2s}{j} ((d/(8 t))^{t}-1)^j } \geq \frac{(d/(8 t))^{2st}}{2^{2s} (d/(8 t))^{st}} = \frac{(d/(8 t))^{st}}{4^{s}},
\end{align*}
so that for any $K=(K_1,\ldots,K_{2s})$ and $L=(L_1,\ldots,L_{2s})$ in $\mathcal{A}$ there are at least $s$ indices $i$ with $K_i\neq L_i$ and in particular $\card(K_i\cap L_i)<t$. We now define a set of $(2s,2t)$-sparse matrices in $\R^{2s\times d}$ with $\{0,1\}$-entries by
$$M=\Big\{\sum_{i=1}^{2s} \sum_{j\in K_i} e_{ij} \ : \ K\in \mathcal{A}\Big\}.$$ 
By construction, for each pair $w,z$ in $M$ with $w\neq z$ there exist at least $s$ indices $i$ so that 
\begin{equation}
\label{eqn:propertyM}
\text{supp}(w_i)\cap \text{supp}(z_i)<t.
\end{equation}
Next, using Lemma~\ref{lem:subsets} we pick a sequence $(I_{\al})$ of $(b/(8s))^s$ subsets of $[b]$ such that each $I_{\al}$ has cardinality $2s$ and
$$\card(I_{\al} \cap I_{\ga}) < s \ \ \text{ whenever } \al \neq \ga.$$
We now define $W$ to be the set of $(2s,2t)$-sparse matrices
$$W = \bigcup_{\alpha}  \Big\{x \in \R^{b\times d}: x_{I_\alpha\times[d]} \in M, x_{([b] \setminus I_\alpha)\times [d]} = 0 \Big\}.$$
Observe that 
$$\card(W) = (b/(8s))^s \card(M)\geq (b/(8s))^s \frac{(d/(8 t))^{st}}{4^{s}} = \Big(\frac{b}{32s}\Big)^s\Big(\frac{d}{8t}\Big)^{st}.$$ 
For $I_{\alpha}$ and $K\in \mathcal{A}$ let us denote
$$x_{I_{\alpha},K} = \sum_{i\in I_{\alpha}} \sum_{j\in K_i} e_{ij}.$$
By construction, 
$$\|x_{I_{\alpha},K}\|_{\ell_p^b(\ell_q^d)} = \Big(\sum_{i\in I_{\al}}\Big(\sum_{j\in K_i}\Big)^{p/q}\Big)^{1/q}\leq (2s)^{1/p}(2t)^{1/q},$$
so $r_{\ell_p^b(\ell_q^d)}(W)\leq (2s)^{1/p}(2t)^{1/q}$. Moreover, for $K,L\in \mathcal{A}$ with $K\neq L$, the property \eqref{eqn:propertyM} implies
$$\|x_{I_{\alpha},K} - x_{I_{\alpha},L}\|_{\ell_r^b(\ell_u^d)} = \Big(\sum_{i\in I_{\al}} \Big(\sum_{j\in K_i \Delta L_i}\Big)^{r/u}\Big)^{1/r} \geq s^{1/r} (2t)^{1/u}.$$
Finally, if $I_{\alpha}\neq I_{\ga}$, then   
\begin{align*}
\|x_{I_{\alpha},K} - x_{I_{\ga},L}\|_{\ell_r^b(\ell_u^d)} & = \Big(\sum_{i\in I_{\al}\cap I_{\ga}} \Big(\sum_{j\in K_i \Delta L_i}\Big)^{r/u} + \sum_{i\in I_{\al}\setminus I_{\ga}} \Big(\sum_{j\in K_i}\Big)^{r/u} \\
& \qquad \qquad + \sum_{i\in I_{\ga}\setminus I_{\al}} \Big(\sum_{j\in L_i}\Big)^{r/u}\Big)^{1/r} \geq (2s)^{1/r} (2t)^{1/u}.
\end{align*}
In conclusion, $W$ is an $s^{1/r} (2t)^{1/u}$-packing. 
\end{proof}

\section{Matching bounds for Gelfand numbers}
\label{sec:matchGel}

This section is devoted to the proof of the following extension of Theorem~\ref{thm:GelSimple}.
\begin{theorem}
\label{thm:GelMain}
Let $m\leq bd$. Then the following assertions hold:
\begin{itemize}
\item[(i)] If $0<p\leq 1$ and $p<q\leq 2$, then 
\begin{equation}
\label{eqn:GelMain1}
c_m(\id:\ell_p^b(\ell_2^d) \to \ell_q^b(\ell_2^d)) \simeq_{p,q} \min\Big\{1,\frac{\log(eb/m)+d}{m}\Big\}^{1/p-1/q}
\end{equation}
and in particular,
\begin{equation}
\label{eqn:Foucartetal}
c_m(\id:\ell_p^b(\ell_p^d) \to \ell_q^b(\ell_q^d)) \simeq_{p,q} \min\Big\{1,\frac{\log(ebd/m)}{m}\Big\}^{1/p-1/q}\,.
\end{equation}
\item[(ii)] If $0<q\leq 1$ and $q\leq p\leq 2$, then 
$$c_m(\id:\ell_p^b(\ell_q^d) \to \ell_p^b(\ell_p^d)) \simeq_{p,q} \min\Big\{1,\frac{b\log(ebd/m)}{m}\Big\}^{1/q-1/p}.$$
\item[(iii)] Set $0<q\leq 1$ and $q\leq p\leq 1$. Then, there is a constant $c_{p,q}$, so that
\begin{align*}
 c_m(\id:\ell_p^b(\ell_q^d) \to \ell_2^b(\ell_2^d)) \simeq_{p,q}
 \begin{cases}
  1 &: 1 \leq m \leq c_{p,q}\log\Big(\frac{ebd}{m}\Big)\,,\\
  \left(\frac{\log(ebd/m)}{m}\right)^{1/p-1/2} &: c_{p,q}\log\Big(\frac{ebd}{m}\Big) \leq m \leq c_{p,q}b\log\Big(\frac{ebd}{m}\Big)\,,\\
  b^{1/2-1/p} \left(\frac{b\log(ebd/m)}{m}\right)^{1/q-1/2} &: c_{p,q}b\log\Big(\frac{ebd}{m}\Big) \leq m \leq bd\,.
 \end{cases}
\end{align*}
\end{itemize}
\end{theorem}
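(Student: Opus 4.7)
The plan is to prove matching upper and lower bounds by combining the two general frameworks from Sections 3--5. For the upper bounds I use Proposition~\ref{prop:GGwidth}, which reduces matters to estimating a Gaussian width $w(K_\rho^{\|\cdot\|_2})$ of an appropriate subset of the $\ell_2$-sphere. For the lower bounds I use Proposition~\ref{prop:GelEstPack} together with the sparse packing of Proposition~\ref{prop:sparsePacking}, arguing by contradiction; here the equivalence between Gelfand numbers and Gelfand widths (Remark~\ref{rem:Gelfand_width}, applicable since source and target have the same finite dimension $bd$) transfers the packing lower bound to $c_m(\id:X \to Y)$.

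For (i), I take $K = B_{\ell_p^b(\ell_2^d)}$, $Y = \ell_q^b(\ell_2^d)$. The outer Stechkin inequality \eqref{eqn:StechOuter} lets me approximate every $x \in K$ by an $s$-outer-sparse vector with $\|\cdot\|_Y$-error at most $s^{-(1/p-1/q)}$, so up to a tail absorbed into $\rho$, the set $K_\rho^{\|\cdot\|_2}$ lies in the intersection of the $\ell_2$-sphere with the $s$-outer-sparse set. The Gaussian width of this sparse set is of order $\sqrt{s(\log(eb/s)+d)}$, by a union bound over the $\binom{b}{s}$ outer supports (each contributing an $sd$-dimensional unit Euclidean ball). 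Optimizing $s$ in \eqref{eqn:mGGwidth} yields the claimed rate. For (ii) I argue analogously using the inner Stechkin inequality \eqref{eqn:StechInner}: each element of $B_{\ell_p^b(\ell_q^d)}$ admits a $(b,t)$-sparse approximant (all $b$ blocks present, each of inner sparsity $t$) with $\ell_p^b(\ell_p^d)$-error $t^{-(1/q-1/p)}$, and the Gaussian width of the $(b,t)$-sparse subset of the $\ell_2$-sphere is of order $\sqrt{bt\log(ed/t)}$.

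For (iii) I need a two-parameter decomposition: split $x$ via an outer-$s$-sparsest approximant and, inside each retained block, an inner-$t$-sparsest approximant, yielding an $(s,t)$-sparse principal part whose Gaussian width on the $\ell_2$-sphere is of order $\sqrt{s\log(eb/s)+st\log(ed/t)}$. The three regimes emerge from the optimization in $(s,t)$: for very small $m$ only the trivial bound $c_m \le \|\id\|$ is available; for intermediate $m$ the optimum is attained at entry-level sparsity across arbitrarily many blocks and the estimate reduces to \eqref{eqn:Foucartetal} applied with $n=bd$; for large $m$ the optimum sits at $s=b$ (full outer support) with per-block inner sparsity $t$, producing the part-(ii)-type rate inflated by the embedding factor $b^{1/2-1/p}$ coming from $\ell_p^b \hookrightarrow \ell_2^b$ in the outer coordinate.

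For the lower bounds I assume, for contradiction, that $c_m(\id:X \to Y)$ is strictly below a small multiple of the claimed rate and pick a collection $\cS$ of supports mirroring the dominant sparsity pattern: outer-$2s$-sparse for (i), per-block inner-$2t$-sparse for (ii), and $(2s,2t)$-sparse for (iii). The parameter $\ka_{X,Y,\cS_+}$ is read off directly from the structure of $\cS$, so Proposition~\ref{prop:GelEstPack} upper-bounds the packing number of the sparse piece of $B_X$ defined by $\cS$, while Proposition~\ref{prop:sparsePacking} (with $r_{\ell_p^b(\ell_q^d)}$ normalized and the appropriate choice of $(s,t)$) lower-bounds the same quantity; the two bounds contradict each other unless $m$ meets the claimed threshold. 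The hardest step is the Gaussian-width estimation in (iii), where three contributions — outer support size, inner support size and the ambient $d$-block dimension — must be balanced so that the optimal $(s,t)$ transitions cleanly between regimes; matching the break-point constants $c_{p,q}$ on both sides requires the packing optimization in $(s,t)$ to land at exactly the same thresholds that appear on the upper-bound side, which is essentially forced by the sharpness of Stechkin's inequality but demands some care in tracking constants.
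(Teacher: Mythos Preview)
Your overall strategy for part (i) and for the lower bounds matches the paper's, but there is a genuine gap in the upper-bound argument for (i), and for (ii) and (iii) the paper takes a substantially simpler route than the one you outline.

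For the upper bound in (i), you claim that Stechkin's inequality \eqref{eqn:StechOuter} shows that ``up to a tail absorbed into $\rho$, the set $K_\rho^{\|\cdot\|_2}$ lies in the intersection of the $\ell_2$-sphere with the $s$-outer-sparse set.'' This is not correct: elements of $K_\rho$ are \emph{not} outer-sparse, and being close to a sparse vector in the $\ell_q^b(\ell_2^d)$-norm does not imply that the $\ell_2$-normalization lies in (or near, in a sense that controls Gaussian width) the set of sparse unit vectors. The paper instead uses H\"older interpolation on the block norms: if $x\in B_{\ell_p^b(\ell_2^d)}$ and $\|x\|_{\ell_q^b(\ell_2^d)}>\rho$, then from $\|\cdot\|_{\ell_q(\ell_2)}\le\|\cdot\|_{\ell_p(\ell_2)}^{\theta}\|\cdot\|_{\ell_2(\ell_2)}^{1-\theta}$ and $\|\cdot\|_{\ell_1(\ell_2)}\le\|\cdot\|_{\ell_p(\ell_2)}^{\eta}\|\cdot\|_{\ell_2(\ell_2)}^{1-\eta}$ one obtains $\big\|x/\|x\|_2\big\|_{\ell_1^b(\ell_2^d)}\le \rho^{-\eta/(1-\theta)}=\sqrt{s}$ for $\rho=s^{-(1/p-1/q)}$. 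Hence $K_\rho^{\|\cdot\|_2}\subset D_{b,d,s}:=\{y:\|y\|_2\le1,\ \|y\|_{\ell_1^b(\ell_2^d)}\le\sqrt{s}\}$, and a separate lemma (Lemma~\ref{lem:CHsparseBlock}) shows $D_{b,d,s}\subset 2\,\mathrm{conv}(L_{b,d,s})$ with $L_{b,d,s}$ the $s$-outer-sparse part of the $\ell_2$-ball. Only then does the Gaussian-width bound $\sqrt{s\log(eb/s)}+\sqrt{sd}$ apply. The passage through $D_{b,d,s}$ via interpolation is the missing piece; Stechkin alone does not furnish it.

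For (ii) and (iii) the paper does \emph{not} run the Gaussian-width machinery for the upper bounds. In (ii) it simply factors $\ell_p^b(\ell_q^d)\to\ell_q^b(\ell_q^d)\to\ell_p^b(\ell_p^d)$, paying $b^{1/q-1/p}$ on the first arrow and invoking \eqref{eqn:Foucartetal} on the second. In (iii), the first two regimes use the norm-one embedding $\ell_p^b(\ell_q^d)\hookrightarrow\ell_p^b(\ell_p^d)$ together with \eqref{eqn:Foucartetal}, and the third regime is obtained by factoring through $\ell_q^b(\ell_q^d)$ (upper bound) and $\ell_2^b(\ell_q^d)$ (lower bound) and quoting the already-established parts (i) and (ii). Your proposed two-parameter $(s,t)$ Gaussian-width optimization is therefore unnecessary here, and carrying it out rigorously would require the same kind of interpolation step you are missing in (i). On the lower-bound side your plan is essentially the paper's; note only that in the middle regime of (iii) the packing of Proposition~\ref{prop:sparsePacking} is applied with $t=1$ (i.e., $(2s,2)$-sparse supports), not with a general $(s,t)$.
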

Clearly, (\ref{eqn:Foucartetal}) immediately follows from (\ref{eqn:GelMain1}) by taking $d=1$ and `$b=bd$'. 
The bounds in (\ref{eqn:Foucartetal}) were obtained earlier in \cite{FPR10,Vyb08}.
\begin{remark}
The two-sided estimate in (iii) in the case $c_{p,q}b\log\Big(\frac{ebd}{m}\Big) \leq m \leq bd$ remains valid if $0<q\leq 1$ and $q\leq p\leq 2$. This will be clear from the proof.
\end{remark} 
\begin{remark}
Our proof of the upper bounds in Theorem~\ref{thm:GelMain} is non-constructive. As a matter of fact, our proof shows (see in particular the proof of Proposition~\ref{prop:GGwidth}) 
that the two-sided bounds in Theorem~\ref{thm:GelMain} are with high probability attained by the kernel of an $m\times bd$ standard Gaussian matrix. 
\end{remark} 
We start by proving the upper bound in (\ref{eqn:GelMain1}). Consider the set 
$$D_{b,d,s} = \{x\in \R^{bd} \ : \ \|x\|_{\ell_1^b(\ell_2^d)}\leq \sqrt{s}, \|x\|_{\ell_2^b(\ell_2^d)}\leq 1\}.$$
We will need the following observation, which is known in the case $d=1$ (see \cite[Lemma 3.1]{PlV13a} and \cite[Lemma 2.9]{PlV13b}).
\begin{lemma}
\label{lem:CHsparseBlock}
Let $L_{b,d,s}$ be the set of all $s$-outer sparse vectors in $B_{\ell_2^b(\ell_2^d)}$. Then, 
\begin{equation}
\label{eqn:CHsparseBlock}
\mathrm{conv}(L_{b,d,s})\subset D_{b,d,s} \subset 2\mathrm{conv}(L_{b,d,s}).
\end{equation}
As a consequence,
\begin{equation}
\label{eqn:wDnsEst}
w(D_{b,d,s}) \lesssim \sqrt{s\log(eb/s)} + \sqrt{sd}.
\end{equation}
\end{lemma}
\begin{proof}
The first inclusion in (\ref{eqn:CHsparseBlock}) follows immediately, since by Cauchy-Schwarz, $\|x\|_{\ell_1^b(\ell_2^d)}\leq \sqrt{s}$ if $x\in L_{b,d,s}$. 
For the second inclusion, let $x\in D_{b,d,s}$. We write $x=\sum_{k\geq 0}x_{S_k\ti[d]}$, where the $S_k$ are disjoint subsets of $[b]$, 
satisfying $|S_k|=s$ and $\|x_{i\cdot}\|_{\ell_2^d}\geq \|x_{j\cdot}\|_{\ell_2^d}$ if $i\in S_{k-1}$ and $j \in S_k$. In particular, for $k\geq 1$,
$$\|x_{S_k\ti[d]}\|_{\ell_2^b(\ell_2^d)}\leq \sqrt{s} \max_{i\in S_k} \|x_{i\cdot}\|_{\ell_2^d} \leq \sqrt{s} \min_{j\in S_{k-1}} \|x_{j\cdot}\|_{\ell_2^d} \leq \frac{1}{\sqrt{s}}\|x_{S_{k-1}\ti[d]}\|_{\ell_1^b(\ell_2^d)}.$$
Note, moreover, that $\|x_{S_0\ti[d]}\|_{\ell_2^b(\ell_2^d)}\leq \|x\|_{\ell_2^b(\ell_2^d)}\leq 1$ and therefore
$$\sum_{k\geq 0}\|x_{S_k\ti[d]}\|_{\ell_2^b(\ell_2^d)} \leq 1 + \sum_{k\geq 1}\|x_{S_k\ti[d]}\|_{\ell_2^b(\ell_2^d)} \leq 1 + \sum_{k\geq 0}\frac{1}{\sqrt{s}}\|x_{S_k\ti[d]}\|_{\ell_1^b(\ell_2^d)}
= 1+\frac{1}{\sqrt{s}}\|x\|_{\ell_1^b(\ell_2^d)} \leq 2.$$
Writing
$$x = \sum_{k\geq 0} \|x_{S_k\ti[d]}\|_{\ell_2^b(\ell_2^d)}\frac{x_{S_k\ti[d]}}{\|x_{S_k\ti[d]}\|_{\ell_2^b(\ell_2^d)}}$$
and noting that $x_{S_k\ti[d]}/\|x_{S_k\ti[d]}\|_{\ell_2^b(\ell_2^d)}$ is in $L_{b,d,s}$, we conclude that the second inclusion in (\ref{eqn:CHsparseBlock}) holds. This immediately implies that 
$$w(D_{b,d,s}) \leq 2\E\sup_{y \in L_{b,d,s}} \langle y,g\rangle = 2\E\sup_{S\in \cS} \|g_S\|_{\ell_2^b(\ell_2^d)} = 2\E\sup_{S\in \cS} (\|g_S\|_{\ell_2^b(\ell_2^d)} - \E \|g_S\|_{\ell_2^b(\ell_2^d)}) +  2\sup_{S\in \cS}\E\|g_S\|_{\ell_2^b(\ell_2^d)}, $$
where $g$ is a standard Gaussian vector of length $bd$ and 
$$\cS = \{\tilde{S}\ti[d] \ : \ \tilde{S}\subset [b], \ \card(\tilde{S})=s\}$$ 
is the set of all $s$-outer sparse support sets in $[b]\ti [d]$. Clearly,
$$\sup_{S\in \cS}\E\|g_S\|_{\ell_2^b(\ell_2^d)} \leq \sup_{S\in \cS}(\E\|g_S\|_{\ell_2^b(\ell_2^d)}^2)^{1/2} = \sqrt{sd}.$$
By the Gaussian concentration inequality for Lipschitz functions (see e.g.\ \cite[Theorem 5.6]{BLM13}), for any $S\in \cS$,
$$\bP(|\|g_S\|_{\ell_2^b(\ell_2^d)} - \E\|g_S\|_{\ell_2^b(\ell_2^d)}|\geq t)\leq 2e^{-t^2/2} \qquad (t>0).$$
Thus, the $\|g_S\|_{\ell_2^b(\ell_2^d)} - \E\|g_S\|_{\ell_2^b(\ell_2^d)}$ are mean-zero and subgaussian and therefore (see e.g.\ \cite[Proposition 7.29]{FoR13}),
$$\E\sup_{S\in \cS} (\|g_S\|_{\ell_2^b(\ell_2^d)} - \E \|g_S\|_{\ell_2^b(\ell_2^d)}) \lesssim (\log\card(\cS))^{1/2} = \Big(\log {b \choose s}\Big)^{1/2} \leq \sqrt{s\log(eb/s)}.$$
Combining these estimates yields (\ref{eqn:wDnsEst}).
\end{proof}
We will also use the following technical lemma. It formalizes, in a very special case, the idea that the functions $x \mapsto x/\log(1/x)$ and $y \mapsto y \log(1/y)$ can be treated as `inverse functions up to constants'. 
\begin{lemma}
\label{lem:invert}
Let $C\geq 1$, $x>0$, $0<y\leq K$ and 
$$x\leq \frac{1}{Ce} \frac{y}{\log(eK/y)}.$$
Then,
$$y\geq \frac{Ce}{1+\log(Ce)} x\log(eK/x).$$
\end{lemma}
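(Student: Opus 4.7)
The plan is to reduce the claim to an elementary one-variable convexity estimate. Write $A := Ce \geq e$ for brevity. The hypothesis then reads $A x \log(eK/y) \leq y$, while the conclusion we want is $A x \log(eK/x) \leq (1+\log A) y$. So after decomposing $\log(eK/x) = \log(eK/y) + \log(y/x)$ and applying the hypothesis to absorb the first term, it will suffice to establish
$$
A x \log(y/x) \leq (\log A)\, y, \qquad \text{i.e.,} \qquad \log t \leq \frac{\log A}{A}\, t, \quad \text{where } t := y/x.
$$

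To justify that this reduction is legal, I would first observe that since $y \leq K$ we have $\log(eK/y) \geq 1$, so the hypothesis already implies $y \geq A x$, i.e., $t \geq A \geq e$. Thus the one-variable inequality only needs to hold on $[A,\infty)$.

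For the one-variable step, set $g(t) := (\log A / A)\, t - \log t$. A direct check gives $g(A) = 0$, and $g'(t) = (\log A)/A - 1/t$ is nonnegative precisely when $t \geq A/\log A$; since $A \geq e$ we have $A/\log A \leq A$, so $g$ is nondecreasing on $[A,\infty)$ and hence $g(t) \geq 0$ there. Feeding this back yields $\log(y/x) \leq (\log A/A)(y/x)$, which combined with the decomposition of $\log(eK/x)$ and the hypothesis gives exactly the required bound. There is no real obstacle here beyond bookkeeping; the only point that requires any thought is noticing that $y \leq K$ is precisely what forces $t \geq A$ and thereby places us on the right branch of the calculus inequality.
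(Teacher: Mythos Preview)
Your proof is correct and follows essentially the same route as the paper's: both decompose $\log(eK/x)=\log(eK/y)+\log(y/x)$, use $y\leq K$ to get $y/x\geq Ce$, and then invoke a one-variable calculus inequality to control the $\log(y/x)$ term. The only cosmetic difference is that the paper works with $s=x/y\in(0,1/e]$ and the monotonicity of $s\mapsto s\log s$ on that interval, whereas you work with $t=y/x\geq A$ and the inequality $\log t\leq (\log A/A)\,t$; these are the same estimate under the substitution $s=1/t$.
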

\begin{proof}
As $y\leq K$, we find 
$$\frac{x}{y}\leq \frac{1}{Ce} \frac{1}{\log(eK/y)} \leq \frac{1}{Ce}.$$
Since $t\mapsto t\log(t)$ is decreasing on $[0,1/e]$, we obtain
\begin{align*}
y\geq Ce x\log(eK/y) & = Ce x\log(eK/x)+Ce y\frac{x}{y}\log(x/y) \\
& \geq Ce x\log(eK/x)+Ce y\frac{1}{Ce}\log(1/(Ce)) = Ce x\log(eK/x) - y\log(Ce).
\end{align*}
Rearranging yields the asserted bound.
\end{proof}
\begin{proposition}
Set $0<p\leq 1$ and $p<q\leq 2$. There is an absolute constant $C\geq 1$ such that for all $m\leq bd$ we have
$$c_m(\id:\ell_p^b(\ell_2^d) \to \ell_q^b(\ell_2^d)) \leq \min\Big\{1,Ce\frac{\log(eb/m)+d}{m}\Big\}^{1/p-1/q}.$$
\end{proposition}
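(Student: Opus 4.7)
The plan is to apply Proposition~\ref{prop:GGwidth} with $K=B_{\ell_p^b(\ell_2^d)}$ and $\|\cdot\|_Y=\|\cdot\|_{\ell_q^b(\ell_2^d)}$, so it suffices to bound $w(K_\rho^{\|\cdot\|_2})$. Using the joint positive homogeneity of the three quasi-norms involved, I first identify
\begin{equation*}
    K_\rho^{\|\cdot\|_2} = \{y\in S^{bd-1}:\rho\,\|y\|_{\ell_p^b(\ell_2^d)}<\|y\|_{\ell_q^b(\ell_2^d)}\}.
\end{equation*}
Applied to the row norms $a_i=\|y_{i\cdot}\|_{\ell_2^d}$, H\"older's inequality gives the three-norm interpolation bound
\begin{equation*}
    \|y\|_{\ell_q^b(\ell_2^d)}\leq \|y\|_{\ell_p^b(\ell_2^d)}^{\eta}\,\|y\|_{\ell_2^b(\ell_2^d)}^{1-\eta},\qquad \eta=\tfrac{1/q-1/2}{1/p-1/2},
\end{equation*}
with constant $1$ for all $0<p<q\leq 2$ (the required H\"older exponents $p/(q\eta)$ and $2/(q(1-\eta))$ are both $\geq 1$ precisely because $p<q\leq 2$). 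Combining this with $\|y\|_{\ell_2^b(\ell_2^d)}=1$ and the defining inequality of $K_\rho^{\|\cdot\|_2}$ forces $\|y\|_{\ell_p^b(\ell_2^d)}\leq s^{1/p-1/2}$ where $s:=\rho^{-1/(1/p-1/q)}$; hence $K_\rho^{\|\cdot\|_2}$ is contained in the set $\{y:\|y\|_{\ell_2^b(\ell_2^d)}\leq 1,\,\|y\|_{\ell_p^b(\ell_2^d)}\leq s^{1/p-1/2}\}$.

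The heart of the argument is to extend Lemma~\ref{lem:CHsparseBlock} to the full quasi-Banach range $0<p\leq 1$, showing that this set is contained in $2\operatorname{conv}(L_{b,d,s}\cup\{0\})$ and therefore has Gaussian width $\lesssim \sqrt{s\log(eb/s)}+\sqrt{sd}$. The level-set peeling from the proof of Lemma~\ref{lem:CHsparseBlock} applies almost verbatim: splitting $y=\sum_{k\geq 0}y_{S_k\times[d]}$ along outer level sets $S_k$ of size $s$ (sorted by decreasing $a_i$) one gets $\|y_{S_k\times[d]}\|_{\ell_2^b(\ell_2^d)}\leq s^{1/2-1/p}\|y_{S_{k-1}\times[d]}\|_{\ell_p^b(\ell_2^d)}$ for $k\geq 1$. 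The new ingredient, which replaces the $p=1$ identity $\sum_k\|x_{S_k\times[d]}\|_{\ell_1^b(\ell_2^d)}=\|x\|_{\ell_1^b(\ell_2^d)}$, is the $p$-subadditivity estimate
\begin{equation*}
    \sum_{k\geq 0}\|y_{S_k\times[d]}\|_{\ell_p^b(\ell_2^d)}\leq \Bigl(\sum_{k\geq 0}\|y_{S_k\times[d]}\|_{\ell_p^b(\ell_2^d)}^{p}\Bigr)^{\!1/p}=\|y\|_{\ell_p^b(\ell_2^d)}\leq s^{1/p-1/2},
\end{equation*}
which holds precisely because $(\sum a_k)^p\leq\sum a_k^p$ for non-negative $a_k$ and $0<p\leq 1$. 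Summing then yields $\sum_{k\geq 0}\|y_{S_k\times[d]}\|_{\ell_2^b(\ell_2^d)}\leq 2$, and the claimed Gaussian-width bound follows exactly as in the proof of Lemma~\ref{lem:CHsparseBlock}.

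To close, I would choose $s=m/(Ce(\log(eb/m)+d))$, which by construction matches the target value of $\rho$. The hypothesis $m\gtrsim w(K_\rho^{\|\cdot\|_2})^2\lesssim s(\log(eb/s)+d)$ of Proposition~\ref{prop:GGwidth} reduces, after computing $\log(eb/s)=\log(eb/m)+\log(Ce(\log(eb/m)+d))$ and using the elementary bound $\log(Ce(L+d))/(L+d)\leq \log(Ce)$ for $L+d\geq 1$, to an inequality of the form $Ce\geq C_0(1+\log(Ce))$, where $C_0$ absorbs the absolute constants from Proposition~\ref{prop:GGwidth} and from the extended Lemma~\ref{lem:CHsparseBlock}. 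This is satisfied for any sufficiently large absolute $C\geq 1$; in the complementary regime $Ce(\log(eb/m)+d)/m\geq 1$ the asserted bound is just $c_m\leq 1$, which always holds. The main technical difficulty is the quasi-Banach extension of Lemma~\ref{lem:CHsparseBlock}; a pleasant feature of the argument is that all relevant constants (interpolation, peeling, Gordon) can be chosen independently of $p$ and $q$, which is what makes the constant $C$ in the statement absolute.
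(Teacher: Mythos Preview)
Your proof is correct and follows the same overall strategy as the paper: apply Proposition~\ref{prop:GGwidth}, use interpolation of the mixed norms to show that $K_\rho^{\|\cdot\|_2}$ sits inside (twice) the convex hull of $s$-outer sparse unit vectors, invoke the Gaussian-width bound~\eqref{eqn:wDnsEst}, and finally choose $s\simeq m/(\log(eb/m)+d)$.

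There is one noteworthy technical difference. The paper performs \emph{two} interpolations: first $\ell_q$ between $\ell_p$ and $\ell_2$ to bound $\|x\|_2^{-1}$, and then $\ell_1$ between $\ell_p$ and $\ell_2$ to bound $\|y\|_{\ell_1^b(\ell_2^d)}$. This lands directly in the set $D_{b,d,s}$ of Lemma~\ref{lem:CHsparseBlock}, which is then used as a black box. You instead perform a single interpolation to obtain the bound $\|y\|_{\ell_p^b(\ell_2^d)}\leq s^{1/p-1/2}$ and then re-run the peeling argument of Lemma~\ref{lem:CHsparseBlock} for the $\ell_p$-norm, exploiting that $\sum_k\|y_{S_k}\|_p\leq(\sum_k\|y_{S_k}\|_p^p)^{1/p}=\|y\|_p$ when $0<p\leq1$. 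Both routes arrive at the same convex hull and the same Gaussian-width estimate; yours is arguably slightly more direct, while the paper's has the advantage of using Lemma~\ref{lem:CHsparseBlock} unmodified. Your final parameter verification via the elementary inequality $\log(Ce(L+d))/(L+d)\leq\log(Ce)$ for $L+d\geq1$ and $C\geq1$ replaces the paper's use of Lemma~\ref{lem:invert}; this is a cosmetic difference. One small omission: $s$ must be a positive integer for the peeling blocks $S_k$ to make sense, so you should take $s=\lceil\rho^{-1/(1/p-1/q)}\rceil$ (or equivalently floor your expression for $s$ in terms of $m$, as the paper does) and absorb the rounding into the constant.
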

\begin{proof}
If $m\leq Ce(\log(eb/m)+d)$ then we can trivially bound 
$$c_m(\id:\ell_p^b(\ell_2^d) \to \ell_q^b(\ell_2^d))\leq \|\id:\ell_p^b(\ell_2^d) \to \ell_p^b(\ell_2^d)\| = 1,$$
which is the desired estimate.\par
Assume now that $m>Ce(\log(eb/m)+d)$. We apply Proposition~\ref{prop:GGwidth} for 
$K=B_{\ell_p^b(\ell_2^d)}$ and $Y=\ell_q^b(\ell_2^d)$ to deduce the result. 
To this end, we determine for a given $s \in \N$ a suitable $0 < \rho < 1$ such that $K_{\rho}^{\|\cdot \|_2}\subset D_{b,d,s}$. 
This will allow for an upper estimate of the Gaussian width $w(K_{\rho}^{\|\cdot \|_2})$ in terms of $s$ based on Lemma~\ref{lem:CHsparseBlock}. 
Afterwards, we choose $s$ appropriately for the given $m$. Let $0\leq \theta\leq 1$ satisfy
$1/q = \theta/p + (1-\theta)/2$,
that is
$$\theta = \Big(\frac{1}{q}-\frac{1}{2}\Big)\Big(\frac{1}{p}-\frac{1}{2}\Big)^{-1}.$$
If $x\in B_{\ell_p^b(\ell_2^d)}$, then for any $x\in K_{\rho}$ we have
$$\rho<\|x\|_{\ell_q^b(\ell_2^d)}\leq \|x\|_{\ell_p^b(\ell_2^d)}^{\theta}\|x\|_{\ell_2^b(\ell_2^d)}^{1-\theta}\leq \|x\|_{\ell_2^b(\ell_2^d)}^{1-\theta}$$
or 
$$\|x\|_{\ell_2^b(\ell_2^d)}^{-1}\leq \rho^{-1/(1-\theta)}.$$
Let $0\leq \eta\leq 1$ satisfy 
$1 = \eta/p + (1-\eta)/2$,
that is,
$$\eta = \frac{1}{2}\Big(\frac{1}{p}-\frac{1}{2}\Big)^{-1}.$$
Then, for any $y=x/\|x\|_{\ell_2^b(\ell_2^d)} \in K_{\rho}^{\|\cdot \|_2}$ we have
$$\|y\|_{\ell_1^b(\ell_2^d)} = \frac{\|x\|_{\ell_1^b(\ell_2^d)}}{\|x\|_{\ell_2^b(\ell_2^d)}} \leq \frac{\|x\|_{\ell_p^b(\ell_2^d)}^{\eta}\|x\|_{\ell_2^b(\ell_2^d)}^{1-\eta}}{\|x\|_{\ell_2^b(\ell_2^d)}}\leq \|x\|_{\ell_2^b(\ell_2^d)}^{-\eta} \leq \rho^{-\eta/(1-\theta)}.$$
Note that 
$$\frac{\eta}{1-\theta} = \frac{1}{2}\Big(\frac{1}{p}-\frac{1}{q}\Big)^{-1}.$$
Therefore, if we set $\rho=s^{-(1/p-1/q)}$, then $K_{\rho}^{\|\cdot\|_2}\subset D_{b,d,s}$ and by Lemma~\ref{lem:CHsparseBlock},
$$w(K_{\rho}^{\|\cdot\|_2})\lesssim \sqrt{s\log(eb/s)} + \sqrt{sd}.$$
Proposition~\ref{prop:GGwidth} implies that
$$c_m(\id:\ell_p^b(\ell_2^d) \to \ell_q^b(\ell_2^d)) \leq s^{-(1/p-1/q)}$$
provided that $s$ satisfies
\begin{equation*}
m\gtrsim s\log(eb/s) + sd.
\end{equation*}
Set $C\geq 1$. Define $s=\lfloor m/(Ce(\log(eb/m)+d))\rfloor$, so that $s\in \N$ and 
$$s\leq \frac{m}{Ce(\log(eb/m)+d)} = \frac{m}{Ce\log(ebe^d/m)}.$$
Since $m\leq be^d$, we can now apply Lemma~\ref{lem:invert} with $K=be^d$ to obtain 
$$m\geq \frac{Ce}{1+\log(Ce)}s\log(ebe^d/s) = \frac{Ce}{1+\log(Ce)}(s\log(eb/s) + sd).$$
Thus, the desired condition on $m$ is satisfied if we pick $C\geq 1$ large enough. 
\end{proof}
We now prove the lower bound in (\ref{eqn:GelMain1}).
\begin{proposition}\label{res:GelLowerBound}
Set $0<p<q\leq 2$. There exist constants $c_{p,q}$ and $c_p$ such that for all $m\leq bd$ we have
$$c_m(\id:\ell_p^b(\ell_2^d)\to \ell_q^b(\ell_2^d)) \geq c_{p,q}\min\Big\{1,\frac{\frac{1}{2}c_p(\log(b/m)+(8e)^{-1}d)}{m}\Big\}^{1/p-1/q}.$$ 
\end{proposition}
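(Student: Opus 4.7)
The plan is to run the proof-by-contradiction strategy of \cite{FPR10} through Proposition~\ref{prop:GelEstPack}. The key point is that Proposition~\ref{prop:sparsePacking}, as stated, produces only a $\log d$ term in the exponent of the packing (since its rows are Boolean, each carrying $\log d$ bits of entropy), whereas we need a genuine $d$. I therefore build a finer packing whose nonzero rows range over an exponentially large $\ell_2$-separated subset of $S^{d-1}$.

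\emph{Packing construction.} Applying Lemma~\ref{lem:Gilbert} to the Hamming cube $\{\pm 1/\sqrt d\}^d\subset S^{d-1}$, I fix a set $\cN\subset S^{d-1}$ with $|\cN|\geq \exp(d/(8e))$ and mutual $\ell_2$-separation at least some absolute $c_0>0$. For $s\leq b/32$ (to be chosen later), Lemma~\ref{lem:subsets} supplies $(b/(8s))^s$ subsets $I_\al\subset [b]$ of size $2s$ with pairwise intersection below $s$, and Lemma~\ref{lem:Gilbert} applied to $\cN^{2s}$ with Hamming radius $s$ extracts $\cB\subset \cN^{2s}$ of size at least $|\cN|^{s+1}/4^s\geq \exp(sd/(8e))/4^s$. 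For each pair $(\al,K)\in\{I_\al\}\times\cB$ place the $2s$ rows of $K$ on $I_\al$ and zero elsewhere to produce $x_{\al,K}\in\R^{b\times d}$; let $W$ be the resulting collection. As in the proof of Proposition~\ref{prop:sparsePacking},
\[
  |W|\geq (b/(32s))^s\exp(sd/(8e)),\qquad \|x_{\al,K}\|_{\ell_p^b(\ell_2^d)}=(2s)^{1/p},
\]
and the separation argument splits into the two cases $\al=\ga$ (the Hamming separation of $\cB$ gives $\geq s$ differing rows, each contributing $\geq c_0$ in $\ell_2$) and $\al\neq\ga$ (Lemma~\ref{lem:subsets} gives $|I_\al\setminus I_\ga|\geq s$ rows where one vector has a unit row and the other is zero), yielding $\|x_{\al,K}-x_{\ga,L}\|_{\ell_p^b(\ell_2^d)}\geq c_p\,s^{1/p}$ for $x_{\al,K}\neq x_{\ga,L}$, with $c_p>0$ depending only on $p$.

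\emph{Contradiction and inversion.} Set $X=\ell_p^b(\ell_2^d)$, $Y=\ell_q^b(\ell_2^d)$, and take $\cS$ to be the family of $2s$-outer-sparse supports, so $\cS_+$ consists of $4s$-outer-sparse supports and H\"older in the outer index gives $\ka_{X,Y,\cS_+}\leq (4s)^{1/p-1/q}$. Suppose for contradiction that $c_m(\Id:X\to Y)<(2\al_X)^{-1}(4s)^{-(1/p-1/q)}$. Proposition~\ref{prop:GelEstPack} applied to $U=W$ with $\eps=c_p s^{1/p}$ then yields
\[
  s\bigl(\log(b/(32s))+d/(8e)\bigr)\leq \log|W|\leq m\log\!\Bigl(\al_X+\tfrac{2\al_X^2\beta_X(2s)^{1/p}}{c_p s^{1/p}}\Bigr)\leq C_p\,m,
\]
where, crucially, the ratio $r_X(W)/\eps=2^{1/p}/c_p$ is a pure $p$-constant, so the right-hand side is $O(m)$ rather than $O(m\log s)$. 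Introducing $K:=(b/32)\exp(d/(8e))$, this reads $s\log(K/s)\leq C_p m$. Choosing $s:=\lceil c_p''m/\log(eK/m)\rceil$ with $c_p''$ small enough (depending only on $C_p$) and invoking Lemma~\ref{lem:invert} gives $s\log(eK/s)>C_p m$, contradicting the previous display. Unravelling, and using $\log(eK/m)\asymp \log(b/m)+d/(8e)$, delivers
\[
  c_m(\Id:X\to Y)\gtrsim_{p,q}(4s)^{-(1/p-1/q)}\gtrsim_{p,q}\Bigl(\frac{\log(b/m)+d/(8e)}{m}\Bigr)^{1/p-1/q},
\]
as claimed. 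The trivial regime in which the $\min$ in the statement equals $1$ (very small $m$) is absorbed by monotonicity of $c_m$ after suitably shrinking $c_{p,q}$, and values of $m$ so large that $s$ would exceed $b/32$ are harmless since the $\min$-clamping then forces a constant lower bound that can be absorbed into $c_{p,q}$.

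\emph{Main difficulty.} The principal hurdle is the packing step: Proposition~\ref{prop:sparsePacking} is not sharp enough for this purpose, and one must upgrade its Boolean rows to the spherical-cap packing $\cN$ on $S^{d-1}$ and then run a Gilbert--Varshamov product argument on $\cN^{2s}$ that simultaneously preserves a constant $X$-norm radius-to-separation ratio. This ratio control is precisely what keeps the right-hand side of Proposition~\ref{prop:GelEstPack} linear in $m$ and, via Lemma~\ref{lem:invert}, what produces the correct asymptotic order $(\log(b/m)+d)/m$ rather than $(\log(bd)/m)$ or $(\log(b/m)+\log d)/m$.
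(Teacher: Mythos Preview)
Your overall strategy---contradiction via Proposition~\ref{prop:GelEstPack} applied to a packing of outer-sparse vectors whose radius-to-separation ratio in $\ell_p^b(\ell_2^d)$ is a pure $p$-constant---is exactly the paper's approach. Two points deserve comment, however.

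First, your claim that Proposition~\ref{prop:sparsePacking} is insufficient is mistaken. The paper applies it directly with inner sparsity $t=d/(8e)$ (not $t$ of constant order): then $(d/(8t))^{st}=e^{sd/(8e)}$, which is precisely the exponential-in-$d$ packing size you were after, and with $u=q=2$ the radius-to-separation ratio is $(2s)^{1/p}(2t)^{1/2}\big/\bigl(s^{1/p}(2t)^{1/2}\bigr)=2^{1/p}$, a pure $p$-constant. Your spherical-code construction is a perfectly valid alternative, but the detour is unnecessary.

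Second, the inversion step is garbled. From $s\log(K/s)\leq C_p m$ you want a choice of $s$ violating this inequality; but with $s=c_p''\,m/\log(eK/m)$ and $L:=\log(eK/m)$ one computes $s\log(K/s)\approx c_p''\,m\bigl(1+L^{-1}\log(L/c_p'')\bigr)$, which for large $L$ is $\approx c_p''\,m$---so you need $c_p''$ \emph{large} (of order $C_p$), not small. Moreover, deriving $s\log(eK/s)>C_p m$ does not contradict $s\log(K/s)\leq C_p m$, since $s\log(eK/s)=s+s\log(K/s)$. The paper handles this cleanly by setting $s=\lfloor \mu(m)^{-1}\rfloor$ with $\mu(m)$ already containing the target expression, using the assumed Gelfand bound together with $\|x\|_X\leq b^{1/p-1/q}\|x\|_Y$ on $\ker(A)\setminus\{0\}$ to force $b/s$ large (hence $s\leq b/8$ and $m\geq 32s$), and then rearranging directly to $m>m$. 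Your treatment of the constraint $s\leq b/32$ is also too hand-wavy; the paper's $b/s$-argument is what makes this rigorous.
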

\begin{proof}
We prove the result by contradiction using Proposition~\ref{prop:GelEstPack}, so suppose that
$$c_m(\id:\ell_p^b(\ell_2^d)\to \ell_q^b(\ell_2^d))<c_{p,q}\mu(m)^{1/p-1/q},$$
where 
$$\mu(m):=\min\Big\{1,\frac{\frac{1}{2}c_p(\log(b/m)+(8e)^{-1}d)}{m}\Big\}$$
and $c_p$, $c_{p,q}$ are to be determined below. Set $s=\lfloor \mu(m)^{-1}\rfloor$. Note that $s\in \N$ and 
$$\frac{1}{2\mu(m)}<s\leq \frac{1}{\mu(m)}.$$
For $x\in \mathrm{ker}(A)$, $x\neq 0$, we have by assumption
$$\|x\|_{\ell_p^b(\ell_2^d)}\leq b^{1/p-1/q}\|x\|_{\ell_q^b(\ell_2^d)}\leq b^{1/p-1/q}c_{p,q}s^{-(1/p-1/q)}\|x\|_{\ell_p^b(\ell_2^d)}$$
so in particular,
\begin{equation}
\label{eqn:bovers}
b/s\geq c_{p,q}^{-1/(1/p-1/q)}.
\end{equation}
By taking $c_{p,q}$ small enough, we can ensure that $s\leq b/8$, so that we can apply Proposition~\ref{prop:sparsePacking} below.\par
We apply Proposition~\ref{prop:GelEstPack} for $X=\ell_p^b(\ell_2^d)$ and $Y=\ell_q^b(\ell_2^d)$. Let 
$$\cS = \{\tilde{S}\ti[d] \ : \ \tilde{S}\subset [b], \ \card(\tilde{S})=2s\}$$
be the set of all $2s$-outer sparse support sets, so that $\cS_+$ is the set of all $4s$-outer sparse support sets. By the sharp inequality
$$\Big(\sum_{k=1}^{4s} \|x_k\|_2^p\Big)^{1/p} \leq (4s)^{1/p-1/q}\Big(\sum_{k=1}^{4s} \|x_k\|_2^q\Big)^{1/q}\qquad \operatorname{for \ all} \ x_1,\ldots,x_{4s}\in \R^d$$
we see that
$$\ka_{X,Y,\cS_+} = (4s)^{1/p-1/q}.$$
Therefore, if 
$$c_{p,q}\leq \Big(\frac{1}{4}\Big)^{1/p-1/q}(2\al_{\ell_p^b(\ell_2^d)})^{-1},$$
then by assumption
$$c_m(\id:\ell_p^b(\ell_2^d)\to \ell_q^b(\ell_2^d))<c_{p,q}\mu(m)^{1/p-1/q}\leq (2\al_{\ell_p^b(\ell_2^d)})^{-1}(4s)^{-(1/p-1/q)}.$$
By Proposition~\ref{prop:GelEstPack}, for any $W\subset \{x_S \ : \ S \in \cS\}$ and any $\eps>0$,
$$\log\cP(W, \|\cdot\|_{\ell_p^b(\ell_2^d)},\eps) \leq m\log\Big(\al_{\ell_p^b(\ell_2^d)}+\frac{2\al_{\ell_p^b(\ell_2^d)}^2\beta_{\ell_p^b(\ell_2^d)}r_{\ell_p^b(\ell_2^d)}(W)}{\eps}\Big).$$
By Proposition~\ref{prop:sparsePacking}, applied with $t=d/(8e)$, there is a set $W$ of $2s$-outer sparse vectors with $r_{\ell_p^b(\ell_2^d)}(W)\leq (2s)^{1/p}(d/4e)^{1/2}$ and 
$$\cP\Big(W, \|\cdot\|_{\ell_p^b(\ell_2^d)},s^{1/p}\Big(\frac{d}{4e}\Big)^{1/2}\Big)\geq \Big(\frac{b}{32s}\Big)^s e^{sd/(8e)}.$$
This implies
\begin{equation}
\label{eqn:boversAppl}
s\log(b/32s) + (8e)^{-1}sd\leq m\log(\al_{\ell_p^b(\ell_2^d)}+2^{1+1/p}\al_{\ell_p^b(\ell_2^d)}^2\beta_{\ell_p^b(\ell_2^d)}).
\end{equation}
Setting
$$c_p=\Big(\log(\al_{\ell_p^b(\ell_2^d)}+2^{1+1/p}\al_{\ell_p^b(\ell_2^d)}^2\beta_{\ell_p^b(\ell_2^d)})\Big)^{-1},$$
it follows from (\ref{eqn:bovers}) that
$$m\geq c_p s\log(b/32s) \geq c_p s\log(c_{p,q}^{-(1/p-1/q)}/32).$$
Hence, by picking $c_{p,q}$ small enough, we can ensure that $m\geq 32s$. Using this in (\ref{eqn:boversAppl}) yields
$$s\log(b/m) + (8e)^{-1}sd\leq mc_p^{-1}.$$
We now rearrange to find
\begin{align*}
m\geq c_p(s\log(b/m) + (8e)^{-1}sd) & >\frac{1}{2}\mu(m)^{-1} c_p(\log(b/m)+(8e)^{-1}d) \\
& = \frac{1}{2}c_p(\log(b/m)+(8e)^{-1}d)\Big(\min\Big\{1,\frac{\frac{1}{2}c_p(\log(b/m)+(8e)^{-1}d)}{m}\Big\}\Big)^{-1} \\
& \geq m,
\end{align*}
which is the desired contradiction. 
\end{proof}
\begin{remark}
From the previous proof it is clear that for $0<p<q\leq 2$, any $0<r<\infty$ and for all $m\leq bd$ we have
$$c_m(\id:\ell_p^b(\ell_r^d)\to \ell_q^b(\ell_r^d)) \geq c_{p,q,r}\min\Big\{1,\frac{\frac{1}{2}c_{p,r}(\log(b/m)+(8e)^{-1}d)}{m}\Big\}^{1/p-1/q}.$$
\end{remark}
We will now prove statement (ii) in Theorem~\ref{thm:GelMain}. The upper bound will follow by simple inclusion arguments. For the proof of the lower bound, we follow the same line of argument as in the proof of Theorem \ref{res:GelLowerBound}. The only difference is that this time, we keep the outer sparsity $s$ fixed and vary the inner sparsity $t$.
\begin{proposition}
Let $0<q\leq 1$ and $q\leq p\leq 2$. For all $m\leq bd$ we have
$$c_m(\id:\ell_p^b(\ell_q^d) \to \ell_p^b(\ell_p^d)) \simeq_{p,q} \min\Big\{1,\Big(\frac{b\log(ebd/m)}{m}\Big)^{1/q-1/p}\Big\}.$$
\end{proposition}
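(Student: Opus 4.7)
The plan is to establish the upper bound by a block-diagonal reduction to the already-proved scalar estimate \eqref{eqn:Foucartetal}, and the lower bound by the same contradiction scheme used in the proof of Proposition~\ref{res:GelLowerBound}, but now with the outer sparsity held fixed near $b/8$ and the inner sparsity $t$ playing the role of the free parameter.

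For the upper bound I would split into the trivial range $m \leq b$ (where the claim reduces to $c_m \leq 1$, which matches since $(b\log(ebd/m)/m)^{1/q-1/p}\geq 1$ there) and $m > b$. In the latter case set $m_0 := \lfloor m/b \rfloor \geq 1$, and for each row $i \in [b]$ invoke \eqref{eqn:Foucartetal} applied to $\id : \ell_q^d \to \ell_p^d$ to produce a subspace $M_i \subset \ell_q^d$ of codimension $< m_0$ on which $\|\cdot\|_{\ell_p^d} \leq C_{p,q}\min\{1,\log(ed/m_0)/m_0\}^{1/q-1/p}\|\cdot\|_{\ell_q^d}$. Then $M := M_1 \oplus \cdots \oplus M_b \hookrightarrow \ell_p^b(\ell_q^d)$ has codimension at most $b(m_0-1) < m$, and raising the row-wise bound to the power $p$ and summing over rows transports the same constant to $\id : M \to \ell_p^b(\ell_p^d)$. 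Since $m_0 \simeq m/b$ and $\log(ed/m_0) \simeq \log(ebd/m)$, the desired upper bound follows.

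For the lower bound I would argue by contradiction, essentially transplanting the structure of Proposition~\ref{res:GelLowerBound}. Suppose $c_m(\id) < c_{p,q}\mu(m)^{1/q-1/p}$ with $\mu(m) := \min\{1, c_p\, b\log(ebd/m)/m\}$, and set $s := \lfloor b/8\rfloor$, $t := \lfloor \mu(m)^{-1}\rfloor$, choosing constants so that $t \leq d/8$ (the regime $\mu(m) = 1$ is handled by monotonicity from the first non-trivial~$m$). Apply Proposition~\ref{prop:sparsePacking} with this $(s,t)$ to produce a $(2s,2t)$-sparse packing $W \subset \ell_p^b(\ell_q^d)$ of radius $\leq (2s)^{1/p}(2t)^{1/q}$, cardinality at least $(b/(32s))^s (d/(8t))^{st}$, and packing distance $s^{1/p}(2t)^{1/q}$. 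Let $\cS$ be the collection of $(2s,2t)$-sparse supports underlying $W$; every support in $\cS_+$ has at most $4t$ nonzero entries per row, so H\"older gives $\ka_{X,Y,\cS_+}\leq (4t)^{1/q-1/p}$. Once $c_{p,q}$ is small enough, the smallness hypothesis of Proposition~\ref{prop:GelEstPack} is satisfied; invoking it with $\eps := s^{1/p}(2t)^{1/q}$ (so that $r_X(W)/\eps$ is a $p$-dependent constant) yields
$$s\log(b/(32s)) + st\log(d/(8t)) \lesssim_p m.$$
Since $s \simeq b$, the left side dominates $bt\log(ed/t)$, and substituting $t \simeq \mu(m)^{-1} \simeq m/(c_p b\log(ebd/m))$ together with $\log(ed/t) \simeq \log(ebd/m)$ collapses the inequality to $m \gtrsim_p m/c_p$, the desired contradiction for $c_p$ sufficiently small.

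The main obstacle I expect lies in the lower-bound bookkeeping: calibrating $\ka_{X,Y,\cS_+}$ exactly against the assumed upper bound on $c_m$, guaranteeing that $t$ lands in $[1, d/8]$, and verifying that the combinatorial entropy contribution $s\log(b/(32s))$ coming from the outer sparsity does not exceed $st\log(d/(8t))$ in the small-$t$ regime. These are essentially constant-wrangling issues, but they pin down the permissible values of $c_p$ and $c_{p,q}$ and therefore have to be executed carefully.
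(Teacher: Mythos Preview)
Your proposal is correct and the lower bound follows exactly the paper's contradiction scheme via Proposition~\ref{prop:GelEstPack} and Proposition~\ref{prop:sparsePacking}, with the inner sparsity $t$ as the free parameter. One small point: the paper takes $s=b/32$ rather than $s=\lfloor b/8\rfloor$, so that $(b/(32s))^s=1$ and the outer combinatorial term simply drops out; with your choice the factor $(b/(32s))^s\approx(1/4)^{b/8}$ is below~$1$ and contributes a negative term $-s\log 4$ to the log-packing estimate. This is precisely the issue you flag at the end, and it is indeed absorbable (in the non-trivial range $\mu(m)<1$ one has $m\gtrsim_p b$ and $d/t$ can be forced large via $c_{p,q}$), but choosing $s$ smaller avoids the bookkeeping altogether.

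For the upper bound your block-diagonal reduction is a genuinely different and slightly more direct route than the paper's. The paper instead factors
\[
\ell_p^b(\ell_q^d)\xrightarrow{\ \|\cdot\|=b^{1/q-1/p}\ }\ell_q^b(\ell_q^d)=\ell_q^{bd}\xrightarrow{\ c_m\ }\ell_p^{bd}=\ell_p^b(\ell_p^d)
\]
and invokes (S3) together with \eqref{eqn:Foucartetal} in one line. Your approach exploits that the outer $\ell_p^b$-norm is unchanged, so a product subspace $M_1\oplus\cdots\oplus M_b$ with each $M_i$ optimal for $c_{\lfloor m/b\rfloor}(\id:\ell_q^d\to\ell_p^d)$ transports the row-wise constant directly; this is less slick but conceptually cleaner in showing that the problem really is $b$ decoupled copies of the scalar one.
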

\begin{proof}
\textbf{Upper bound}: First of all, we have the trivial bound
$$c_m(\id:\ell_p^b(\ell_q^d) \to \ell_p^b(\ell_p^d))\leq c_0(\id:\ell_p^b(\ell_q^d)\to \ell_p^b(\ell_q^d))\leq 1.$$
Moreover, since $\|\id:\ell_p^b(\ell_q^d) \to \ell_q^b(\ell_q^d)\| = b^{1/q-1/p}$, we find using (\ref{eqn:Foucartetal}) and (S3)
$$c_m(\id:\ell_p^b(\ell_q^d) \to \ell_p^b(\ell_p^d))\leq b^{1/q-1/p}c_m(\id:\ell_q^b(\ell_q^d)\to \ell_p^b(\ell_p^d))\leq  c_{p,q} b^{1/q-1/p}\Big(\frac{\log(ebd/m)}{m}\Big)^{1/q-1/p}.$$

\noindent \textbf{Lower bound}: We prove the result by contradiction using Proposition~\ref{prop:GelEstPack}, so suppose that
$$c_m(\id:\ell_p^b(\ell_q^d) \to \ell_p^b(\ell_p^d))<c_{p,q}\mu(m)^{1/q-1/p},$$
where 
$$\mu(m):=\min\Big\{1,\frac{\tilde{c}_{p,q} b\log(ebd/m)}{64m}\Big\}$$
and $c_{p,q}$ and $\tilde{c}_{p,q}$ are constants depending only on $p$ and $q$ which are to be determined below. Set $t=\lfloor \mu(m)^{-1}\rfloor$. Note that $t\in \N$ and 
$$\frac{1}{2\mu(m)}<t\leq \frac{1}{\mu(m)}.$$
For $x\in \mathrm{ker}(A)$, $x\neq 0$, we have by assumption
$$\|x\|_{\ell_p(\ell_q)}\leq d^{1/q-1/p}\|x\|_{\ell_p(\ell_p)}\leq c_{p,q} d^{1/q-1/p} t^{-(1/q-1/p)}\|x\|_{\ell_p(\ell_q)}$$
so in particular,
\begin{equation}
\label{eqn:dovert}
d/t\geq c_{p,q}^{-1/(1/q-1/p)}.
\end{equation}
By choosing $c_{p,q}$ small enough, we can arrange that $t\leq d/8$, so that we can apply Proposition~\ref{prop:sparsePacking} below.\par
We apply Proposition~\ref{prop:GelEstPack} for $X=\ell_p^b(\ell_q^d)$ and $Y=\ell_p^b(\ell_p^d)$. Let 
$$\cS = \{[b]\ti\tilde{S} \ : \ \tilde{S}\subset [d], \ \card(\tilde{S})\leq 2t\}$$ 
be the set of all $2t$-inner sparse support sets, so that $\cS_+$ is the set of all $4t$-inner sparse support sets. By H\"{o}lder's inequality, we have the sharp inequality 
$$\|x_S\|_{\ell_p(\ell_q)}\leq (4t)^{1/q-1/p}\|x_S\|_{\ell_p(\ell_p)}, \qquad \text{for all } S\in \cS_+ \text{ and } x\in \R^{bd},$$
and therefore
$$\ka_{X,Y,\cS_+} = (4t)^{1/q-1/p}.$$
Hence, if
$$c_{p,q}<\Big(\frac{1}{4}\Big)^{1/q-1/p} (2\al_{\ell_p^b(\ell_q^d)})^{-1}$$
then by assumption
$$c_m(\id:\ell_p^b(\ell_q^d) \to \ell_p^b(\ell_p^d))<c_{p,q}\mu(m)^{1/q-1/p}\leq (2\al_{\ell_p^b(\ell_q^d)})^{-1} (4t)^{-(1/q-1/p)}.$$
By Proposition~\ref{prop:GelEstPack}, for any $W\subset \{x_S \ : \ S \in \cS\}$ and any $\eps>0$,
$$\log\cP(W, \|\cdot\|_{\ell_p^b(\ell_q^d)},\eps) \leq m\log\Big(\al_{\ell_p^b(\ell_q^d)}+\frac{2\al^2_{\ell_p^b(\ell_q^d)}\beta_{\ell_p^b(\ell_q^d)}r_{\ell_p^b(\ell_q^d)}(W)}{\eps}\Big).$$
Taking $W$ as in Proposition~\ref{prop:sparsePacking} with $s=b/32$ and setting $\eps=(b/32)^{1/p}(2t)^{1/q}$ yields
\begin{equation}
\label{eqn:dovertAppl}
\frac{bt}{32}\log\Big(\frac{d}{8t}\Big) \leq m\log(\al_{\ell_p^b(\ell_q^d)}+2^{1+1/p}\al^2_{\ell_p^b(\ell_q^d)}\beta_{\ell_p^b(\ell_q^d)}).
\end{equation}
Define
$$\tilde{c}_{p,q}^{-1} := \log(\al_{\ell_p^b(\ell_q^d)}+2^{1+1/p}\al^2_{\ell_p^b(\ell_q^d)}\beta_{\ell_p^b(\ell_q^d)}).$$
By (\ref{eqn:dovert}) we can pick $c_{p,q}$ small enough to arrange that 
$$\frac{1}{32}\log\Big(\frac{d}{8t}\Big)\tilde{c}_{p,q} \geq 8e$$
and, as a consequence, $m\geq 8ebt$. Combining this with (\ref{eqn:dovertAppl}) we find
\begin{align*}
m\geq \frac{\tilde{c}_{p,q}}{32} tb\log\Big(\frac{ebd}{m}\Big) & >\frac{\tilde{c}_{p,q}}{64}\mu(m)^{-1} b\log\Big(\frac{ebd}{m}\Big) \\
& = \frac{\tilde{c}_{p,q}}{64}b\log\Big(\frac{ebd}{m}\Big)\Big(\min\Big\{1,\frac{\tilde{c}_{p,q}b\log\Big(\frac{ebd}{m}\Big)}{64m}\Big\}\Big)^{-1} \\
& \geq m,
\end{align*}
which is the desired contradiction. 
\end{proof}
Finally, we prove statement (iii) in Theorem~\ref{thm:GelMain}. In contrast to the results in (i) and (ii), this is a truly mixed situation in which both the outer and the inner sparsity have a visible effect on the bounds.
\begin{proposition}
Set $0<q\leq 1$, $q\leq p\leq 1$ and $m\leq bd$. Then, there is a constant $\tilde{c}_{p,q}$, so that
\begin{align*}
 c_m(\id:\ell_p^b(\ell_q^d) \to \ell_2^b(\ell_2^d)) \simeq_{p,q}
 \begin{cases}
  1 &: 1 \leq m \leq \tilde{c}_{p,q}\log\Big(\frac{ebd}{m}\Big)\,,\\
  \left(\frac{\log(ebd/m)}{m}\right)^{1/p-1/2} &: \tilde{c}_{p,q}\log\Big(\frac{ebd}{m}\Big) \leq m \leq \tilde{c}_{p,q}b\log\Big(\frac{ebd}{m}\Big)\,,\\
  b^{1/2-1/p} \left(\frac{b\log(ebd/m)}{m}\right)^{1/q-1/2} &: \tilde{c}_{p,q}b\log\Big(\frac{ebd}{m}\Big) \leq m \leq bd\,.
 \end{cases}
\end{align*}
\end{proposition}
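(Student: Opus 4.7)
The plan is to prove (iii) by treating the three regimes separately, combining the factorization trick used for (i)--(ii) with the packing-by-contradiction scheme of Propositions~\ref{prop:GelEstPack} and~\ref{prop:sparsePacking}.

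For the \textbf{upper bounds}, the small-regime bound $c_m\leq 1$ is immediate from (S1). In the middle regime I factor through $\ell_p^{bd}$: since $q\leq p$, the inclusion $\ell_p^b(\ell_q^d)\hookrightarrow \ell_p^b(\ell_p^d)=\ell_p^{bd}$ has norm one, and by (S3) together with \eqref{eqn:Foucartetal} applied with ``$b=bd$, $d=1$'',
$$c_m(\id:\ell_p^b(\ell_q^d)\to\ell_2^{bd})\leq c_m(\id:\ell_p^{bd}\to\ell_2^{bd})\lesssim_p\left(\frac{\log(ebd/m)}{m}\right)^{1/p-1/2}.$$
In the large regime I instead factor through $\ell_q^{bd}$: the inclusion $\ell_p^b(\ell_q^d)\hookrightarrow \ell_q^b(\ell_q^d)=\ell_q^{bd}$ has norm $b^{1/q-1/p}$ (power-mean inequality on the outer index), so \eqref{eqn:Foucartetal} yields
$$c_m\lesssim_q b^{1/q-1/p}\left(\frac{\log(ebd/m)}{m}\right)^{1/q-1/2}=b^{1/2-1/p}\left(\frac{b\log(ebd/m)}{m}\right)^{1/q-1/2}.$$
A direct calculation shows that the two factorization bounds coincide up to constants at the transition $m\simeq b\log(ebd/m)$ and that the middle bound equals $1$ at $m\simeq \log(ebd/m)$, so the three upper-bound pieces glue together.

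For the \textbf{lower bounds} I argue by contradiction via Proposition~\ref{prop:GelEstPack}, closely following Proposition~\ref{res:GelLowerBound} and the lower bound proof for (ii); the requisite packings in $\|\cdot\|_X$ are produced by Proposition~\ref{prop:sparsePacking} with $r=p,u=q$. In the middle regime I fix inner sparsity $t=1$ and take $s\simeq m/\log(ebd/m)$, with $\cS$ the family of all $(2s,2)$-sparse supports; a H\"older computation yields $\kappa_{X,Y,\cS_+}\lesssim_{p,q}(4s)^{1/p-1/2}$, which matches the assumed rate so that the hypothesis of Proposition~\ref{prop:GelEstPack} is triggered, and Proposition~\ref{prop:sparsePacking} supplies a packing of cardinality $(b/32s)^s(d/8)^s$ at distance $s^{1/p}\cdot 2^{1/q}$. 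The resulting packing inequality forces $s\log(bd/s)\lesssim_{p,q} m$; Lemma~\ref{lem:invert} then gives $s\lesssim m/\log(ebd/m)$, contradicting the choice of $s$ for a sufficiently small constant in the assumed bound on $c_m$. In the large regime I instead \emph{fix} $s=\lfloor b/32\rfloor$ and vary $t\simeq m/(b\log(ebd/m))$, obtaining $\kappa_{X,Y,\cS_+}\lesssim(4s)^{1/p-1/2}(4t)^{1/q-1/2}$ and a packing of size $\simeq(d/8t)^{bt/32}$; the analogous computation produces $bt\log(d/t)\lesssim_{p,q} m$, and a second application of Lemma~\ref{lem:invert} closes the contradiction. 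The small-regime lower bound $c_m\gtrsim_{p,q}1$ is obtained by specialising the middle-regime argument to $s=t=1$: the resulting $(2,2)$-packing of cardinality $\simeq bd$ forces $m\gtrsim\log(ebd/m)$, contradicting the hypothesis $m\leq \tilde c_{p,q}\log(ebd/m)$ for small $\tilde c_{p,q}$.

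The main technical obstacle is \emph{bookkeeping constants consistently across the three regimes}: one must first fix a small constant governing the assumed upper bound on $c_m$ (so that the hypothesis of Proposition~\ref{prop:GelEstPack} is activated), then choose the threshold constant $\tilde c_{p,q}$ so that in each sub-regime the prescribed sparsity satisfies $s\leq b/8$ and $t\leq d/8$ (the hypothesis of Proposition~\ref{prop:sparsePacking}), and finally verify that the resulting packing inequality pins $m$ down at the correct scale via Lemma~\ref{lem:invert}. This is precisely the order of quantification employed in Proposition~\ref{res:GelLowerBound} and in the lower bound for (ii), and no essentially new difficulty is expected here.
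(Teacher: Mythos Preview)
Your proposal is correct and for the upper bounds and the middle-regime lower bound it matches the paper's argument essentially line by line (factorization through $\ell_p^{bd}$ and $\ell_q^{bd}$ for the upper bounds, and the $t=1$ packing-by-contradiction via Propositions~\ref{prop:GelEstPack} and~\ref{prop:sparsePacking} for the middle lower bound).

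The genuine difference is in the \emph{small} and \emph{large} regime lower bounds. You run the packing machinery directly in both cases (with $s=t=1$ and with $s=\lfloor b/32\rfloor$, $t$ variable, respectively). The paper instead obtains these two lower bounds almost for free by factorization through already-established results: in the small regime it sandwiches via the norm-one embeddings $\ell_q^{bd}\hookrightarrow\ell_p^b(\ell_q^d)\hookrightarrow\ell_p^{bd}$ and invokes \eqref{eqn:Foucartetal}; in the large regime it uses $\|\id:\ell_2^b(\ell_q^d)\to\ell_p^b(\ell_q^d)\|=b^{1/p-1/2}$ together with part~(ii) applied at $p=2$, which gives the lower bound in one line. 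Your route is more self-contained (it does not rely on (ii) having been proved in the full range $q\le p\le 2$) but redoes work that the paper recycles. One minor caution: your invocation of Lemma~\ref{lem:invert} in the middle regime to pass from $s\log(bd/s)\lesssim m$ to a contradiction glosses over the step where one must first secure $m\ge Cs$ (so that $\log(bd/s)$ can be replaced by $\log(ebd/m)$); the paper handles this by a small case split on whether $d$ is large, using \eqref{eqn:boverspq} and \eqref{eqn:boverspqandq} rather than Lemma~\ref{lem:invert}. This is the ``bookkeeping'' you flag, and it does require care.
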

\begin{proof}
We define the constant $\tilde{c}_{p,q}$ explicitly in \eqref{eqn:cpqDefLast}. We distinguish three cases.\par
\emph{Case $1 \leq m \leq (\tilde{c}_{p,q}/2)\log(ebd/m)$}. The result trivially follows from the 
norm-$1$ embeddings $\mixedell{q}{q} \hookrightarrow \mixedell{p}{q} \hookrightarrow \mixedell{p}{p}$, (S3) and \eqref{eqn:Foucartetal}.\par
\emph{Case $(\tilde{c}_{p,q}/2)\log(ebd/m) \leq m \leq (\tilde{c}_{p,q}b/16)\log(ebd/m)$}. 
Again, the upper bound is immediate from $\|\id:\mixedell{p}{q}\to \mixedell{p}{p}\| = 1$, (S3) and \eqref{eqn:Foucartetal}. 
The lower bound follows using the proof technique used for Proposition~\ref{res:GelLowerBound}, 
but now setting $t=1$ instead of $t=d/(8e)$. We include the details for the reader's convenience.\par
We prove the lower bound by contradiction using Proposition~\ref{prop:GelEstPack}, so suppose that
$$c_m(\id:\ell_p^b(\ell_q^d)\to \ell_2^b(\ell_2^d))<c_{p,q}\mu(m)^{1/p-1/2},$$
where 
$$\mu(m):=\frac{\frac{1}{2}\tilde{c}_{p,q}\log(ebd/m)}{m}$$
and $c_{p,q},\tilde{c}_{p,q}$ are to be determined below. Set $s=\lfloor \mu(m)^{-1}\rfloor$. By our assumption
$$(\tilde{c}_{p,q}/2)\log(ebd/m) \leq m \leq (\tilde{c}_{p,q}b/16)\log(ebd/m),$$
it follows that $s\in \N$ and 
\begin{equation}
\label{eqn:boverspq}
s\leq \frac{1}{\mu(m)} = \frac{m}{\frac{1}{2}\tilde{c}_{p,q}\log(ebd/m)} \leq \frac{b}{8},
\end{equation}
so we can apply Proposition~\ref{prop:sparsePacking} further below. Moreover, for $x\in \mathrm{ker}(A)$, $x\neq 0$, we have by assumption
$$\|x\|_{\ell_p^b(\ell_q^d)}\leq b^{1/p-1/2}d^{1/q-1/2}\|x\|_{\ell_2^b(\ell_2^d)}\leq b^{1/p-1/2}d^{1/q-1/2}c_{p,q}s^{-(1/p-1/2)}\|x\|_{\ell_p^b(\ell_q^d)}$$
so in particular,
\begin{equation}
\label{eqn:boverspqandq}
b/s\geq c_{p,q}^{-1/(1/p-1/2)}d^{-\frac{1/q-1/2}{1/p-1/2}}.
\end{equation}
We apply Proposition~\ref{prop:GelEstPack} for $X=\ell_p^b(\ell_q^d)$ and $Y=\ell_2^b(\ell_2^d)$. Let $\cS$ be the set of all $(2s,2)$-sparse support sets, so that $\cS_+$ is the set of all $(4s,4)$-sparse support sets. By the sharp inequality 
$$\|x_S\|_{\ell_p(\ell_q)}\leq (4s)^{1/p-1/2} 4^{1/q-1/2} \|x_S\|_{\ell_2(\ell_2)}, \qquad \text{for all } S\in \cS_+ \text{ and } x\in \R^{bd},$$
it follows that
$$\ka_{X,Y,\cS_+} = (4s)^{1/p-1/2}4^{1/q-1/2}.$$
Therefore, if 
$$c_{p,q}\leq \Big(\frac{1}{4}\Big)^{1/p-1/2}\Big(\frac{1}{4}\Big)^{1/q-1/2}(2\al_{\ell_p^b(\ell_q^d)})^{-1},$$
then by assumption
$$c_m(\id:\ell_p^b(\ell_q^d)\to \ell_2^b(\ell_2^d))<c_{p,q}\mu(m)^{1/p-1/2}\leq (2\al_{\ell_p^b(\ell_q^d)})^{-1}(4s)^{-(1/p-1/2)}4^{-(1/q-1/2)}.$$
By Proposition~\ref{prop:GelEstPack}, for any $W\subset \{x_S \ : \ S \in \cS\}$ and any $\eps>0$,
$$\log\cP(W, \|\cdot\|_{\ell_p^b(\ell_q^d)},\eps) \leq m\log\Big(\al_{\ell_p^b(\ell_q^d)}+\frac{2\al_{\ell_p^b(\ell_q^d)}^2\beta_{\ell_p^b(\ell_q^d)}r_{\ell_p^b(\ell_q^d)}(W)}{\eps}\Big).$$
By Proposition~\ref{prop:sparsePacking}, applied with $t=1$, there is a set $W$ of $(2s,2)$-sparse vectors with $r_{\ell_p^b(\ell_q^d)}(W)\leq (2s)^{1/p}2^{1/q}$ and 
$$\cP\Big(W, \|\cdot\|_{\ell_p^b(\ell_q^d)},s^{1/p}2^{1/q}\Big)\geq \Big(\frac{b}{32s}\Big)^s \Big(\frac{d}{8}\Big)^{s}=\Big(\frac{bd}{256s}\Big)^s.$$
This implies
\begin{equation}
\label{eqn:boversApplpq}
s\log(bd/256s)\leq m\log(\al_{\ell_p^b(\ell_q^d)}+2^{1+1/p}\al_{\ell_p^b(\ell_q^d)}^2\beta_{\ell_p^b(\ell_q^d)}).
\end{equation}
Setting
\begin{equation}
\label{eqn:cpqDefLast}
\tilde{c}_{p,q}:=\Big(\log(\al_{\ell_p^b(\ell_q^d)}+2^{1+1/p}\al_{\ell_p^b(\ell_q^d)}^2\beta_{\ell_p^b(\ell_q^d)})\Big)^{-1},
\end{equation}
it follows from (\ref{eqn:boverspq}) that $(b/s)\geq 8$ and hence
$$m\geq \tilde{c}_{p,q} s\log(bd/256s) \geq \tilde{c}_{p,q} s\log(d/32).$$
Therefore, if 
$$d\geq 32 e^{\tilde{c}_{p,q}^{-1}256e}=:d_{p,q},$$
then $m\geq 256es$. On the other hand, if $d\leq d_{p,q}$, then by \eqref{eqn:boverspqandq}
$$b/s\geq c_{p,q}^{-1/(1/p-1/2)}d^{-\frac{1/q-1/2}{1/p-1/2}} \geq c_{p,q}^{-1/(1/p-1/2)}d_{p,q}^{-\frac{1/q-1/2}{1/p-1/2}}.$$
Hence, by taking $c_{p,q}$ small enough, we can ensure that
$$m\geq \tilde{c}_{p,q} s\log(bd/256s)\geq \tilde{c}_{p,q} s\log(b/256s) \geq 256es.$$
Using this in (\ref{eqn:boversApplpq}) yields
$$m\geq \tilde{c}_{p,q} s\log(ebd/m)$$
Since $s>\frac{1}{2\mu(m)}$, we now find
$$m\geq \tilde{c}_{p,q} s\log(ebd/m) >\frac{1}{2}\mu(m)^{-1} \tilde{c}_{p,q} \log(ebd/m) = m,$$
which is the desired contradiction.\par
\emph{Case $(\tilde{c}_{p,q}b/16)\log(ebd/m) \leq m \leq bd$}. Both, the lower and the upper bound follow from factorization of the identity and (S3). Namely, 
$\|\id:\ell_p^b(\ell_q^d) \to \ell_q^b(\ell_q^d)\| = b^{1/q-1/p}$ and $\|\id:\ell_2^b(\ell_q^d) \to \ell_p^b(\ell_q^d)\| = b^{1/p-1/2}$. Part (i) and (ii) of Theorem~\ref{thm:GelMain} 
together with (S3) imply that
\begin{align*}
b^{1/2-1/p}\min\Big\{1,\frac{b\log(ebd/m)}{m}\Big\}^{1/q-1/2} & \simeq_{p,q} b^{1/2-1/p} c_m(\id:\ell_2^b(\ell_q^d) \to \ell_2^b(\ell_2^d)) \\
& \leq c_m(\id:\ell_p^b(\ell_q^d) \to \ell_2^b(\ell_2^d)) \\
& \leq b^{1/q-1/p} c_m(\id:\ell_q^b(\ell_q^d) \to \ell_2^b(\ell_2^d)) \\
& \simeq_{p,q} b^{1/q-1/p} \min\Big\{1,\frac{\log(ebd/m)}{m}\Big\}^{1/q-1/2}.
\end{align*}
The lower and upper bound match if $m\geq b\log(ebd/m)$. 
\end{proof} 
\begin{remark}
There are several remaining open cases for which it would be interesting to prove matching bounds for the Gelfand numbers: 

{\em (i)} It would be very interesting to extend our result in \eqref{eqn:GelMain1} 
to the case of differing inner spaces. That is, one would like to have optimal bounds for 
$c_m(\id:\ell_p^b(\ell_q^d)\to \ell_r^b(\ell_u^d))$ for $p\leq 1$, $p<r$ and $q<u$. 
As we discuss in Remark~\ref{rem:diffInner} in more detail, the upper bounds are highly relevant for our application to 
Besov space embeddings in Section~\ref{sec:appl}. Establishing the lower bounds in the case $p\leq q\leq 1$ is 
interesting from a purely mathematical perspective, as it requires to understand the precise relation with structured sparsity in this parameter range.    

{\em(ii)} It would also be interesting to extend the result in (iii) of Theorem~\ref{thm:GelMain} to the case $0<q\leq 1$, $1<p\leq 2$. 
As has been mentioned, the two-sided bounds in the first case (for small $m$) as well 
as in the third case (for large $m$) remain valid, but the bounds in the second case cannot be correct 
(this can be seen by taking $d=1$). From the perspective of classical $\ell_p^n$-spaces, 
this is a situation where one expects a mixture of the behaviors of $c_m(\id:\ell_p^n\to \ell_2^n)$ for $p\leq 1$ and $p>1$.

{\em (iii)} In addition, we expect that it is possible to extend our results to mixed $\ell_p(\ell_q)$-spaces of block vectors $(x_i)_{i=1}^b$ whose blocks have varying sizes. This would be of interest to extend our fundamental lower bounds for structured sparse recovery in Corollary~\ref{cor:stabRecIntro}. Such an extension would be particularly interesting for recovery of sparse-in-levels vectors, since in a typical application the block sizes are varying (see e.g.\ \cite{AHP14,BaH14,BBW17,LiA16}).   
\end{remark}

\section{Applications}
\label{sec:appl}

\subsection{Signal processing}

From the perspective of signal processing, Gelfand widths (see Remark \ref{rem:Gelfand_width}) are of interest due to their close connection to the compressive $m$-widths, defined in \eqref{eqn:compressWidth} above. As was discussed in Section~\ref{sec:GelEquiv}, we have 
$$c_m(B_{\ell_p^b(\ell_q^d)},\ell_r^b(\ell_u^d)) \simeq_{p,q,r,u} E_m(B_{\ell_p^b(\ell_q^d)},\ell_r^b(\ell_u^d)).$$ 
Thanks to this equivalence, we can use Theorem~\ref{thm:GelSimple} to derive Corollary~\ref{cor:stabRecIntro}. Its consequences for structured signal recovery were discussed before in the introduction.
\begin{proof}[Proof of Corollary~\ref{cor:stabRecIntro}]
By (\ref{eqn:stabBlock}), 
$$\sup_{x\in B_{\ell_1^b(\ell_2^d)}} \|x-\Del(Ax)\|_{\ell_2^b(\ell_2^d)} \leq D\sup_{x\in B_{\ell_1^b(\ell_2^d)}} \frac{\si_s^{\operatorname{outer}}(x)_{\ell_1^b(\ell_2^d)}}{\sqrt{s}} \leq \frac{D}{\sqrt{s}}.$$
Theorem~\ref{thm:GelSimple} therefore implies that for some $c>0$ and $C\geq 1$,
$$c\min\Big\{1,\frac{Ce(\log(eb/m)+d)}{m}\Big\}^{1/2} \leq c_m(\id: \ell_1^b(\ell_2^d) \to \ell_2^b(\ell_2^d)) \leq \frac{D}{\sqrt{s}}.$$
This means that either
$$s\leq \frac{D^2}{c^2} \qquad \text{or} \qquad s\leq \frac{D^2}{c^2}\frac{m}{Ce(\log(eb/m)+d)}.$$
Thus, if we assume that $s\geq D^2/c^2$, then we obtain
$$s\leq \frac{D^2}{c^2}\frac{m}{Ce(\log(eb/m)+d)} = \frac{D^2}{c^2Ce}\frac{m}{\log(ebe^d/m)} $$
We now apply Lemma~\ref{lem:invert} with $y=m$, $x=c^2 s/D^2$ and $K=be^d$ to obtain
$$m\geq \frac{Ce}{1+\log(Ce)} \frac{c^2 s}{D^2}\log(eD^2be^d/c^2s)\gtrsim s\log(eb/s) + sd.$$
Suppose now that (\ref{eqn:stabInner}) holds. Then,
$$\sup_{x\in B_{\ell_2^b(\ell_1^d)}} \|x-\Del(Ax)\|_{\ell_2^b(\ell_2^d)} \leq D\sup_{x\in B_{\ell_2^b(\ell_1^d)}}\frac{\si_t^{\operatorname{inner}}(x)_{\ell_2^b(\ell_1^d)}}{\sqrt{t}}\leq \frac{D}{\sqrt{t}}.$$
Theorem~\ref{thm:GelSimple} now yields for some $c>0$ and $C\geq 1$,
$$c \min\Big\{1,\frac{Ceb\log(ebd/m)}{m}\Big\}^{1/2} \leq c_m(\id:\ell_2^b(\ell_1^d) \to \ell_2^b(\ell_2^d))\leq \frac{D}{\sqrt{t}}.$$
Therefore, either
$$t\leq \frac{D^2}{c^2} \qquad \text{or} \qquad t\leq \frac{D^2}{c^2}\frac{m}{Ceb\log(ebd/m)}.$$
Assuming that $t\geq D^2/c^2$, we find
$$t\leq \frac{D^2}{c^2Ce}\frac{(m/b)}{\log(ed/(m/b))}$$ 
We now apply Lemma~\ref{lem:invert} with $y=m/b$, $x=c^2 t/D^2$ and $K=d$ to obtain
$$\frac{m}{b} \geq  \frac{Ce}{1+\log(Ce)} \frac{c^2 t}{D^2}\log(eD^2bd/c^2t)$$
and as a consequence, 
$$m\gtrsim bt\log(ed/t).$$
\end{proof}
In the introduction we claimed that if $B$ is standard Gaussian and $A=(1/\sqrt{m})B$, then the pair $(A,\Del_{\ell_1(\ell_1)})$ with high probability satisfies the stable reconstruction guarantee \eqref{eqn:stabInner}, provided that $m\gtrsim bt\log(ed/t)$. By Corollary~\ref{cor:stabRecIntro} this is optimal. Here we provide details on this claim for the convenience of the reader. We know that $(A,\Del_{\ell_1(\ell_1)})$ with high probability satisfies (\cite[Theorem 1.2 and Section 1.3]{CRT06}, see also \cite{CaT06,do06_2} and \cite[Theorem 9.13]{FoR13})
\begin{equation}
\label{eqn:stab11Appl}
\|x-\Del_{\ell_1(\ell_1)}(Ax)\|_{\ell_2^b(\ell_2^d)} \leq D\frac{\si_k(x)_{\ell_1^b(\ell_1^d)}}{\sqrt{k}},
\end{equation}
whenever $m\gtrsim k\log(ebd/k)$, where 
$$\si_k(x)_{\ell_1^b(\ell_1^d)} = \inf\{\|x-z\|_{\ell_1^b(\ell_1^d)} \ : \ z \ \operatorname{is} \ k\operatorname{-sparse}\}.$$
Set $k=bt$. By H\"older's inequality and by noting that every $t$-inner sparse vector is $bt$-sparse, we find
\begin{align*}
\si_{bt}(x)_{\ell_1^b(\ell_1^d)} & = \inf\{\|x-z\|_{\ell_1^b(\ell_1^d)} \ : \ z \ \operatorname{is} \ bt\operatorname{-sparse}\} \\
& \leq \sqrt{b} \inf\{\|x-z\|_{\ell_2^b(\ell_1^d)} \ : \ z \ \operatorname{is} \ bt\operatorname{-sparse}\} \\
& \leq \sqrt{b} \inf\{\|x-z\|_{\ell_2^b(\ell_1^d)} \ : \ z \ \operatorname{is} \ t\operatorname{-inner \ sparse}\} = \sqrt{b} \si_t^{\operatorname{inner}}(x)_{\ell_2^b(\ell_1^d)}.
\end{align*}
Therefore, by \eqref{eqn:stab11Appl}, 
\begin{equation*}
\|x-\Del_{\ell_1(\ell_1)}(Ax)\|_{\ell_2^b(\ell_2^d)} \leq D\frac{\si_{bt}(x)_{\ell_1^b(\ell_1^d)}}{\sqrt{bt}} \leq D \frac{\si_t^{\operatorname{inner}}(x)_{\ell_2^b(\ell_1^d)}}{\sqrt{t}},
\end{equation*}
provided that $m\gtrsim k\log(ebd/k) = bt\log(ed/t)$.

\subsection{Gelfand numbers of Besov space embeddings with small mixed smoothness}
\label{Sect7.2}
The above results in finite dimensional mixed-norm spaces have a direct application to Besov space embeddings with small 
mixed smoothness. In this section we are interested in the Gelfand numbers of the embeddings 
\begin{equation}\label{emb2}
    \text{Id}:S^{r_0}_{p_0,q_0}B(\Omega) \to S^{r_1}_{p_1,q_1}B(\Omega)\,,
\end{equation}
where $\Omega\subset \R^d$ is a bounded domain (open set) and $0<p_0 \leq p_1 \leq \infty$ and $0<q_0 < q_1 \leq \infty$ such that we are in the small smoothness
regime, i.e. $1/p_0-1/p_1 < r_0-r_1 \leq 1/q_0-1/q_1$\,. The goal of this section is to show that the results in Theorem \ref{thm:GelMain} imply 
\begin{equation}\label{loglog}
    c_m(\text{Id}:S^{r_0}_{p_0,q_0}B(\Omega) \to S^{r_1}_{p_1,q_1}B(\Omega)) \simeq_{\theta} m^{-(r_0-r_1)}
\end{equation}
in some of the parameter constellations in the small smoothness regime. Here and below we write 
\begin{equation}
\label{def:theta}
\theta:=\{p_0,p_1,q_0,q_1,r_0,r_1,d,\Omega\}
\end{equation}
to abbreviate the dependency on the various parameters.\par  
Observe that the decay behavior in \eqref{loglog} is typical for the univariate situation ($d=1$). Usually, in the $d$-variate setting we encounter asymptotic orders such as 
$m^{-(r_0-r_1)}(\log m)^{(d-1)\eta}$
for some $\eta:=\eta(p_0,q_0,p_1,q_1,r_0,r_1)$, see Remark \ref{large_s} and \cite{DTU16,Kien16}. The dimension $d$ of the underlying Euclidean space enters the rate of convergence exponentially. Surprisingly, this is no longer the case in the small smoothness regime as Corollaries \ref{L2}, \ref{endpoint} and \ref{nonsharp} below will show. However, we emphasize that dimension $d$ is still hidden in the constants. 

We start with some definitions. We denote by $L_p(\Omega)$, $0<p\leq \infty$, the space of all measurable complex-valued functions $f:\Omega\to \C$ where
$$\|f\|_p:=\Big(\int_{\Omega} |f(x)|^p dx\Big)^{1/p}$$ is finite (with the usual modification if $p=\infty$). The space of tempered distributions on $\R^d$ is 
denoted with $S'(\R^d)$. For $0<p,q\leq \infty$ and $r\in \R$ the Besov space $S^r_{p,q}B(\R^d)$ is defined as
\begin{equation}\label{defB}
    S^r_{p,q}B(\R^d) := \Big\{f\in S'(\R^d)~:~\|f\|_{S^r_{p,q}B}:=\Big(\sum\limits_{\bar{j} \in \N_0^d}
    2^{rq\|\bar{j}\|_1}\|\cF^{-1}[\varphi_{\bar{j}}\cF f]\|_p^q\Big)^{1/q} < \infty\Big\}\,,
\end{equation}
(with the usual modification in case $q=\infty$) where the system $\{\varphi_{\bar{j}}\}_{\bar{j}\in \N_0^d}$ is the standard tensorized 
dyadic decomposition of unity, see \cite{Vyb06} and \cite{ScTr87}. Important for our purpose is the following
diagonal embedding result, see \cite[Chapt.\ 2]{ScTr87},
\begin{equation}\label{emb}
   S^{r_0}_{p_0,q_0}B(\R^d) \hookrightarrow S^{r_1}_{p_1,q_1}B(\R^d)
\end{equation}
for $p_0 \leq p_1$ and $r_0-r_1>1/p_0-1/p_1$\,. The embedding \eqref{emb} is never compact which is the reason why 
we will restrict to spaces on bounded domains in the sequel. This guarantees a decay of the corresponding Gelfand numbers of the embedding. 
Before doing so, let us comment on the discretization of the above function
spaces on $\R^d$ developed in \cite[Thm.\ 2.10]{Vyb06}. Using sufficiently smooth wavelets 
with sufficiently many vanishing moments (and the notation from \cite{Vyb06}) the mapping
\begin{equation}\label{wave_iso}
    f \mapsto \lambda_{\bar{j},\bar{k}}(f):=\langle f, \psi_{\bar{j},\bar{k}} \rangle,\qquad \bar{j}\in \N_0^d, 
    \bar{k} \in \Z^d\,,
\end{equation}
represents a sequence spaces isomorphism between $S^r_{p,q}B(\R^d)$ and
\begin{equation}\label{sequ_space}
    s^r_{p,q}b := \Big\{\lambda=\{\lambda_{\bar{j},\bar{k}}\}_{\bar{j}, \bar{k}} \subset \C~:~
    \|\lambda\|_{s^r_{p,q}b}:=\Big[\sum\limits_{\bar{j}\in \N_0^d} 2^{(r-1/p)q\|\bar{j}\|_1}
    \Big(\sum\limits_{\bar{k}\in \Z^d}|\lambda_{\bar{j},\bar{k}}|^p\Big)^{q/p}\Big]^{1/q}<\infty\Big\}
\end{equation}
(with the usual modification in case $\max\{p,q\} = \infty$)\,. We want to study Gelfand numbers of compact 
function space embeddings on domains. Let $\Omega$ be an arbitrary bounded domain in $\R^d$, $D(\Omega)$ denote 
the set of test functions and $D'(\Omega)$ the corresponding set of distributions. Then $S^r_{p,q}B(\Omega)$ is defined as the set of restrictions 
(in the distributional sense) of elements of $S^r_{p,q}B(\R^d)$ to $\Omega$, i.e.,
$$
    S^r_{p,q}B(\Omega) := \{f\in D'(\Omega)~:~\exists g\in S^r_{p,q}B(\R^d) \text{ such that } g|_{\Omega} = f\}\,
$$
and its (quasi-)norm is given by $\|f\|_{S^r_{p,q}B(\Omega)}:=\inf_{g|_{\Omega} = f} \|g\|_{S^r_{p,q}B}$, see 
\cite[Sect.\ 3.1]{Vyb06}, \cite[3.2.3]{HaSi12}\,. 
The embedding \eqref{emb} transfers to the domain $\Omega$ and is compact so that the Gelfand numbers 
decay and converge to zero. We are interested in establishing the decay rate especially for the embedding \eqref{emb2} in the important special case $\Omega = [0,1]^d$, which is representative for all tensor product domains.  

Using the boundedness of certain restriction and extension operators, see \cite[Sect.\ 4.5]{Vyb06}, \cite[3.2.3]{HaSi12}, 
together with the wavelet isomorphism \eqref{wave_iso} between \eqref{defB} and \eqref{sequ_space}, we can restrict ourselves to the study of embeddings between the ``restricted sequence spaces''
$s^r_{p,q}b(\Omega)$. For $\bar j \in \N_0^d$ and $\bar{k} \in \Z^d$ let $Q_{\bar j, \bar k} = \times_{i=1}^d 2^{-j_i}[k_i-1,k_i+1]$. 
Let $A_{\bar j}^\Omega = \{\bar k \in \Z^d: Q_{\bar j, \bar k} \cap \Omega \neq \emptyset \}$. Consider the sequence space
$$
s_{p,q}^{r}b(\Omega) := \{ \lambda = (\lambda_{\bar j, \bar k})_{\bar j \in \N_0^d, \bar k \in A_{\bar j}^\Omega}\subset \C : \|\lambda\|_{s_{p,q}^{r}b(\Omega)} < \infty  \}
$$
with (quasi-)norm given by
\begin{equation}\label{sequBOmega} 
    \|\lambda\|_{s_{p,q}^{r}b(\Omega)} := \Big[\sum_{\bar j \in \N_0^d} 2^{\|\bar j\|_1(r-1/p)q}
    \Big( \sum_{\bar k \in A_{\bar j}^\Omega} |\lambda_{\bar j, \bar k}|^p \Big)^{q/p} \Big]^{1/q}.
\end{equation}
 
Let us start with the following result, where the upper and lower bound only differ by a power of $\log\log m$.

\begin{proposition}\label{res:appl_embedding_entropy0} Let $0<q_0\leq p_0 \leq 1$, $0<p_0 \leq p_1\leq  q_1 \leq 2$ and $1/p_0-1/p_1<r \leq 1/q_0-1/q_1$. Let $\theta$ be as defined in \eqref{def:theta}. For $m\geq 2$ we have 
(with $\varrho:=\min\{1,p_1,q_1\}$)
\begin{equation}\label{res1}
  \begin{split}
      m^{-r} &\lesssim_\theta c_m(\id:s^r_{p_0,q_0}b(\Omega) \to s^0_{p_1,q_1}b(\Omega)) \\
      &\lesssim_\theta m^{-r}(\log\log_2 m)^{1/q_0-1/q_1}\left\{\begin{array}{rcl}
	(\log\log_2 m)^{r+1/\varrho}&:&r=1/q_0-1/q_1\,,\\
	1&:&r<1/q_0-1/q_1\,.
      \end{array}\right.
  \end{split}    
\end{equation}
 
\end{proposition}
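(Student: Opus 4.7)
The plan is to combine a dyadic level decomposition of the sequence space $s^r_{p_0,q_0}b(\Omega)$ with the sharp finite-dimensional Gelfand number estimates provided by Theorem~\ref{thm:GelMain}. The lower bound is obtained by restriction to a single ``univariate'' block; the upper bound uses the subadditivity property (S2) together with multiplicativity of Gelfand numbers and a careful measurement allocation across the levels.

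\textbf{Lower bound.} Fix $l_0 := \lceil \log_2(2m) \rceil$ and choose the multi-index $\bar j_0 := (l_0, 0, \ldots, 0) \in V_{l_0}$, so that $N_0 := |A_{\bar j_0}^\Omega| \simeq 2^{l_0} \simeq m$ (with constants depending on $d$ and $\Omega$). Let $E_{\bar j_0}$ denote the coordinate subspace of sequences supported on $\{\bar j_0\} \times A_{\bar j_0}^\Omega$. On $E_{\bar j_0}$ the source norm reduces to $2^{l_0(r - 1/p_0)}\|\cdot\|_{\ell_{p_0}^{N_0}}$ and the target norm to $2^{-l_0/p_1}\|\cdot\|_{\ell_{p_1}^{N_0}}$. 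Since the inclusion $E_{\bar j_0} \hookrightarrow s^r_{p_0,q_0}b(\Omega)$ and the coordinate projection $s^0_{p_1,q_1}b(\Omega) \to E_{\bar j_0}$ both have norm $\leq 1$, property (S3) yields
\[
c_m(\id) \;\geq\; c_m\bigl(\id: E_{\bar j_0}|_{\text{source}} \to E_{\bar j_0}|_{\text{target}}\bigr) \;=\; 2^{-l_0 r'}\, c_m\bigl(\id: \ell_{p_0}^{N_0} \to \ell_{p_1}^{N_0}\bigr),
\]
with $r' := r - (1/p_0 - 1/p_1) > 0$. The sharp lower bound in \eqref{eqn:Foucartetal} gives $c_m(\id: \ell_{p_0}^{N_0} \to \ell_{p_1}^{N_0}) \gtrsim_{p_0,p_1} m^{-(1/p_0 - 1/p_1)}$ (and $\gtrsim 1$ in the degenerate case $p_0 = p_1$), which combined with $2^{-l_0 r'} \simeq m^{-r'}$ produces the asserted $c_m(\id) \gtrsim_\theta m^{-r}$.

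\textbf{Upper bound.} Rescale $\mu_{\bar j, \bar k} := 2^{\|\bar j\|_1(r - 1/p_0)}\lambda_{\bar j, \bar k}$ to transform the problem into an identity $\ell_{q_0}(\ell_{p_0}) \to \ell_{q_1}(\ell_{p_1})$ with level weight $2^{-\|\bar j\|_1 r'}$ on the target. Decompose $\id = \sum_{l \geq 0} P_l$ by level and split off the tail $\id_{>L} := \sum_{l > L} P_l$; using $q_0 \leq q_1$ and $p_0 \leq p_1$ one checks that $\|\id_{>L}\| \lesssim 2^{-L r'}$. For each level $l \leq L$ estimate $P_l$ by factoring $P_l = I^{\text{out}}_l \circ I^{\text{in}}_l$ through the intermediate space $\ell_{q_0}(V_l, \ell_{p_1}(A_{\bar j}^\Omega))$: the inner operator $I^{\text{in}}_l$ is diagonal over $\bar j \in V_l$, so blockwise application of \eqref{eqn:Foucartetal} combined with subadditivity produces a bound involving $(|V_l|\log(e n_l/m_2)/m_2)^{1/p_0 - 1/p_1}$, where $n_l \simeq |V_l| \cdot 2^l \simeq l^{d-1} 2^l$; for $I^{\text{out}}_l$ use the sandwiches $\ell_{q_0}(\ell_{p_1}) \supset \ell_{p_1}^{n_l}$ (via $q_0 \leq p_0 \leq p_1$) and $\ell_{q_1}(\ell_{p_1}) \supset \ell_{q_1}^{n_l}$ (via $p_1 \leq q_1$), reducing it to a second application of \eqref{eqn:Foucartetal}. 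Multiplicativity of Gelfand numbers then combines the two into $c_{m_l}(P_l) \lesssim 2^{-l r'} \cdot (\text{polylogarithmic factor in } l \text{ and } m_l)$. A level allocation $\{m_l\}_{l \leq L}$ together with truncation $L \simeq \log_2 m$, combined via the $\varrho$-triangle inequality
\[
c_m(\id)^{\varrho} \;\lesssim\; \sum_{l \leq L} c_{m_l}(P_l)^{\varrho} \,+\, \|\id_{>L}\|^{\varrho}, \qquad \varrho := \min(1, p_1, q_1),
\]
balances the finite-part sum against the tail and produces the bound in \eqref{res1}; the factor $(\log\log_2 m)^{1/q_0 - 1/q_1}$ emerges from optimizing the logarithmic terms, and the additional $(\log\log_2 m)^{r+1/\varrho}$ at the endpoint $r = 1/q_0 - 1/q_1$ from the borderline convergence of the level series.

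\textbf{Main obstacle.} The principal difficulty is reaching the univariate rate $m^{-r}$ for an intrinsically $d$-variate problem. A direct level-by-level application of \eqref{eqn:Foucartetal} treating each level as a pure $\ell_p^n \to \ell_q^n$ loses a full power of $\log m$ and cannot reach $m^{-r}$ at all; one must preserve the outer $\ell_{q_0} \to \ell_{q_1}$ structure and exploit the small-smoothness assumption $r \leq 1/q_0 - 1/q_1$. Ideally one would invoke a sharp Gelfand estimate for $\ell_{q_0}(\ell_{p_1}) \to \ell_{q_1}(\ell_{p_1})$ with common but non-$\ell_2$ inner space, but no such estimate is currently available. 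The residual $(\log\log_2 m)^{1/q_0 - 1/q_1}$ factor in \eqref{res1} is precisely the price paid for substituting multiplicative pairs of pure $\ell_p$-estimates for that missing sharp mixed-norm bound.
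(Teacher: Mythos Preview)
Your lower bound is correct and matches the paper's approach: restrict to a single block and invoke the sharp $\ell_{p_0}^N \to \ell_{p_1}^N$ estimate from \eqref{eqn:Foucartetal}.

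The upper-bound sketch contains a genuine gap whenever $p_0 < p_1$. You truncate at $L \simeq \log_2 m$ and bound the tail by the operator norm $\|\id_{>L}\| \lesssim 2^{-Lr'}$ with $r' = r - (1/p_0 - 1/p_1)$. But then the tail contributes $\simeq m^{-r'}$, and since $r' < r$ whenever $p_0 < p_1$, this term dominates $m^{-r}$ and destroys the claimed rate outright. The paper repairs this by inserting a \emph{third} range $L < \mu \leq M$ with $M = \lceil L\beta/(\beta-1)\rceil$ for a parameter $\beta > 1$ chosen so that $r > \beta(1/p_0 - 1/p_1)$: on this range one still spends measurements $m_\mu = \lfloor 2^\mu 2^{(L-\mu)\beta}\rfloor$ (these sum to $\simeq 2^L$) and applies the alternative single-step bound
\[
c_{m_\mu}(\id_\mu) \lesssim 2^{-(r-1/p_0+1/p_1)\mu}\Big(\frac{\log(e 2^\mu \mu^{d-1}/m_\mu)}{m_\mu}\Big)^{1/p_0-1/p_1}
\]
coming from $\ell_{p_0}^{n_\mu} \to \ell_{p_1}^{n_\mu}$ via the norm-one embeddings $\ell_{q_0}(\ell_{p_0}) \hookrightarrow \ell_{p_0}$ and $\ell_{p_1} \hookrightarrow \ell_{q_1}(\ell_{p_1})$. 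Only \emph{after} $M$ does one pass to the pure operator-norm tail; the choice of $\beta$ guarantees $(r-1/p_0+1/p_1)\cdot\beta/(\beta-1) > r$, so this delayed tail is $\lesssim 2^{-Lr}$ as required.

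More structurally, the paper does not use the multiplicativity $c_{m_1+m_2-1}(I^{\text{out}}_l \circ I^{\text{in}}_l) \leq c_{m_1}(I^{\text{out}}_l)\, c_{m_2}(I^{\text{in}}_l)$ that you invoke. Instead it produces two \emph{single-step} reductions --- one via $\ell_{q_0}(\ell_{p_0}) \to \ell_{q_0}^n \to \ell_{q_1}^n \to \ell_{q_1}(\ell_{p_1})$ exploiting the $q$-gap, the other via $\ell_{q_0}(\ell_{p_0}) \to \ell_{p_0}^n \to \ell_{p_1}^n \to \ell_{q_1}(\ell_{p_1})$ exploiting the $p$-gap --- and applies them in the complementary ranges $J<\mu\leq L$ and $L<\mu\leq M$ respectively. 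Your multiplicative route carries \emph{both} logarithmic exponents $(1/q_0-1/q_1)$ and $(1/p_0-1/p_1)$ simultaneously at every level, and the blockwise subadditivity you use for $I^{\text{in}}_l$ produces an additional $|V_l|^{1/\varrho'}$ prefactor that you have not accounted for; it is not clear from the sketch how this can be balanced to yield only the single $(\log\log_2 m)^{1/q_0-1/q_1}$ loss stated in \eqref{res1}.
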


\begin{proof} {\em Step 1.} By the monotonicity and subadditivity of Gelfand numbers (see (S1) and (S2) above) we may decompose as follows
\begin{equation} \label{decomp}
 c_m(\id)^{\varrho} \leq \sum_{\mu=0}^\infty c_{m_\mu+1}(\id_\mu)^{\varrho},
\end{equation}
where $\id = \sum_{\mu=0}^{\infty} \id_{\mu}$ with 
\[ 
 (\id_\mu \lambda)_{\bar j, \bar k} = \begin{cases}
 \lambda_{\bar j, \bar k} &: \|\bar j\|_1 = \mu, \bar{k}\in A^{\Omega}_{\bar{j}}\,,\\
 0 &: \text{ otherwise, }
 \end{cases}
\]
and $(m_\mu)_{\mu=0}^\infty$ is a sequence of natural numbers $m_\mu \in \N_0$ such that $m = \sum_{\mu=1}^{\infty} m_\mu < \infty$. Similar to \cite[(3.8)]{Vyb06}
we define the quantity 
$$
    D_\mu:=\sum\limits_{\|\bar j\|_1 = \mu} \# (A_{\bar{j}}^{\Omega})\,,
$$
which denotes the dimension of the block with number $\mu$\,. Clearly, 
\begin{equation}\label{count}
  \sharp \{\bar j \in \N_0^d: \|\bar j\|_1 = \mu \} = {\mu + d -1 \choose \mu} \simeq_d \mu^{d-1}\quad \mbox{and} \quad \sharp A_{\bar j}^\Omega \simeq_{\Omega} 2^{\|\bar j\|_1}
\end{equation}
such that $D_\mu \simeq_{d,\Omega} 2^\mu \mu^{d-1}$. 
Now we fix some $J \in \N$ and consider four different ranges in the sum on 
the right-hand side in \eqref{decomp}, i.e., 
\begin{equation} \label{decomp2}
 c_m(\id)^{\varrho} \leq \sum_{\mu=0}^J c_{m_\mu+1}(\id_\mu)^{\varrho} + \sum_{\mu=J+1}^L c_{m_\mu+1}(\id_\mu)^{\varrho} + \sum_{\mu=L+1}^M c_{m_\mu+1}(\id_\mu)^{\varrho}
 + \sum_{\mu=M+1}^\infty \|\id_\mu\|^{\varrho}\,.
\end{equation}
Here we put $L = J+(d-1)\log_2 J$ and $M$ will depend on $L$ as determined below. The main challenge is to choose the $m_\mu$ suitably and estimate the 
corresponding $c_{m_\mu}(\id_\mu)$ using the finite dimensional results from Section \ref{sec:matchGel}. Note, that we deal with $\ell_q(\ell_p)$ spaces of complex 
numbers here. This technical issue only causes a multiplicative constant in the estimates below.

{\em Step 2.} To estimate the first sum we choose $m_\mu = 2D_\mu$ which yields (see (S4))
\begin{equation}\label{est0}
   \sum\limits_{\mu=0}^J c_{m_\mu+1}(\id_\mu)^{\varrho} = 0 \,.
\end{equation}
In addition, $\sum_{\mu=0}^J m_{\mu} \simeq_{d,\Omega} 2^JJ^{d-1}$\,.
In case $m_{\mu} \leq 2^{\mu}\mu^{d-1}$ we have as a consequence of \eqref{count}
\begin{align}\label{eq:fundamental_block_equivalence} 
 c_{m_\mu}(\id_\mu) \lesssim_{\theta} 2^{(-r+1/p_0-1/p_1)\mu} \; 
 c_{m_\mu} \left(\id: \ell_{q_0}^{\mu^{d-1}}(\ell_{p_0}^{2^\mu}) \to \ell_{q_1}^{\mu^{d-1}}(\ell_{p_1}^{2^\mu}) \right)\,.
\end{align}
If $m_{\mu} > 2^{\mu}\mu^{d-1}$ then \eqref{eq:fundamental_block_equivalence} may fail since the right-hand side vanishes due to (S4). 
Using straightforward embedding arguments for finite-dimensional mixed-norm spaces, the Gelfand numbers on the right-hand side can be estimated in two different ways using \eqref{eqn:Foucartetal}. 
On the one hand, we get 
\begin{equation}\label{est1}
 \begin{split}
    c_{m_\mu}(\id_\mu) &\lesssim_{\theta} 2^{-(r-1/p_0+1/p_1)\mu}2^{\mu(1/q_0-1/p_0)}2^{\mu(1/p_1-1/q_1)}c_{m_{\mu}}(\id:\ell^{2^{\mu}\mu^{d-1}}_{q
    _0} \to \ell^{2^{\mu}\mu^{d-1}}_{q_1})\\
    &\lesssim_{\theta} 2^{-r\mu}2^{\mu(1/q_0-1/q_1)}\Big(\frac{\log(e2^\mu \mu^{d-1}/m_{\mu})}{m_{\mu}}\Big)^{1/q_0-1/q_1}\,.
 \end{split}
\end{equation}
On the other hand, 
\begin{equation}\label{est2}
 \begin{split}
    c_{m_\mu}(\id_\mu) &\lesssim_{\theta} 2^{-(r-1/p_0+1/p_1)\mu}c_{m_{\mu}}(\id:\ell^{2^{\mu}\mu^{d-1}}_{p
    _0} \to \ell^{2^{\mu}\mu^{d-1}}_{p_1})\\
    &\lesssim_{\theta} 2^{-(r-1/p_0+1/p_1)\mu}\Big(\frac{\log(e2^\mu \mu^{d-1}/m_{\mu})}{m_{\mu}}\Big)^{1/p_0-1/p_1}\,.
 \end{split}
\end{equation}
Let us assume for the moment 
that $r<1/q_0-1/q_1$. In the second sum in \eqref{decomp2} we choose $m_\mu = 2^\mu 2^{(L-\mu)\kappa}$ with $\kappa<1$ such that $r<\kappa(1/q_0-1/q_1)$.
Clearly, $m_{\mu} \leq 2^{\mu}\mu^{d-1}$ and, due to $\kappa<1$, $\sum_{\mu=J+1}^L m_\mu \simeq_{\kappa} 2^L = 2^JJ^{d-1}$. Employing \eqref{est1} gives 

\begin{equation}\label{est3}
    \sum\limits_{\mu=J+1}^{L} c_{m_\mu+1}(\id_\mu)^{\varrho} 
    \lesssim_{\theta} \sum\limits_{\mu=J+1}^{L} 2^{-\mu r \varrho}\Big[2^{(\mu-L)\kappa}(\log e\mu^{d-1}2^{(\mu-L)\kappa})\Big]^{(1/q_0-1/q_1)\varrho}\,.
\end{equation}
Since we assumed $r<1/q_0-1/q_1$, we obtain 
\begin{equation}\label{est6}
    \sum\limits_{\mu=J+1}^{L} c_{m_\mu+1}(\id_\mu)^{\varrho} \lesssim_{\theta} 2^{-Lr\varrho}(\log L)^{(1/q_0-1/q_1)\varrho} \simeq_{\theta} (2^J J^{d-1})^{-r\varrho}(\log J)^{(1/q_0-1/q_1)\varrho}\,.
\end{equation}
In the case $r=1/q_0-1/q_1$ we choose $m_\mu = 2^J\mu^{d-1}$ in the range $J<\mu \leq L$. This gives $m_\mu \leq 2^\mu \mu^{d-1}$ and 
\begin{equation}\label{est31}
 \sum_{\mu=J+1}^L m_\mu \simeq_{d} 2^JJ^{d-1} \sum_{\mu=J+1}^L 1 \simeq_{d} 2^JJ^{d-1}\log J\,. 
\end{equation}
Instead of \eqref{est3} we obtain
\begin{equation}\label{est6b}
  \begin{split}
    \sum\limits_{\mu=J+1}^{L} c_{m_\mu+1}(\id_\mu)^{\varrho} &\lesssim_{\theta} 
    \sum\limits_{\mu=J+1}^L \Big[2^{-J}\mu^{-(d-1)}(\mu-J)\Big]^{(1/q_0-1/q_1)\varrho}\\
    &\simeq_{\theta} \Big[2^{J}J^{d-1}\Big]^{-r\varrho}(\log J)^{(1/q_0-1/q_1)\varrho}\log J\,.
  \end{split}  
\end{equation}

Now we discuss the third sum in \eqref{decomp2}, which we only need in case $p_0<p_1$ otherwise we continue directly 
with \eqref{est111} below where $M = L$. Here we use $m_\mu := \lfloor 2^\mu 2^{(L-\mu)\beta} \rfloor$ where $\beta>1$ is chosen
such that $r>\beta(1/p_0-1/p_1)$\,. Again $m_\mu \leq 2^\mu \mu^{d-1}$ and $\sum_{\mu=L}^{\infty} m_\mu \simeq_{\beta} 2^L = 2^JJ^{d-1}\,$. In addition, 
for $\mu>M:=\lceil L\beta/(\beta-1) \rceil$ we have $m_\mu=0$. Using
\eqref{est2} we obtain
\begin{equation}\label{est4}
 \begin{split}
    \sum\limits_{\mu = L+1}^{M} c_{m_\mu+1}(\id_\mu)^{\varrho} &\lesssim_{\theta} 
    \sum\limits_{\mu=L+1}^{M}2^{-(r-1/p_0+1/p_1)\mu\varrho}
    \Big(\frac{\log(e2^\mu \mu^{d-1}/m_{\mu})}{m_{\mu}}\Big)^{(1/p_0-1/p_1)\varrho}\\
    &\lesssim_{\theta} \sum\limits_{\mu = L+1}^{\infty} 2^{-r\mu\varrho}
    \Big[2^{(\mu-L)\beta}\log(2^{(\mu-L)\beta}\mu^{d-1})\Big]^{(1/p_0-1/p_1)\varrho}\\
    &\lesssim_{\theta} \sum\limits_{\mu = L+1}^{\infty} 
    2^{-r\mu\varrho}
    \Big[2^{(\mu-L)\beta}((\mu-L)\beta + \log \mu)\Big]^{(1/p_0-1/p_1)\varrho}\\
    &\simeq_{\theta} 2^{-rL\varrho} (\log L)^{(1/p_0-1/p_1)\varrho}\\
    &\simeq_{\theta} (2^JJ^{d-1})^{-r\varrho}(\log J)^{(1/p_0-1/p_1)\varrho}\,.
 \end{split}
\end{equation}

Let us finally care for the fourth sum in \eqref{decomp2}. It holds
\begin{equation}\label{est41}
 \begin{split}
    \sum\limits_{\mu = M+1}^{\infty} \|\id_\mu\|^{\varrho} &\lesssim_{\theta} 
    \sum\limits_{\mu=M+1}^{\infty}2^{-(r-1/p_0+1/p_1)\mu\varrho}\\
    &\simeq_{\theta} 2^{-(r-1/p_0+1/p_1)M\varrho}\\
    &\simeq_{\theta} 2^{-L(r-1/p_0+1/p_1)\beta/(\beta-1)\varrho}\,.
 \end{split}
\end{equation}
Note, that $r>\beta(1/p_0-1/p_1)$ for $\beta >1$ implies $(r-1/p_0+1/p_1)\beta/(\beta-1)>r$. Then it holds 
\begin{equation}\label{est42}
 \sum\limits_{\mu = M+1}^{\infty} \|\id_\mu\|^{\varrho} \lesssim_{\theta} 2^{-Lr\varrho} \simeq_{\theta} (2^JJ^{d-1})^{-r\varrho}\,.
\end{equation}

Altogether, \eqref{est0}, \eqref{est6}, \eqref{est4}, and \eqref{est42} yield in case $1/p_0-1/p_1 < r < 1/q_0-1/q_1$
\begin{equation}\label{est7}
  \begin{split}
    c_m(\id:s^r_{p_0,q_0}b(\Omega) \to s^0_{p_1,q_1}b(\Omega)) &\lesssim_{\theta} (2^J J^{d-1})^{-r}(\log J)^{1/q_0-1/q_1}\\
    &\simeq_{\theta} m^{-r} (\log\log_2 m)^{1/q_0-1/q_1}\,,
  \end{split}
\end{equation}
by taking into account 
\begin{equation}\label{est8}
    m_J := \sum\limits_{\mu=0}^{\infty} m_{\mu} \simeq_{\theta} 2^J J^{d-1}
\end{equation}
and a standard monotonicity argument to fill the gaps between $\{m_J\}_{J=1}^{\infty}$. Similarly, in the endpoint case $r=1/q_0-1/q_1$ the estimates \eqref{est31}  and \eqref{est6b} imply 
\begin{equation}\label{est71}
    c_m(\id:s^r_{p_0,q_0}b(\Omega) \to s^0_{p_1,q_1}b(\Omega)) \lesssim_{\theta} m^{-r} (\log\log_2 m)^{1/q_0-1/q_1}(\log\log_2 m)^{r+1/\varrho}\,.
\end{equation}

{\em Step 3 (lower bound).} The argument 
given by Vyb\'iral in \cite[Thm.\ 3.18, Step.\ 2]{Vyb06} for proving the same lower bound for entropy numbers almost literally 
applies to Gelfand numbers by taking the lower bound in Theorem \ref{thm:GelMain}, \eqref{eqn:Foucartetal} into account
(instead of \cite[Lem.\ 3.11]{Vyb06}). \end{proof}

In case $p_0 = p_1 = 2$ we obtain the following sharp result which turns out to be a consequence of our new 
improved block result in Theorem \ref{thm:GelMain}, \eqref{eqn:GelMain1}. The main benefit is that we do not produce the term $(\log\log_2 m)^{1/q_0-1/q_1}$, see 
\eqref{res1}, which is an artefact of the log-term in \eqref{eqn:Foucartetal}\,. Compared to Proposition~\ref{res:appl_embedding_entropy0}, we can get beyond the restriction $p_0\leq 1$, but now need to impose $p_0=p_1=2$\,.

\begin{proposition}\label{res:appl_embedding_entropy}
Let $0<q_0\leq 1$, $q_0<q_1 \leq 2$ and $0<r < 1/q_0-1/q_1$. Let further $\varrho = \min \{1,q_1\}$. For $m\geq 2$, we have
\begin{equation}\label{res2}
      c_m(\id:s^r_{2,q_0}b(\Omega) \to s^0_{2,q_1}b(\Omega)) \simeq_{\theta} m^{-r}\,.
\end{equation}
\end{proposition}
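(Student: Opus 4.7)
The plan is to mimic the block-decomposition scheme used in the proof of Proposition \ref{res:appl_embedding_entropy0}, but to replace the general mixed-norm estimate \eqref{eqn:Foucartetal} by the sharper bound \eqref{eqn:GelMain1}. The sharper bound is available precisely because the source and target share the same inner space $\ell_2$, and this is what will eliminate the parasitic $(\log\log m)^{1/q_0-1/q_1}$ factor of \eqref{res1}.

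Write $\id=\sum_{\mu\geq 0}\id_\mu$, where $\id_\mu$ retains only the coefficients with $\|\bar j\|_1=\mu$, and apply $\varrho$-subadditivity (see (S2)), so that
\begin{equation*}
c_m(\id)^\varrho \leq \sum_{\mu\geq 0} c_{m_\mu+1}(\id_\mu)^\varrho
\end{equation*}
for any splitting $m=\sum_\mu m_\mu$. Absorbing the source weight $2^{(r-1/2)\mu}$ and target weight $2^{-\mu/2}$, the block $\id_\mu$ coincides, up to the scalar $2^{-r\mu}$, with the identity $\ell_{q_0}^{\mu^{d-1}}(\ell_2^{2^\mu})\to \ell_{q_1}^{\mu^{d-1}}(\ell_2^{2^\mu})$. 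Therefore \eqref{eqn:GelMain1} yields
\begin{equation*}
c_{m_\mu}(\id_\mu)\lesssim_\theta 2^{-r\mu}\min\Big\{1,\frac{\log(e\mu^{d-1}/m_\mu)+2^\mu}{m_\mu}\Big\}^{1/q_0-1/q_1}.
\end{equation*}

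Fix $J\in\N$, put $L=J+(d-1)\log_2 J$, and pick $\kappa\in(0,1)$ with $r<\kappa(1/q_0-1/q_1)$, which is possible since $r<1/q_0-1/q_1$. Distribute the budget as $m_\mu=2D_\mu$ for $\mu\leq J$ (so the corresponding Gelfand numbers vanish by (S4)), $m_\mu=\lfloor 2^\mu 2^{(L-\mu)\kappa}\rfloor$ for $J<\mu\leq L$, and $m_\mu=0$ for $\mu>L$. In the middle range one has $m_\mu\geq 2^\mu$, so the bracketed quantity is $\lesssim 2^\mu/m_\mu=2^{-(L-\mu)\kappa}$, and the resulting geometric sum is dominated by its term at $\mu=L$:
\begin{equation*}
\sum_{\mu=J+1}^L c_{m_\mu+1}(\id_\mu)^\varrho \lesssim_\theta 2^{-Lr\varrho}.
\end{equation*}
For the tail, $\|\id_\mu\|\leq 2^{-r\mu}$ (since $\ell_{q_0}\hookrightarrow\ell_{q_1}$ with norm $1$), so $\sum_{\mu>L}\|\id_\mu\|^\varrho\lesssim 2^{-Lr\varrho}$. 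Using $\sum_\mu m_\mu\simeq_\theta 2^L = 2^J J^{d-1}$ and a standard monotonicity argument to fill in all $m$, the upper bound $c_m(\id)\lesssim_\theta m^{-r}$ follows.

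The matching lower bound is obtained by restriction to a single level $\mu_0$ with $2^{\mu_0}\simeq m$: the block identity $\id_{\mu_0}$ factors through $\id$ via norm-$1$ injection and projection, so (S3) combined with the lower bound in \eqref{eqn:GelMain1} (where $2^{\mu_0}/m\gtrsim 1$ saturates the minimum) yields $c_m(\id)\gtrsim_\theta 2^{-r\mu_0}\simeq m^{-r}$. The main subtle point is the middle-range estimate: one needs $m_\mu\geq 2^\mu$ so that the additive $2^\mu$ inside \eqref{eqn:GelMain1} dominates the logarithm of the outer dimension. This is exactly what \eqref{eqn:GelMain1} affords but \eqref{eqn:Foucartetal} does not, and it is the technical reason why the $(\log\log m)^{1/q_0-1/q_1}$ factor of Proposition \ref{res:appl_embedding_entropy0} disappears here.
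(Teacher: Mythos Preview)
Your proof is correct and follows essentially the same approach as the paper: the same block decomposition into ranges $\mu\leq J$, $J<\mu\leq L$, $\mu>L$ with $L=J+(d-1)\log_2 J$, the same budget allocation $m_\mu=2^\mu 2^{(L-\mu)\kappa}$ in the middle range, and the same key use of \eqref{eqn:GelMain1} in place of \eqref{eqn:Foucartetal} to kill the $\log\log$ factor. The only cosmetic difference is in the lower bound: the paper simply inherits it from Step~3 of Proposition~\ref{res:appl_embedding_entropy0} (which invokes Vyb\'iral's argument with \eqref{eqn:Foucartetal}), whereas you give a direct single-block restriction using the lower bound in \eqref{eqn:GelMain1}; both routes are standard and yield the same $m^{-r}$.
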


\begin{proof} We return to \eqref{decomp2} where we drop the third sum, i.e., $M=L = J + (d-1)\log_2 J$. 
The range $\mu = 0,...,J$ is treated completely 
analogous as above. In the range $\mu = L+1,...$ we chose $m_\mu = 0$ and estimate the sum over the 
operator (quasi-)norms of $\id_\mu$. This gives
\begin{equation}\label{est111}
    \sum\limits_{\mu=L+1}^{\infty} \|\id_\mu\|^{\varrho} \lesssim_{\theta} 
    \sum\limits_{\mu=L+1}^{\infty}2^{-\mu r\varrho} \simeq_{\theta} (2^J J^{d-1})^{-r\varrho}\,.
\end{equation}
Let us deal with the range $\mu = J+1,...,L$\,. By Theorem \ref{thm:GelMain}, \eqref{eqn:GelMain1} we have, in contrast
to \eqref{est1} above, the improved bound
\begin{equation}\label{impr}
   c_{m_{\mu}}(\id_\mu) \simeq_{\theta} 2^{-r\mu}\Big(\frac{2^\mu + \log(\mu^{d-1}/m_\mu)}{m_\mu}\Big)^{1/q_0-1/q_1}\,,
\end{equation}
which results in
\begin{equation}\label{est3_1}
  \begin{split}
    \sum\limits_{\mu=J+1}^{L} c_{m_\mu+1}(\id_\mu)^{\varrho} 
    &\lesssim_{\theta} \sum\limits_{\mu=J+1}^{L} 2^{-\mu r \varrho}2^{(\mu-L)\kappa(1/q_0-1/q_1)\varrho}\\
    &\simeq_{\theta} 2^{-Lr\varrho} \simeq_{\theta} (2^JJ^{d-1})^{-r\varrho} \,,
  \end{split}  
\end{equation}
when plugging in $m_\mu = 2^{\mu}2^{(L-\mu)\kappa}$\,. The relations \eqref{est0}, \eqref{est111}, and \eqref{est3_1} together with \eqref{est8} imply \eqref{res2}\,.
\end{proof}

What concerns the endpoint case $r=1/q_0-1/q_1$ we are able to complement Propositions \ref{res:appl_embedding_entropy0}, \ref{res:appl_embedding_entropy} in case $p_0=p_1=2$ as follows\,. 

\begin{proposition}\label{res:appl_embedding_entropy2}
Let $0<q_0\leq 1$, $q_0<q_1 \leq 2$ and $r = 1/q_0-1/q_1$. Let further $\varrho = \min \{1,q_1\}$. For $m\geq 2$, we have
\begin{equation}\label{res21}
      m^{-r} \lesssim_{\theta} c_m(\id:s^r_{2,q_0}b(\Omega) \to s^0_{2,q_1}b(\Omega)) \lesssim_{\theta} m^{-r}(\log\log_2 m)^{r+1/\varrho}\,.
\end{equation}
\end{proposition}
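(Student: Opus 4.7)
The strategy mirrors the proof of Proposition~\ref{res:appl_embedding_entropy}, modified to accommodate the endpoint $r = 1/q_0 - 1/q_1$. As there, I would start from the subadditive decomposition \eqref{decomp2} for $\id = \sum_{\mu\ge 0} \id_\mu$ and set $L = J+(d-1)\log_2 J$ for a parameter $J\in\N$. Since $p_0=p_1=2$ the third (inner-space) sum is absent, so three ranges remain to be handled.

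For $\mu \le J$, choose $m_\mu = 2D_\mu$ so the corresponding terms vanish by (S4). For $\mu > L$, set $m_\mu = 0$ and use $\|\id_\mu\| \lesssim_\theta 2^{-r\mu}$, which follows from \eqref{eq:fundamental_block_equivalence} together with the norm-$1$ embedding $\ell_{q_0}^{b}(\ell_2^{D})\hookrightarrow\ell_{q_1}^{b}(\ell_2^{D})$ for $q_0\le q_1$. This tail contributes $\sum_{\mu>L}\|\id_\mu\|^{\varrho}\simeq_\theta 2^{-rL\varrho}$.

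The critical range is $J<\mu\le L$. The geometric choice $m_\mu = 2^\mu 2^{(L-\mu)\kappa}$ used in Proposition~\ref{res:appl_embedding_entropy} required $r<\kappa(1/q_0-1/q_1)$ with $\kappa<1$, which fails at the endpoint. Instead I would take the limiting case $\kappa=1$, the \emph{uniform} allocation $m_\mu := 2^L$, constant in $\mu$. The choice $L=J+(d-1)\log_2 J$ guarantees $m_\mu \le 2^\mu\mu^{d-1}$ throughout this range, so that the improved block bound \eqref{impr}, based on Theorem~\ref{thm:GelMain}, \eqref{eqn:GelMain1}, applies nontrivially and yields
\[
c_{m_\mu}(\id_\mu) \lesssim_\theta 2^{-r\mu}\Big(\frac{2^\mu}{2^L}\Big)^{1/q_0-1/q_1} = 2^{-rL},
\]
which is constant in $\mu$. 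Summing $\varrho$-th powers over the $L-J\simeq_d \log_2 J$ indices produces $\lesssim_\theta (\log J)\,2^{-rL\varrho}$, which dominates the tail.

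The total measurement count is $m\simeq_\theta \sum_\mu m_\mu \simeq_\theta 2^L\log J = 2^J J^{d-1}\log J$, hence $2^{-rL}\simeq_\theta m^{-r}(\log J)^{r}$ and $J\simeq_d \log_2 m$. Taking $\varrho$-th roots of $c_m(\id)^\varrho \lesssim_\theta (\log J)\,2^{-rL\varrho}$ and filling the gaps between the attained values $m_J$ by the usual monotonicity argument gives
\[
c_m(\id) \lesssim_\theta (\log J)^{1/\varrho}\,2^{-rL} \simeq_\theta m^{-r}(\log\log_2 m)^{r+1/\varrho}.
\]
The matching lower bound $m^{-r}$ is inherited verbatim from Vyb\'iral's argument in \cite[Thm.\ 3.18, Step~2]{Vyb06}, invoking the sharp block lower bound in Theorem~\ref{thm:GelMain}, \eqref{eqn:GelMain1}, exactly as at the end of the proof of Proposition~\ref{res:appl_embedding_entropy}. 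The main obstacle is the delicate log-bookkeeping in the middle range: one must verify that $m_\mu = 2^L$ remains below the rank bound $2^\mu\mu^{d-1}$ so that \eqref{impr} is applicable and produces the desired constant $2^{-rL}$ rather than a trivial bound, and then track how the unavoidable $(L-J)$-factor from the uniform sum recombines with the budget equation $m\simeq 2^L\log J$ to give \emph{exactly} the exponent $r+1/\varrho$ in the final $\log\log$ term.
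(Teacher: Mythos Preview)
Your proposal is correct and matches the paper's approach: the paper's one-line proof uses $m_\mu = 2^J\mu^{d-1}$ in the range $J<\mu\le L$ inserted into the improved block bound \eqref{impr}, together with the budget count \eqref{est31}. Since $\mu\simeq J$ throughout this range, your constant choice $m_\mu = 2^L = 2^J J^{d-1}$ is equivalent up to constants depending on $d$, and the subsequent log-bookkeeping is identical; the lower bound indeed carries over from Step~3 of Proposition~\ref{res:appl_embedding_entropy0} (which uses \eqref{eqn:Foucartetal}, though \eqref{eqn:GelMain1} would serve equally well).
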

\begin{proof} We insert $m_\mu = 2^J\mu^{d-1}$ into \eqref{impr} and take \eqref{est31} into account.
\end{proof}

Let us state our main result in this subsection. 

\begin{theorem}\label{mainB} Under the same restrictions as in Propositions \ref{res:appl_embedding_entropy0}, 
\ref{res:appl_embedding_entropy} and \ref{res:appl_embedding_entropy2} the bounds for the Gelfand numbers literally transfer to the corresponding embedding 
$$
      \Id:S^r_{p_0,q_0}B(\Omega) \to S^0_{p_1,q_1}B(\Omega)\,.
$$
\end{theorem}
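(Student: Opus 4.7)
The plan is to transfer the sequence-space estimates from Propositions \ref{res:appl_embedding_entropy0}, \ref{res:appl_embedding_entropy} and \ref{res:appl_embedding_entropy2} to the Besov space setting by passing through the wavelet isomorphism together with a bounded extension/restriction pair on $\Omega$, and then invoking the ideal property (S3) of Gelfand numbers. First, I would fix a biorthogonal wavelet system $\{\psi_{\bar j,\bar k}\}$ with sufficient smoothness and enough vanishing moments so that the analysis and synthesis maps
$$A : S^{r}_{p,q}B(\R^d) \to s^{r}_{p,q}b,\qquad T : s^{r}_{p,q}b \to S^{r}_{p,q}B(\R^d)$$
are bounded isomorphisms for all parameter ranges under consideration (this is the wavelet isomorphism recalled in \eqref{wave_iso}--\eqref{sequ_space}, cf.\ \cite[Thm.\ 2.10]{Vyb06}). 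The smoothness and moment assumptions can be chosen uniformly in $(p_0,q_0,r_0,p_1,q_1,r_1,d)$ once these parameters are fixed.

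Next, I would invoke the bounded extension operator $E: S^{r_0}_{p_0,q_0}B(\Omega) \to S^{r_0}_{p_0,q_0}B(\R^d)$ as constructed in \cite[Section 4.5]{Vyb06}, together with the canonical restriction $R: S^0_{p_1,q_1}B(\R^d) \to S^0_{p_1,q_1}B(\Omega)$, which has operator norm at most one by definition of the domain space. On the sequence side, the coordinate projection $P_\Omega : s^{r_0}_{p_0,q_0}b \to s^{r_0}_{p_0,q_0}b(\Omega)$ (restricting to indices $\bar k \in A_{\bar j}^\Omega$) and the zero-extension $J_\Omega : s^{0}_{p_1,q_1}b(\Omega) \to s^{0}_{p_1,q_1}b$ are obviously bounded. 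The key observation is then the factorization
$$\Id_{\Omega} = R \circ T \circ J_\Omega \circ \widetilde{\id} \circ P_\Omega \circ A \circ E,$$
where $\widetilde{\id}: s^{r_0}_{p_0,q_0}b(\Omega) \to s^0_{p_1,q_1}b(\Omega)$ is the sequence-space identity, and the converse factorization of $\widetilde{\id}$ through $\Id_\Omega$ obtained by first synthesizing, passing through $\Id_\Omega$ and then analyzing. Applying (S3) on both sides gives
$$c_m(\widetilde{\id}) \simeq_\theta c_m(\Id:S^{r_0}_{p_0,q_0}B(\Omega) \to S^0_{p_1,q_1}B(\Omega)),$$
so that the upper and lower bounds from the three propositions transfer verbatim (up to constants depending on $\theta$).

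The main obstacle will be verifying that a bounded extension operator $E$ is available throughout the quasi-Banach regime we are considering, in particular when $0<q_0\le p_0\le 1$ in Proposition \ref{res:appl_embedding_entropy0}. This is essentially settled by the Rychkov-type construction used in \cite{Vyb06} for bounded Lipschitz (and in particular cubic) domains, but care is needed to ensure simultaneous boundedness of the analysis/synthesis pair in both the source and target parameter regimes; once this is granted, the remaining arguments are purely formal applications of (S3) and require no new estimates. Since the extension theorem in \cite{Vyb06} covers the full parameter range \eqref{defB} relevant here, no separate argument is necessary, and the theorem follows.
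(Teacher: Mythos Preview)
Your proposal is correct and follows essentially the same route as the paper: the paper's proof simply invokes the machinery in \cite[Thm.\ 4.11]{Vyb06}, noting that only the abstract $s$-number properties (S1)--(S5) are needed, and what you have written (wavelet isomorphism plus bounded extension/restriction operators, combined with the ideal property (S3)) is precisely a sketch of that machinery. The only point where you are slightly informal is the converse factorization of $\widetilde{\id}$ through $\Id_\Omega$ (restricting and re-extending need not reproduce the original function, so one does not get a literal identity); this is handled in \cite{Vyb06} by working with suitably adapted local wavelet bases, and since you already defer to that reference for the quasi-Banach extension theorem, no new ingredient is missing.
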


\begin{proof} The result is a direct consequence of Propositions \ref{res:appl_embedding_entropy0}, 
\ref{res:appl_embedding_entropy} and \ref{res:appl_embedding_entropy2} together with the machinery described in the proof of \cite[Thm.\ 4.11]{Vyb06}\,. 
All that is needed for this machinery are the properties (S1)-(S5), which are satisfied by both Gelfand and dyadic entropy numbers.
\end{proof}

Let us emphasize the following special cases. In connection with Besov space embeddings of mixed smoothness the target space $L_p$ plays a 
particular role.  

\begin{corollary}\label{L2} Let $0<q \leq 1$ and $0<r<1/q-1/2$. Then we have
$$
    c_m(\Id:S^r_{2,q}B(\Omega) \to L_2(\Omega)) \simeq_{\theta} m^{-r}\quad,\quad m\in \N\,.
$$
\end{corollary}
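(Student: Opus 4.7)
The plan is to obtain this as a direct specialization of Theorem \ref{mainB} (which in turn relies on Proposition \ref{res:appl_embedding_entropy}) by choosing the parameters $p_0=p_1=2$, $q_0=q$, $q_1=2$ and identifying $L_2(\Omega)$ with the Besov space $S^0_{2,2}B(\Omega)$.

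First I would recall the well-known identification $L_2(\Omega)=S^0_{2,2}B(\Omega)$ with equivalent (quasi-)norms. This is the standard fact that $S^0_{2,2}B(\R^d)$ coincides with $L_2(\R^d)$ via Plancherel applied to the Littlewood–Paley decomposition $\{\varphi_{\bar j}\}_{\bar j \in \N_0^d}$ appearing in \eqref{defB}; restricting to $\Omega$ (using the boundedness of the extension/restriction operators referenced after \eqref{sequ_space}) preserves the equivalence of norms. Consequently,
\[
 c_m\bigl(\Id:S^r_{2,q}B(\Omega)\to L_2(\Omega)\bigr) \simeq_{\theta} c_m\bigl(\Id:S^r_{2,q}B(\Omega)\to S^0_{2,2}B(\Omega)\bigr),
\]
by property (S3) applied to the identity $L_2(\Omega)\leftrightarrow S^0_{2,2}B(\Omega)$ in both directions.

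Next, I would check that the parameter tuple $(p_0,p_1,q_0,q_1,r_0,r_1)=(2,2,q,2,r,0)$ satisfies the hypotheses of Proposition \ref{res:appl_embedding_entropy}, namely $0<q_0\leq 1$, $q_0<q_1\leq 2$, and $0<r<1/q_0-1/q_1$. With $q_0=q\in(0,1]$ and $q_1=2$, the first two conditions are immediate, and the third reads exactly $0<r<1/q-1/2$, which is the hypothesis of the corollary. Invoking Theorem \ref{mainB} in this parameter regime then transfers the sharp sequence-space bound $c_m\simeq_{\theta} m^{-r}$ to the function-space embedding.

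There is no substantive obstacle here beyond the two routine ingredients above; the work has already been carried out in Proposition \ref{res:appl_embedding_entropy} (upper bound via the block decomposition \eqref{decomp2} exploiting \eqref{eqn:GelMain1} in the improved form \eqref{impr}, lower bound by the argument following \cite[Thm.\ 3.18]{Vyb06} combined with the lower bound in \eqref{eqn:Foucartetal}) and lifted to function spaces in Theorem \ref{mainB} via the machinery of \cite[Thm.\ 4.11]{Vyb06}. The only point worth stressing is that the identification $L_2(\Omega)=S^0_{2,2}B(\Omega)$ really does require $q_1=2$ (which is why we take $q_1=2$ in the application), and that the constant depends on $\theta=\{2,2,q,2,r,0,d,\Omega\}$, which simplifies to a dependence on $q$, $r$, $d$, and $\Omega$ as stated.
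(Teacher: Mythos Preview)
Your proof is correct and follows exactly the paper's approach: invoke Theorem \ref{mainB} with $p_0=p_1=2$, $q_0=q$, $q_1=2$ and use the identification $S^0_{2,2}B(\Omega)=L_2(\Omega)$. The paper's proof is just the one-line version of what you have written out in detail.
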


\begin{proof} The result is a direct consequence of Theorem \ref{mainB} together with the fact that $S^0_{2,2}B(\Omega) = L_2(\Omega)$\,.
\end{proof}

Let us also state the following endpoint result as a special case of Proposition~\ref{res:appl_embedding_entropy2} (take $q_0=q$, $q_1=2$).

\begin{corollary}\label{endpoint} Let $0<q\leq 1$ and $r=1/q-1/2$. 
Then we have for $m>2$
\begin{equation*}
m^{-r} \lesssim_{\theta} c_m(\Id:S^r_{2,q}B(\Omega) \to L_{2}(\Omega)) \lesssim_{\theta} m^{-r}(\log\log_2 m)^{r+1}\,.
\end{equation*}
\end{corollary}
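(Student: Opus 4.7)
The plan is to obtain the statement as a direct specialization of Proposition \ref{res:appl_embedding_entropy2} combined with Theorem \ref{mainB} and the identification $S^0_{2,2}B(\Omega)=L_2(\Omega)$. Concretely, I would set $q_0=q$ and $q_1=2$ in Proposition \ref{res:appl_embedding_entropy2}. The hypothesis $0<q_0\le 1$ and $q_0<q_1\le 2$ reduces to $0<q\le 1$, and the endpoint condition $r=1/q_0-1/q_1$ becomes $r=1/q-1/2$, exactly as required. The auxiliary parameter $\varrho=\min\{1,q_1\}=\min\{1,2\}=1$, so $r+1/\varrho=r+1$, and Proposition \ref{res:appl_embedding_entropy2} yields the two-sided bound
\[
  m^{-r}\;\lesssim_\theta\;c_m\!\left(\id:s^r_{2,q}b(\Omega)\to s^0_{2,2}b(\Omega)\right)\;\lesssim_\theta\; m^{-r}(\log\log_2 m)^{r+1}.
\]

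Next, I would invoke Theorem \ref{mainB}, which transfers such bounds for sequence-space identities, via the wavelet isomorphism \eqref{wave_iso} and the restriction/extension machinery of \cite[Sect.~4.5]{Vyb06}, to the corresponding Besov embedding
\[
  \Id:S^{r}_{2,q}B(\Omega)\to S^{0}_{2,2}B(\Omega).
\]
Theorem \ref{mainB} only uses the $s$-number properties (S1)--(S5), all of which hold for Gelfand numbers, so nothing new needs to be verified. Finally, I would apply the classical identification $S^0_{2,2}B(\Omega)=L_2(\Omega)$ (a consequence of the Littlewood--Paley characterization already used in Corollary \ref{L2}), which allows me to replace the target sequence/Besov space by $L_2(\Omega)$ and conclude the asserted two-sided estimate.

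There is essentially no genuine obstacle: the work has been done in Proposition \ref{res:appl_embedding_entropy2} (which in turn feeds on the sharp finite-dimensional block bound Theorem \ref{thm:GelMain}\,\eqref{eqn:GelMain1}), and the only point that deserves a one-line comment is that the endpoint restriction $r=1/q_0-1/q_1$ of Proposition \ref{res:appl_embedding_entropy2} is compatible with the case $q_1=2$, which indeed corresponds to the target $L_2(\Omega)$; this is the reason why the slightly weaker $(\log\log_2 m)^{r+1}$ loss survives here, whereas in the strict-inequality regime $r<1/q-1/2$ one obtains the sharp rate $m^{-r}$ of Corollary \ref{L2}.
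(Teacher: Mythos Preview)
Your proposal is correct and matches the paper's approach exactly: the paper states the corollary as a special case of Proposition~\ref{res:appl_embedding_entropy2} with $q_0=q$, $q_1=2$ (hence $\varrho=1$), transferred to function spaces via Theorem~\ref{mainB} and the identification $S^0_{2,2}B(\Omega)=L_2(\Omega)$.
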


Let us finally mention the following non-sharp bound. 
\begin{corollary}\label{nonsharp} Let $0<q_0<p_0\leq 1$, $p_0\leq p_1\leq 2$ and $1/p_0-1/p_1<r<1/q_0-1/p_1$. \\
Then we have for $m>2$
$$
 m^{-r} \lesssim_{\theta} c_m(\Id:S^r_{p_0,q_0}B(\Omega) \to L_{p_1}(\Omega)) 
 \lesssim_{\theta} m^{-r}(\log\log_2 m)^{1/q_0-1/p_1}\,.
$$
\end{corollary}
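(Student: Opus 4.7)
The plan is to deduce Corollary \ref{nonsharp} directly from Theorem \ref{mainB} by sandwiching the target space $L_{p_1}(\Omega)$ between two Besov spaces with dominating mixed smoothness of the type $S^0_{p_1,q_1}B(\Omega)$. More precisely, I will rely on the standard continuous embeddings
$$S^0_{p_1,p_1}B(\Omega)\hookrightarrow L_{p_1}(\Omega)\hookrightarrow S^0_{p_1,2}B(\Omega),\qquad 0<p_1\leq 2,$$
where for $p_1\leq 1$ the space $L_{p_1}$ is understood in its Hardy-space sense, as is customary in this framework. These two embeddings, combined with the multiplicativity property (S3) of Gelfand numbers, allow us to reduce matters to two applications of Theorem \ref{mainB} with different choices of $q_1$.

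For the upper bound, I would factor the identity $\Id:S^r_{p_0,q_0}B(\Omega)\to L_{p_1}(\Omega)$ through $S^0_{p_1,p_1}B(\Omega)$ via the left embedding. By (S3) this gives
$$c_m(\Id:S^r_{p_0,q_0}B(\Omega)\to L_{p_1}(\Omega))\lesssim_\theta c_m(\Id:S^r_{p_0,q_0}B(\Omega)\to S^0_{p_1,p_1}B(\Omega)).$$
Setting $q_1=p_1$ in Theorem \ref{mainB} (hence in Proposition \ref{res:appl_embedding_entropy0}), the hypotheses $0<q_0<p_0\leq 1$, $p_0\leq p_1\leq 2$ and $1/p_0-1/p_1<r<1/q_0-1/p_1$ of Corollary \ref{nonsharp} exactly match the non-endpoint assumptions $q_0\leq p_0\leq 1$, $p_0\leq p_1\leq q_1\leq 2$, $1/p_0-1/p_1<r<1/q_0-1/q_1$ required there. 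The first line of \eqref{res1} then delivers the bound $m^{-r}(\log\log_2 m)^{1/q_0-1/p_1}$.

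For the lower bound, I would use the right embedding $L_{p_1}(\Omega)\hookrightarrow S^0_{p_1,2}B(\Omega)$ together with (S3) to get
$$c_m(\Id:S^r_{p_0,q_0}B(\Omega)\to S^0_{p_1,2}B(\Omega))\lesssim_\theta c_m(\Id:S^r_{p_0,q_0}B(\Omega)\to L_{p_1}(\Omega)).$$
Now applying Theorem \ref{mainB} with $q_1=2$ is legitimate: the inequalities $q_0<p_0\leq 1$ and $p_0\leq p_1\leq 2=q_1$ are immediate, while the assumption $r<1/q_0-1/p_1$, combined with $p_1\leq 2$, forces $r<1/q_0-1/2=1/q_0-1/q_1$, placing us in the non-endpoint regime. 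The lower bound $m^{-r}$ from \eqref{res1} then transfers to $c_m(\Id:S^r_{p_0,q_0}B(\Omega)\to L_{p_1}(\Omega))$, completing the proof.

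The main point to be careful about, more than a genuine obstacle, is the correct interpretation and justification of the chain $S^0_{p_1,p_1}B(\Omega)\hookrightarrow L_{p_1}(\Omega)\hookrightarrow S^0_{p_1,2}B(\Omega)$ in the quasi-Banach range $p_1\leq 1$. This rests on the Littlewood--Paley/wavelet characterization of Hardy-type mixed-norm spaces on domains and is a standard ingredient in the theory of Besov spaces with dominating mixed smoothness, so it will be invoked from the literature (cf.\ \cite{Vyb06,ScTr87}) rather than reproved. Once it is in place, the remainder of the argument is a mechanical application of (S3) and Theorem \ref{mainB}.
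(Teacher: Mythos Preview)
Your upper bound is exactly the paper's proof: factor through $S^0_{p_1,p_1}B(\Omega)$ via the trivial embedding $S^0_{p_1,p_1}B(\Omega)\hookrightarrow L_{p_1}(\Omega)$ and apply Theorem~\ref{mainB} with $q_1=p_1$.

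For the lower bound the paper is extremely terse---it cites only Theorem~\ref{mainB} (with $q_1=p_1$) and the single embedding $S^0_{p_1,p_1}B\hookrightarrow L_{p_1}$, which on its own yields only the upper direction. Your route via the reverse embedding $L_{p_1}(\Omega)\hookrightarrow S^0_{p_1,2}B(\Omega)$ and a second application of Theorem~\ref{mainB} with $q_1=2$ is a natural idea and works cleanly for $1<p_1\leq 2$. However, for $p_1\leq 1$ it does not go through: the embedding $L_{p_1}\hookrightarrow S^0_{p_1,2}B$ fails for the genuine Lebesgue space (already $L_1\not\hookrightarrow S^0_{1,2}B$, since $S^0_{1,2}F$ coincides with the local Hardy space $h_1$, not with $L_1$; the situation is no better for $p_1<1$). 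Replacing $L_{p_1}$ by a Hardy space is not a harmless reinterpretation---it changes the statement of the corollary, which explicitly allows $p_0\leq p_1\leq 1$.

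A fix that covers all $0<p_1\leq 2$ uniformly is to avoid embedding $L_{p_1}$ into a Besov space altogether and instead observe that the lower-bound argument behind Theorem~\ref{mainB} (Vyb\'iral's single-block argument, Step~3 of Proposition~\ref{res:appl_embedding_entropy0}) transfers directly to the target $L_{p_1}$: on a fixed level $\|\bar j\|_1=\mu$ the $L_{p_1}$-norm of a finite wavelet sum $\sum_{\bar k}\lambda_{\bar j,\bar k}\psi_{\bar j,\bar k}$ is equivalent to $2^{-\mu/p_1}\|(\lambda_{\bar j,\bar k})_{\bar k}\|_{p_1}$, so the reduction to the finite-dimensional lower bound in \eqref{eqn:Foucartetal} goes through verbatim and yields $c_m\gtrsim_\theta m^{-r}$. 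This is almost certainly what the paper's one-line proof is implicitly relying on.
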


\begin{proof} 
This follows directly from Theorem \ref{mainB} (with $q_1=p_1$) and the trivial embedding 
$S^0_{p_1,p_1}B(\Omega) \hookrightarrow L_{p_1}(\Omega)$ in case $0<p_1\leq 2$\,.
\end{proof}

\begin{remark}\label{large_s} {\em (i)} In the case of entropy numbers Vyb\'iral proved in \cite[Thm.\ 4.9]{Vyb06} that for any $\varepsilon>0$ there is a constant $C_{\varepsilon}>0$ such that 
\begin{align}\label{vyb}
 c\,m^{-(r_0-r_1)} \leq e_m(\id: s_{p_0,q_0}^{r_0}b(\Omega) \to s_{p_1,q_1}^{r_1}b(\Omega)) \leq C_\varepsilon m^{-(r_0-r_1)} (\log m)^\varepsilon\quad, \quad m \geq 2,
\end{align}
for $0<p_0\leq p_1 \leq \infty$, $0<q_0\leq q_1 \leq \infty$ in the case of small smoothness $1/p_0-1/p_1<r_0-r_1 \leq 1/q_0-1/q_1$. The latter result is a direct consequence of a sharp bound in the regime of ``large smoothness'' 
\begin{equation}\label{ls}
   r_0-r_1>\max\{1/p_0-1/p_1,1/q_0-1/q_1\} 
\end{equation}
which is
\begin{equation}\label{vyblarge}
    e_m(\id:s_{p_0,q_0}^{r_0}b(\Omega) \to s_{p_1,q_1}^{r_1}b(\Omega)) \simeq_{\theta} m^{-(r_0-r_1)}(\log m)^{(d-1)(r_0-r_1-1/q_0+1/q_1)}\,.
\end{equation}
In fact, the entropy numbers in \eqref{vyb} can be bounded from above by $e_m(\id: s_{p_0,q^{\ast}}^{r_0}b(\Omega) \to s_{p_1,q_1}^{r_1}b(\Omega))$ if $q^{\ast}\geq q_0$. Now choose 
$q_1>q^{\ast}>q_0$ such that $1/q^{\ast}-1/q_1+\varepsilon/(d-1)=r_0-r_1>1/q^{\ast}-1/q_1$ which, together with \eqref{vyblarge} and $q_0$ replaced by $q^{\ast}$, implies \eqref{vyb}. 
The gap in \eqref{vyb} will be closed in a forthcoming paper \cite{MaUl17} also in the endpoint case $r=1/q_0-1/q_1$. The correct order is $m^{-(r_0-r_1)}$\,.

{\em (ii)} Using almost literally the arguments in \cite[Thms.\ 3.18, 3.19]{Vyb06} together with Theorem \ref{thm:GelMain}, 
\eqref{eqn:Foucartetal} we can prove a 
direct counterpart of \eqref{vyblarge} for Gelfand 
numbers in the situation $\max\{p_0,q_0\} \leq 1, \max\{p_1,q_1\} \leq 2$, $p_0 \leq p_1$ and \eqref{ls}, which is 
\begin{equation}\label{gelflarge}
    c_m(\id:s_{p_0,q_0}^{r_0}b(\Omega) \to s_{p_1,q_1}^{r_1}b(\Omega)) \simeq_{\theta} m^{-(r_0-r_1)}(\log m)^{(d-1)(r_0-r_1-1/q_0+1/q_1)}\quad,\quad m\geq 2\,.
\end{equation}
However, the argument described after \eqref{vyblarge} does not apply for Gelfand numbers. Consider the situation of determining the Gelfand numbers 
of the embedding $c_m(\id: s^r_{2,1}b(\Omega) \to s^0_{2,2}b(\Omega))$ for $r<1/2$. 
We choose $q^\ast>1$ such that $1/q^\ast - 1/2 = r$. The problem reduces to the Gelfand numbers of the embedding $c_m(\id: s^r_{2,q^\ast}b(\Omega) \to s^0_{2,2}b(\Omega))$. 
Note that this situation is not covered by \eqref{gelflarge}. Nevertheless, we may reduce further to the finite-dimensional embedding
$\id:\ell^b_{q^\ast}(\ell_2^d) \to \ell_2^b(\ell_2^d)$. The critical point here is that in case $\min\{p_0,q_0\}>1$ the entropy and Gelfand numbers of embeddings
$\id:\ell_{q_0}^b(\ell_{p_0}^d) \to \ell_{q_1}^b(\ell_{p_1}^d)$ may differ in order (already if $1<p_0=q_0<2$ and $p_1=q_1=2$, see the second case in \cite[Lem.\ 4.7]{Vyb08}. 
Applying the technique in \cite[Thm.\ 3.19]{Vyb06} (as recently done in \cite[Thm.\ 2.2]{Kien16}) produces the additional restriction $r>1/2$ which does not help in our case where $r<1/2$. 
Therefore, a counterpart of \eqref{vyb} based on a trivial embedding argument does not seem to be available for Gelfand numbers. Hence, the results in 
Propositions \ref{res:appl_embedding_entropy0}, \ref{res:appl_embedding_entropy}, \ref{res:appl_embedding_entropy2} seem to be completely new.
\end{remark}

\begin{remark}\label{SickelNguyen} Note that in Corollary \ref{nonsharp} the case $p_0=q_0$ is
excluded. In \cite{NgSi15} Nguyen and Sickel studied 
Weyl numbers of embeddings of tensor product Besov spaces ($p_0 = q_0$) with small mixed smoothness into $L_{p_1}$. Replacing $x_n$ by $c_n$ in the 
proof of \cite[Lem.\ 6.19]{NgSi15} we obtain for $0< p_0 \leq 1 < p_1 < 2$ and $1/p_0-1/p_1 < r < 1/p_0 - 1/2$ the estimate
$$
 m^{-r} \lesssim_{\theta} c_m(\Id:S^r_{p_0,p_0}B(\Omega) \to L_{p_1}(\Omega)) 
 \lesssim_{\theta} m^{-r}(\log\log_2 m)^{1/p_0-1/2}\,.
$$
\end{remark}

\begin{remark} The special case $\Id:S^r_{2,1}B(\Omega) \to L_2(\Omega)$ in Corollary \ref{L2} is of particular interest. ``Dualizing'' its proof in Proposition \ref{res:appl_embedding_entropy} 
we obtain the following sharp result for the Kolmogorov numbers, namely
$$
    d_m(\Id:S^r_{2,2}B(\Omega) \to S^0_{2,\infty}B(\Omega)) \simeq_{\theta} m^{-r}\quad,\quad m\in \N\,,
$$
provided that $0<r<1/2$.
\end{remark}
\begin{remark}
\label{rem:diffInner}
We doubt that $\log\log m$-terms in Propositions \ref{res:appl_embedding_entropy0}, 
\ref{res:appl_embedding_entropy2} and Corollaries \ref{endpoint}, \ref{nonsharp} are necessary. 
Except in Proposition \ref{res:appl_embedding_entropy}, where we actually use the sharp
mixed-norm estimate \eqref{eqn:GelMain1}, we reduce the problem to the classical finite-dimensional non-mixed situation \eqref{eqn:Foucartetal}
via standard embeddings. By doing so, we produce an additional $\log$ which becomes a $\log\log$ at the end. 
To get an improvement, one needs to refine these estimates by working directly with the mixed norms. 
An associated mixed-norm result in the spirit of \eqref{eqn:GelMain1} where the inner $\ell_2^d$-spaces 
are replaced by $\ell^d_{p_0}$ and $\ell^d_{p_1}$ would solve this issue. But this seems to be rather involved for Gelfand numbers and we leave it as an open problem. 
However, in the case of entropy numbers this is possible as we will show in a forthcoming paper, \cite{MaUl17}.
\end{remark}

\section*{Acknowledgement}
The authors wish to thank S.\ Mayer for several fruitful discussions on this paper. 
They also acknowledge the fruitful comments by G.\ Byrenheid, V.K.\ Nguyen, E.\ Novak and H.\ Rauhut on the topic 
of this paper. The authors thank the reviewers for their detailed comments, which improved the presentation of the 
paper. T.U.\ gratefully acknowledges support by the German Research Foundation (DFG) Ul-403/2-1 and the Emmy-Noether programme, Ul-403/1-1. 

\bibliographystyle{abbrv}

\begin{thebibliography}{10}

\bibitem{AHP14}
B.~Adcock, A.~C. Hansen, C.~Poon, and B.~Roman.
\newblock Breaking the coherence barrier: asymptotic incoherence and asymptotic
  sparsity in compressed sensing.
\newblock {\em CoRR}, abs/1302.0561, 2013.

\bibitem{Ao42}
T.~Aoki.
\newblock Locally bounded linear topological spaces.
\newblock {\em Proc. Imp. Acad. Tokyo}, 18:588--594, 1942.

\bibitem{ADR14}
U.~Ayaz, S.~Dirksen, and H.~Rauhut.
\newblock Uniform recovery of fusion frame structured sparse signals.
\newblock {\em Appl. Comput. Harmon. Anal.}, 41(2):341--361, 2016.

\bibitem{BJMO12}
F.~Bach, R.~Jenatton, J.~Mairal, and G.~Obozinski.
\newblock Optimization with sparsity-inducing penalties.
\newblock {\em Foundations and Trends{\textregistered} in Machine Learning},
  4(1):1--106, 2012.

\bibitem{BaH14}
A.~Bastounis and A.~C. Hansen.
\newblock On the absence of the {RIP} in real-world applications of compressed
  sensing and the {RIP} in levels.
\newblock {\em CoRR}, abs/1411.4449, 2014.

\bibitem{BLM13}
S.~Boucheron, G.~Lugosi, and P.~Massart.
\newblock {\em Concentration Inequalities: A Nonasymptotic Theory of
  Independence}.
\newblock Oxford University Press, 2013.

\bibitem{BBW17}
C.~Boyer, J.~Bigot, and P.~Weiss.
\newblock Compressed sensing with structured sparsity and structured
  acquisition.
\newblock {\em Applied and Computational Harmonic Analysis, to appear}, 2017.

\bibitem{BuGe11}
P.~B{\"u}hlmann and S.~Van De~Geer.
\newblock {\em Statistics for high-dimensional data: methods, theory and
  applications}.
\newblock Springer Science \& Business Media, 2011.

\bibitem{ByUl16}
G.~Byrenheid and T.~Ullrich.
\newblock Optimal sampling recovery of mixed order {S}obolev embeddings via
  discrete {L}ittlewood-{P}aley type characterizations.
\newblock {\em Anal. Math.}, 43(2):133--191, 2017.

\bibitem{CRT06}
E.~J. Cand\'es, J.~K. Romberg, and T.~Tao.
\newblock Stable signal recovery from incomplete and inaccurate measurements.
\newblock {\em Communications on Pure and Applied Mathematics},
  59(8):1207--1223, 2006.

\bibitem{CaT06}
E.~J. Cand\'es and T.~Tao.
\newblock Near-optimal signal recovery from random projections: Universal
  encoding strategies?
\newblock {\em IEEE Transactions on Information Theory}, 52(12):5406--5425, Dec
  2006.

\bibitem{CoDaDeKePi2012}
A.~Cohen, I.~Daubechies, R.~DeVore, G.~Kerkyacharian, and D.~Picard.
\newblock Capturing ridge functions in high dimensions from point queries.
\newblock {\em Constr. Approx.}, 35(2):225--243, 2012.

\bibitem{DTU16}
D.~D\~ung, V.~Temlyakov, and T.~Ullrich.
\newblock Hyperbolic cross approximation.
\newblock Advanced Courses in Mathematics. CRM Barcelona.
  Birkh{\"a}user/Springer, to appear. ArXiv:1601.03978v2.

\bibitem{do06_2}
D.~Donoho.
\newblock {C}ompressed sensing.
\newblock {\em {I}{E}{E}{E} {T}rans. {I}nform. {T}heory}, 52(4):1289--1306,
  2006.

\bibitem{do06}
D.~L. {D}onoho.
\newblock {F}or most large underdetermined systems of linear equations the
  minimal $l^1$ solution is also the sparsest solution.
\newblock {\em {C}ommun. {P}ure {A}ppl. {M}ath.}, 59(6):797--829, 2006.

\bibitem{EdLa13}
D.~E. Edmunds and J.~Lang.
\newblock Gelfand numbers and widths.
\newblock {\em J. Approx. Theory}, 166:78--84, 2013.

\bibitem{Eldar08}
Y.~Eldar and H.~B{\"o}lcskei.
\newblock Block-sparsity: Coherence and efficient recovery.
\newblock In {\em ICASSP}, pages 2885--2888. IEEE, 2009.

\bibitem{EKB10}
Y.~Eldar, P.~Kuppinger, and H.~B{\"o}lcskei.
\newblock Block-sparse signals: uncertainty relations and efficient recovery.
\newblock {\em IEEE Trans. Signal Process.}, 58(6):3042--3054, 2010.

\bibitem{ElM09}
Y.~C. Eldar and M.~Mishali.
\newblock Robust recovery of signals from a structured union of subspaces.
\newblock {\em IEEE Trans. Inform. Theory}, 55(11):5302--5316, 2009.

\bibitem{fora08}
M.~{F}ornasier and H.~{R}auhut.
\newblock {R}ecovery algorithms for vector valued data with joint sparsity
  constraints.
\newblock {\em {S}{I}{A}{M} {J}. {N}umer. {A}nal.}, 46(2):577--613, 2008.

\bibitem{FoScVy2012}
M.~Fornasier, K.~Schnass, and J.~Vybiral.
\newblock Learning functions of few arbitrary linear parameters in high
  dimensions.
\newblock {\em Found. Comput. Math.}, 12(2):229--262, 2012.

\bibitem{FPR10}
S.~Foucart, A.~Pajor, H.~Rauhut, and T.~Ullrich.
\newblock The {G}elfand widths of {$\ell\sb p$}-balls for {$0<p\leq 1$}.
\newblock {\em J. Complexity}, 26(6):629--640, 2010.

\bibitem{FoR13}
S.~Foucart and H.~Rauhut.
\newblock {\em A Mathematical Introduction to Compressive Sensing}.
\newblock Birkh{\"{a}}user, Boston, 2013.

\bibitem{Gal90}
{\`E}.~M. Galeev.
\newblock Kolmogorov diameters of classes of periodic functions of one and
  several variables.
\newblock {\em Izv. Akad. Nauk SSSR Ser. Mat.}, 54(2):418--430, 1990.

\bibitem{Gal96}
{\`E}.~M. Galeev.
\newblock Linear widths of {H}\"older-{N}ikol\cprime ski\u\i\ classes of
  periodic functions of several variables.
\newblock {\em Mat. Zametki}, 59(2):189--199, 317, 1996.

\bibitem{GaG84}
A.~Y. Garnaev and E.~D. Gluskin.
\newblock The widths of a {E}uclidean ball.
\newblock {\em Dokl. Akad. Nauk SSSR}, 277(5):1048--1052, 1984.

\bibitem{Gilbert52}
E.~{G}ilbert.
\newblock A comparison of signalling alphabets.
\newblock {\em Bell System Tech. Jour.}, 31:504--522, 1952.

\bibitem{Glu82}
E.~D. Gluskin.
\newblock On some finite-dimensional problems in the theory of widths.
\newblock {\em Vestnik Leningrad Univ. Math}, 14(163-170):163--170, 1982.

\bibitem{Gor88}
Y.~Gordon.
\newblock On {M}ilman's inequality and random subspaces which escape through a
  mesh in {${\bf R}\sp n$}.
\newblock In {\em Geometric aspects of functional analysis (1986/87)}, volume
  1317 of {\em Lecture Notes in Math.}, pages 84--106. Springer, Berlin, 1988.

\bibitem{grrascva08}
R.~{G}ribonval, H.~{R}auhut, K.~{S}chnass, and P.~{V}andergheynst.
\newblock {A}toms of all channels, unite! {A}verage case analysis of
  multi-channel sparse recovery using greedy algorithms.
\newblock {\em {J}. {F}ourier {A}nal. {A}ppl.}, 14(5):655--687, 2008.

\bibitem{HaSi12}
M.~Hansen and W.~Sickel.
\newblock Best {$m$}-term approximation and {S}obolev-{B}esov spaces of
  dominating mixed smoothness---the case of compact embeddings.
\newblock {\em Constr. Approx.}, 36(1):1--51, 2012.

\bibitem{HiKoVy16}
A.~Hinrichs, A.~Kolleck, and J.~Vyb{\'i}ral.
\newblock Carl's inequality for quasi-{B}anach spaces.
\newblock {\em J. Funct. Anal.}, 271(8):2293--2307, 2016.

\bibitem{HuZ10}
J.~Huang and T.~Zhang.
\newblock The benefit of group sparsity.
\newblock {\em The Annals of Statistics}, 38(4):1978--2004, 2010.

\bibitem{Iza94}
A.~D. Izaak.
\newblock Kolmogorov widths in finite-dimensional spaces with mixed norm.
\newblock {\em Mat. Zametki}, 55(1):43--52, 156, 1994.

\bibitem{Iza96}
A.~D. Izaak.
\newblock Widths of {H}\"older-{N}ikol\cprime ski\u\i\ classes and
  finite-dimensional sets in spaces with a mixed norm.
\newblock {\em Mat. Zametki}, 59(3):459--461, 1996.

\bibitem{Kas77}
B.~S. Kashin.
\newblock Diameters of some finite-dimensional sets and classes of smooth
  functions.
\newblock {\em Izvestiya Rossiiskoi Akademii Nauk. Seriya Matematicheskaya},
  41(2):334--351, 1977.

\bibitem{Ka81}
B.~S. Kashin.
\newblock Widths of {S}obolev classes of small-order smoothness.
\newblock {\em Vestnik Moskov. Univ. Ser. I Mat. Mekh.}, (5):50--54, 1981.

\bibitem{Kuehn2001}
T.~K{\"u}hn.
\newblock A lower estimate for entropy numbers.
\newblock {\em J. Approx. Theory}, 110(1):120--124, 2001.

\bibitem{KueMaUl2016}
T.~{K}\"uhn, S.~{M}ayer, and T.~{U}llrich.
\newblock {C}ounting via entropy: new preasymptotics for the approximation
  numbers of {S}obolev embeddings.
\newblock {\em SIAM Journ. Num. Analysis}, 54(6):3625--3647, 2016.

\bibitem{LiA16}
C.~Li and B.~Adcock.
\newblock Compressed sensing with local structure: uniform recovery guarantees
  for the sparsity in levels class.
\newblock {\em CoRR}, abs/1601.01988, 2016.

\bibitem{LPGT11}
K.~Lounici, M.~Pontil, S.~Van De~Geer, and A.~B. Tsybakov.
\newblock Oracle inequalities and optimal inference under group sparsity.
\newblock {\em The Annals of Statistics}, 39(4):2164--2204, 2011.

\bibitem{MaRu16}
Y.~V. Malykhin and K.~S. Ryutin.
\newblock The product of octahedra is badly approximated in the
  {$\ell_{2,1}$}-metric.
\newblock {\em Mat. Zametki}, 101(1):85--90, 2017.

\bibitem{MaUl17}
S.~Mayer and T.~Ullrich.
\newblock On a theorem of {E}dmunds and {N}etrusov and implications for entropy
  numbers of {B}esov space embeddings in regimes of small mixed smoothness.
\newblock {\em Preprint Bonn}, 2017.

\bibitem{MaUlVy2015}
S.~Mayer, T.~Ullrich, and J.~Vyb{\'{\i}}ral.
\newblock Entropy and sampling numbers of classes of ridge functions.
\newblock {\em Constr. Approx.}, 42(2):231--264, 2015.

\bibitem{NRWY12}
S.~N. Negahban, P.~Ravikumar, M.~J. Wainwright, and B.~Yu.
\newblock A unified framework for high-dimensional analysis of {$M$}-estimators
  with decomposable regularizers.
\newblock {\em Statist. Sci.}, 27(4):538--557, 11 2012.

\bibitem{Kien16}
V.~Nguyen.
\newblock Gelfand numbers of embeddings of mixed {B}esov spaces.
\newblock {\em J. Complexity}, 41:35--57, 2017.

\bibitem{NgSi15}
V.~Nguyen and W.~Sickel.
\newblock Weyl numbers of embeddings of tensor product {B}esov spaces.
\newblock {\em J. Approx. Theory}, 24(200):170--220, 2015.

\bibitem{Pie80}
A.~Pietsch.
\newblock {\em Operator ideals}, volume~20 of {\em North-Holland Mathematical
  Library}.
\newblock North-Holland Publishing Co., Amsterdam-New York, 1980.
\newblock Translated from German by the author.

\bibitem{Pie87}
A.~Pietsch.
\newblock {\em Eigenvalues and {$s$}-numbers}, volume~13 of {\em Cambridge
  Studies in Advanced Mathematics}.
\newblock Cambridge University Press, Cambridge, 1987.

\bibitem{Pin85}
A.~Pinkus.
\newblock {\em {$N$}-widths in approximation theory}, volume~7 of {\em
  Ergebnisse der Mathematik und ihrer Grenzgebiete (3) [Results in Mathematics
  and Related Areas (3)]}.
\newblock Springer-Verlag, Berlin, 1985.

\bibitem{PlV13a}
Y.~Plan and R.~Vershynin.
\newblock One-bit compressed sensing by linear programming.
\newblock {\em Comm. Pure Appl. Math.}, 66(8):1275--1297, 2013.

\bibitem{PlV13b}
Y.~Plan and R.~Vershynin.
\newblock Robust 1-bit compressed sensing and sparse logistic regression: a
  convex programming approach.
\newblock {\em IEEE Trans. Inform. Theory}, 59(1):482--494, 2013.

\bibitem{Ro57}
S.~Rolewicz.
\newblock On a certain class of linear metric spaces.
\newblock {\em Bull. Acad. Polon. Sci. Cl. III.}, 5:471--473, XL, 1957.

\bibitem{Ru91}
W.~Rudin.
\newblock {\em Functional analysis, second ed.}
\newblock McGraw-Hill, New York, St.Louis, San Francisco, 1991.

\bibitem{ScTr87}
H.-J. Schmeisser and H.~Triebel.
\newblock {\em Topics in {F}ourier analysis and function spaces}.
\newblock A Wiley-Interscience Publication. John Wiley \& Sons, Ltd.,
  Chichester, 1987.

\bibitem{St55}
S.~B. Stechkin.
\newblock On absolute convergence of orthogonal series.
\newblock {\em Dokl. AN SSSR}, 102:37--40, 1955.

\bibitem{Te17}
V.~Temlyakov.
\newblock Constructive sparse trigonometric approximation for functions with
  small mixed smoothness.
\newblock {\em Constr. Approx.}, 45(3):467--495, 2017.

\bibitem{Te86}
V.~N. Temlyakov.
\newblock Approximation of functions with bounded mixed derivative.
\newblock {\em Trudy Mat. Inst. Steklov.}, 178:1--112, 1986.

\bibitem{TrWaWo88}
J.~F. Traub, G.~W. Wasilkowski, and H.~Wo\'zniakowski.
\newblock {\em Information-based complexity}.
\newblock Computer Science and Scientific Computing. Academic Press, Inc.,
  Boston, MA, 1988.
\newblock With contributions by A. G. Werschulz and T. Boult.

\bibitem{UlUl16}
M.~Ullrich and T.~Ullrich.
\newblock The role of {F}rolov's cubature formula for functions with bounded
  mixed derivative.
\newblock {\em SIAM J. Numer. Anal.}, 54(2):969--993, 2016.

\bibitem{Varshamov57}
R.~{V}arshamov.
\newblock Estimate of the number of signals in error correcting codes.
\newblock {\em Dokl. Acad. Nauk SSSR}, 117:739--741, 1952.

\bibitem{Vas13}
A.~Vasil'eva.
\newblock Kolmogorov and linear widths of the weighted {B}esov classes with
  singularity at the origin.
\newblock {\em Journ. Approx. Theory}, 167:1--41, 2013.

\bibitem{Vyb06}
J.~Vyb{\'i}ral.
\newblock Function spaces with dominating mixed smoothness.
\newblock {\em Dissertationes Math. (Rozprawy Mat.)}, 436:73, 2006.

\bibitem{Vyb08}
J.~Vyb{\'{\i}}ral.
\newblock Widths of embeddings in function spaces.
\newblock {\em J. Complexity}, 24(4):545--570, 2008.

\bibitem{Wai14}
M.~J. Wainwright.
\newblock Structured regularizers for high-dimensional problems: Statistical
  and computational issues.
\newblock {\em Annual Review of Statistics and Its Application}, 1:233--253,
  2014.

\end{thebibliography}

\def\cprime{$'$} \def\cprime{$'$}

\end{document}